\documentclass[11pt,review=true]{acmart}
\usepackage{acm-ec-21}

\usepackage{algpseudocode}
\usepackage{algorithm}
\algtext*{EndWhile}
\algtext*{EndIf}
\algtext*{EndFor}

\usepackage{tikz}
\usetikzlibrary{arrows.meta}

\usepackage{amsmath, amsfonts, amsthm, bbold}
\usepackage{thmtools} 
\usepackage{thm-restate}

%%Added by Hugo
\usepackage{hyperref}

\theoremstyle{plain}

\newtheorem{theorem}{Theorem}%[subsection]
\newtheorem{lemma}[theorem]{Lemma}

\theoremstyle{definition}
\newtheorem{definition}[theorem]{Definition}

\usepackage[shortlabels]{enumitem}
\newlist{algoindent}{itemize}{9}
\setlist[algoindent]{label=,leftmargin=.5cm}

\usepackage{mathdots}

% Macros
\renewcommand{\P}{\mathbb P}

\newcommand{\E}{\mathbb E}
\newcommand{\D}{\mathcal D}
\newcommand{\M}{\mathcal M}
\newcommand{\W}{\mathcal W}

\title{Two-sided matching markets with correlated random preferences}

%\author{
%  Hugo Gimbert\\
%  CNRS, Bordeaux, France\\
%  \texttt{hugo.gimbert@cnrs.fr}
%  \and
%  Claire Mathieu\\
%  CNRS, Paris, France\\
%  \texttt{clairemmathieu@gmail.com}
%  \and
%  Simon Mauras\\
%  IRIF, Paris, France\\
%  \texttt{simon.mauras@irif.fr}
%}
% Anonymized submission.
%\author{Submission 280}

\author{Hugo Gimbert}
\email{hugo.gimbert@cnrs.fr}
\affiliation{
   \institution{Université de Bordeaux, LaBRI, CNRS, F-33400 Talence}
   \country{France}
}

\author{Claire Mathieu}
\email{clairemmathieu@gmail.com}
\affiliation{
   \institution{Université de Paris, IRIF, CNRS, F-75013 Paris}
   \country{France}
}

\author{Simon Mauras}
\email{simon.mauras@irif.fr}
\affiliation{
   \institution{Université de Paris, IRIF, CNRS, F-75013 Paris}
   \country{France}
}

\begin{abstract}
Stable matching in a community consisting of men and women is a classical combinatorial problem that has been the subject of intense theoretical and empirical study since its introduction in 1962 in a seminal paper by Gale and Shapley, who designed the celebrated ``deferred acceptance'' algorithm for the problem.
\smallbreak
In the input, each participant ranks participants of the opposite type, so the input consists of a collection of permutations, representing the preference lists. A bipartite matching is unstable if some man-woman pair is blocking: both strictly prefer each other to their partner in the matching. Stability is an important economics concept in matching markets from the viewpoint of manipulability. The unicity of a stable matching implies non-manipulability, and near-unicity implies limited manipulability, thus these are mathematical properties  related to the quality of stable matching algorithms. 
\smallbreak
This paper is a theoretical study of the effect of correlations on approximate manipulability of stable matching algorithms. Our approach is to go beyond worst case, assuming that some of the input preference lists are drawn from a distribution. Our model encompasses a discrete probabilistic process inspired by a popularity model introduced by Immorlica and Mahdian, that provides a way to capture correlation between preference lists.  Approximate manipulability is approached from several angles : when all stable partners of a person have approximately the same rank; or when most persons have a unique stable partner. Another quantity of interest is a person's number of stable partners.  Our results aim to study stable matchings in a ``beyond worst case''  setting.
\end{abstract}

\begin{document}

% Title page for title and abstract only.
\begin{titlepage}
\maketitle
\end{titlepage}

\section{Introduction}
\label{section:intro}
In the classical stable matching problem, a certain community consists of men and women (all heterosexual and monogamous) where each person ranks those of the opposite sex in accordance with his or her preferences for a marriage partner (possibly declaring some matches as unacceptable).
Our objective is to marry off the members of the community in such a way that the established matching is \emph{stable}, \textit{i.e.} such that there is no \emph{blocking pair}. A man and a woman who are not married to each other form a blocking pair if they prefer each other to their mates. 

In their seminal paper, Gale and Shapley \cite{gale1962college} designed the \emph{men-proposing deferred acceptance} procedure, where men propose while women disposes. This algorithm always outputs a matching which is stable, optimal for men and pessimal for women (in terms of rank of each person's partner). By symmetry, there also exists a women-optimal/men-pessimal stable matching.
Gale and Shapley's original motivation was the assignment of students
to colleges, a setting to which the algorithm and results extend, and their approach was successfully implemented in many
matching markets; see for example
\cite{abdulkadirouglu2005new,abdulkadirouglu2005boston,roth1999redesign,correa2019school}.

However, there exists instances where the men-optimal and women-optimal stable matchings are different, and even extreme cases of instances in which every man/woman pair belongs to some stable matching. This raises the question of 
which matching to choose \cite{gusfield1987three,gusfield1989stable} and of 
possible strategic behavior \cite{dubins1981machiavelli, roth1982economics,demange1987further}.
More precisely, if a woman lies about her preference list, this gives rise to new stable matchings, where she will be no
better off than she would be in the true women-optimal matching.
Thus, a woman can only gain from strategic manipulation up to the maximum difference between her best and worst partners in stable matchings.
By symmetry, this also implies that the men proposing deferred acceptance
procedure is strategy-proof for men (as they will get their best possible
partner by telling the truth).

Fortunately, there is empirical evidence that in many instances, in practice the stable matching is essentially unique (a phenomenon often referred to as {``core-convergence''}); see for example 
\cite{roth1999redesign,pathak2008leveling,hitsch2010matching,banerjee2013marry}.
One of the empirical explanations for core-convergence given by Roth and Peranson in
\cite{roth1999redesign} is that the preference lists are correlated:
\emph{``One factor that strongly influences the size of
the set of stable matchings is the correlation of preferences among programs and
among applicants. When preferences are highly
correlated (i.e., when similar programs tend to
agree which are the most desirable applicants,
and applicants tend to agree which are the most
desirable programs), the set of stable matchings
is small.''}

Following that direction of enquiry, we study the core-convergence phenomenon, in a model where preferences are stochastic. When preferences of women are strongly correlated, Theorem~\ref{thm:geom} shows that the expected difference of rank between each woman's worst and best stable partner is a constant, hence the incentives to manipulate are limited. If additionally the preferences of men are uncorrelated, Theorem~\ref{thm:geomunif} shows that most women have a unique stable partner, and therefore have no incentives to manipulate. Finally, we study the number of stable partners:
%, a weaker notion of core convergence, and show that uncorrelated preferences is a worst case situation
when preference lists are drawn from popularity distributions~\cite{immorlica2015incentives,kojima2009incentives,ashlagi2020tiered}, Theorem~\ref{thm:popularity} and~\ref{theorem:womanbounded} give logarithmic upper-bounds on the number of stable partners, matching the lower bound when preferences are uniformly random \cite{knuth1990stable,pittel1992likely}.

\subsection{Definitions and main theorems}

\emph{Matchings.}
Let $\M=\{ m_1,\ldots , m_M\}$ be a set of $M$ men,  $\W=\{ w_1,\ldots , w_W\}$ be a set of $W$ women, and  $N = \min(M, W)$. In a matching, each person is either single, or matched with someone of the opposite sex. Formally, we see a matching as a function $\mu :\M \cup \W \rightarrow \M \cup \W$, which is self-inverse ($\mu^2 = \text{Id}$), where each man $m$ is paired either with a woman or himself ($\mu(m) \in \W\cup\{m\}$), and symmetrically, each woman $w$ is paired with a man or herself ($\mu(w) \in \M \cup\{w\}$). 

\emph{Preference lists.}
Each person declares which members of the opposite sex they find acceptable, then gives a strictly ordered {preference list} of those members. Preference lists are \emph{complete} when no one is declared unacceptable. Formally, we represent the preference list of a man $m$ as a total order $\succ_m$ over $\W\cup\{m\}$, where $w \succ_m m$ means that man $m$ finds woman $w$ acceptable, and $w \succ_{m} w'$ means that man $m$ prefers woman $w$ to woman $w'$. Similarly we define the preference list $\succ_w$ of woman $w$.

\emph{Stability.}
A man-woman pair $(m,w)$ is blocking a matching $\mu$ when $m\succ_w\mu(w)$ and $w\succ_m\mu(m)$. Abusing notations, observe that $\mu$ matches a person $p$ with an unacceptable partner when $p$ would prefer to remain single, that is when the pair $(p,p)$ is blocking. A matching with no blocking pair is stable. A stable pair is a pair which belongs to at least one stable matching

\emph{Random preferences.}
We consider a model where each person's set of acceptable partners is deterministic, and orderings of acceptable partners are drawn independently from \emph{regular} distributions. When unspecified, someone's acceptable partners and/or their ordering is \emph{adversarial}, that is chosen by an adversary who knows the input model but does not know the outcome of the random coin flips.

\begin{definition}[Regular distribution]\label{def:regular}
    A distribution of preferences lists is \textbf{regular} when for every sequence of acceptable partners $a_1, \dots, a_k$ we have $\P[a_1 \succ a_2\,|\,a_2 \succ \dots \succ a_k] \leq \P[a_1 \succ a_2]$. 
\end{definition}

Intuitively, knowing that $a_2$ is ranked well only decreases the probability that $a_1$ beats $a_2$. 
Most probability distributions that have been studied are regular. In particular, sorting acceptable partners by scores (drawn independently from distributions on $\mathbb R$), yields a regular distribution.
%This includes the case where preferences are deterministic, and the models studied in \cite{...}.
As an example of regular distribution, we study popularity preferences, introduced by Immorlica and Madhian \cite{immorlica2015incentives}.
%Figure~\ref{fig:popularity} illustrates the process of a woman sampling her preference list over 3 acceptable men $m_1$, $m_2$ and $m_3$ of popularities 2, 5 and~3.

\begin{definition}[Popularity preferences]
\label{definition:popularity}
When a woman $w$ has {\em popularity preferences}, she gives a positive popularity $\D_w(m)$ to each acceptable partner $m$. We see $\D_w$ as a distribution over her acceptable partners, scaled so that it sums to 1. She uses this distribution to draw her favourite partner, then her second favourite, and so on until her least favourite partner.
%
%We say that a woman $w$ has {\em popularity preferences} if there exists a distribution\footnote{actually we do not require that $\sum_{m\in \M}\D_w(m) = 1$, as we normalize each time a man is drawn.} $\D_w : \M\rightarrow\mathbb R_+$ whose support is $w$'s acceptable husbands, such that $w$ builds a random preference list by repeatedly sampling from $\D_w$ without replacement (renormalizing $\D_w$ at each step).
%The \emph{popularity} of a man $m$ according to $w$ is the value $\D_w(m)$.
%Similarly, one may define popularity preferences for a man. 
\end{definition}

%\begin{figure}[h!]
%    \centering
%    \vspace{-.3cm}
%    \input{fig_popularity}
%    \vspace{-.3cm}
%    \caption{Stochastic process used to generate a preference list over 3 acceptable men $m_1$, $m_2$ and $m_3$ of popularity $2$, $5$ and $3$. The probability of sampling $m_2 \succ m_1 \succ m_3$ is $\frac{5}{2+5+3}\cdot \frac{2}{2+3}\cdot \frac{3}{3} = 0.2$.}
%    \label{fig:popularity}
%    \vspace{-.5cm}
%\end{figure}

%\paragraph{Strongly correlated preferences}\todo{transition to theorem A?}

The following Theorem shows that under some assumptions every woman gives approximately the same rank to all of her stable partners.

\begin{restatable}{mytheorem}{thmgeom}
    \label{thm:geom}
    Assume that each woman  independently draws her  preference list from a regular distribution. The men's preference lists are arbitrary.
    Let $u_k$ be an upper bound on the odds that man $m_{i+k}$ is ranked before man $m_i$:%, with $1 \leq i < i+k \leq M$.
    \[\forall k\geq 1,\quad u_k = \max_{w,i}\left\{
    \frac{\P[m_{i+k} \succ_w m_i]}{\P[m_{i} \succ_w m_{i+k}]}
    \;\middle|\;\begin{array}{l}
    w \text{ finds both }m_i\text{ and }m_{i+k}\text{ acceptable}
    \end{array}\right\}\]
    Then for each woman with at least one stable partner, in expectation all of her stable partners are ranked within   $(1+2\exp(\sum_{k\geq 1} ku_k)) \sum_{k\geq 1} k^2 u_k$ of one another in her preference list.  
\end{restatable}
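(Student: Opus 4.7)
Proof plan. The plan is to fix a woman $w$ with at least one stable partner and bound the expected rank gap in her preference list between her best stable partner $m_w^+$ and her worst $m_w^-$; by the classical lattice theorem for stable matchings these coincide with her partners in the women-optimal and men-optimal stable matchings respectively, and by stability of the men-optimal matching every man $m$ that $w$ ranks strictly above $m_w^-$ is matched there to a woman he strictly prefers to $w$. This structural fact will be fed back into the analysis.

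The first ingredient is a rank-displacement lemma for the canonical labelling $m_1, m_2, \ldots$. For an arbitrary man $m_i$, the rank $\mathrm{rank}_w(m_i)$ differs from $i$ by the net number of inversions involving $m_i$ in $w$'s preferences. Using Definition~\ref{def:regular} and the $u_k$ bound, each pair $(m_i, m_{i+k})$ is inverted with odds at most $u_k$; from this one shows that $\mathbb{E}[|\mathrm{rank}_w(m_i) - i|]$ is of order $\sum_{k \geq 1} u_k$ and, more delicately, that the expected weighted displacement is controlled by the second-moment-type quantity $\sum_{k \geq 1} k^2 u_k$. Applied to both $m_w^+$ and $m_w^-$, this reduces the rank-gap bound to a bound on the expected gap of the canonical indices of these two stable partners, up to an additive $O(\sum_{k \geq 1} k^2 u_k)$ coming from the displacement between canonical index and $w$-rank.

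The harder step is bounding the canonical-index gap itself. Here the approach is to analyse the sequence of rotations in the stable matching lattice that transforms the men-optimal into the women-optimal matching through $w$'s chain of stable partners. Each rotation involving $w$ is a cycle $(w, m_0), (w_1, m_1), \ldots, (w_{k-1}, m_{k-1})$ along which $w$ improves from $m_0$ to $m_{k-1}$, and its very existence forces each woman $w_i$ on the cycle to have inverted the pair $\{m_{i-1}, m_i\}$ relative to the canonical order; regularity then lets us concatenate these inversion probabilities as a multiplicative process whose expected total survival length is of order $\exp(\sum_{k \geq 1} k u_k)$, while each step contributes at most $\sum_{k \geq 1} k u_k$ to $w$'s canonical displacement. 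Combining with the displacement bound from the previous paragraph yields the announced bound $(1 + 2\exp(\sum_{k \geq 1} k u_k)) \sum_{k \geq 1} k^2 u_k$. The main obstacle is coupling the rotation structure with this multiplicative/geometric process in a way that is robust to the arbitrary preferences on the men's side, which is precisely where the exponential factor must be extracted.
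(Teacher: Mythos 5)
Your plan takes a genuinely different route from the paper (which never mentions rotations), but both of its main steps have real gaps. The first gap is a conditioning problem in the rank-displacement step: the bound $\E[\,|\mathrm{rank}_w(m_i)-i|\,]=O(\sum_k u_k)$ is valid for a \emph{fixed} index $i$, yet you apply it to $m_w^+$ and $m_w^-$, whose identities are random and correlated with $w$'s own preference list (her worst stable partner $\mu_\M(w)$ is determined in part by how she ranks the men who propose to her). Averaging a per-index bound over a data-dependent index needs justification; the paper handles exactly this by conditioning on the random variable $\mathcal H$ of Definition~\ref{def:MOSM} (the men-optimal matching together with each woman's ranking of the men who prefer her to their $\mu_\M$-partner) and by proving its regularity estimate in conditional form (Lemma~\ref{lemma:regular} bounds $\P[J_i<i+k\mid J_i<i+k+1]$ given a partial ranking). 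Your plan has no analogue of this device.

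The second and more serious gap is the rotation step. The claim that a rotation ``forces each woman $w_i$ on the cycle to have inverted the pair $\{m_{i-1},m_i\}$ relative to the canonical order'' is false: a woman on a rotation merely prefers $m_{i-1}$ to her current partner $m_i$, which is an inversion only when the canonical index of $m_{i-1}$ exceeds that of $m_i$; going around the cycle guarantees only that \emph{at least one} step is an inversion, and that single inversion need not have magnitude comparable to how far $w$'s partner travels in canonical index during the rotation. Consequently the ``multiplicative process'' from which you intend to extract the factor $\exp(\sum_k k u_k)$ is not actually constructed, and you yourself flag this coupling as the main obstacle --- but it is the entire content of the theorem. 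For comparison, the paper gets the exponential factor from a different decomposition: it shows all stable partners of $w$ lie in a \emph{block} delimited by prefix separators (Lemmas~\ref{lemma:separator}--\ref{lemma:rankdifference}), computes the two ends of the block by a greedy sequence of jumps (Algorithm~\ref{algo:separators}), stochastically dominates each jump by an increment $\Delta_t$ with $\P[\Delta_t<\delta]=\exp(-\sum_{k\ge\delta}ku_k)$ and the number of jumps by a geometric variable (Lemma~\ref{lemma:domination}), and concludes with Wald's identity (Lemma~\ref{lemma:meandomination}); the leftover $\sum_k k^2u_k$ term comes from separately counting out-of-block men inserted between the two stable partners (Lemma~\ref{lemma:delta}). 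If you wish to persist with rotations, you would first have to prove that every rotation involving $w$ stays inside a separator-free interval --- at which point you will have essentially rediscovered the block argument.
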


Theorem~\ref{thm:geom} is most relevant when the women's preference lists are strongly correlated, that is, when every woman's preference list is ``close"  to a single  ranking $m_1 \succ m_2 \succ \ldots \succ m_M$. This closeness is measured by the odds that in some ranking, some man is ranked ahead of a man who, in the ranking $m_1 \succ m_2 \succ \ldots \succ m_M$, would be $k$ slots ahead of him. 
%Thus, Theorem~\ref{thm:geom} assumes an upper bound on those odds.
%Theorem~\ref{thm:geom} is only interesting when the series $\sum_{k\geq 1} k^2 u_k$ converges, which means that there is a strong consensus on the canonical ordering of men $m_1 \succ m_2 \succ \dots \succ m_M$.

We detail below three examples of applications, where the expected difference of ranks between each woman's best and worst partners is $O(1)$, and thus her incentives to misreport her preferences are limited.
\begin{itemize}
\item \emph{Identical preferences.} If all women rank their acceptable partners using a master list $m_1 \succ m_2 \succ \dots \succ m_M$, then all $u_k$'s are equal to 0. Then Theorem~\ref{thm:geom} states that each woman has a unique stable husband, a well-known result for this type of instances. %More generally, if the preference lists deviate only slightly from the canonical ordering, so that man $m_i$ has rank $i\pm O(1)$ in the women's preference lists, then for a regular distribution Theorem~\ref{thm:geom} states that the expected difference of ranks between a woman's best and worst partners is $O(1)$.

\item \emph{Preferences from identical popularities.} Assume that women have popularity preferences (Definition~\ref{definition:popularity}) and that each woman gives man $m_i$ popularity $2^{-i}$.
Then $u_k = 2^{-k}$ and the expected rank difference is at most $\mathcal O(1).$
%In a third example, assume that each man $m_i$ is given a popularity $\lambda^i$, with $0 < \lambda < 1$, and that all women have i.i.d. popularity preferences. Then, the probability that a man $m_{j}$ is ranked before man $m_i$ is equal to $\lambda^{j} / (\lambda^i + \lambda^{j})$. Thus, we have $u_k = \lambda^k$, and the expected difference of rank between each woman's worst and best stable partners is at most $\frac{\lambda(1+\lambda)}{(1-\lambda)^3}(1+2e^{\frac{\lambda}{(1-\lambda)^2}}) = \mathcal O(1).$

\item \emph{Preferences from correlated utilities.} Assume that women have similar preferences: each woman $w$ gives man $m_i$ a score that is the sum of a common value $i$ and an idiosyncratic value $\eta_i^w$ which is normally distributed with mean $0$ and variance $\sigma^2$; she then sorts men by increasing scores.
%\todo{similar, common value, idiosyncratic}. In a second example, assume that each woman $w$ gives each man $m_i$ a score $i+\eta^w_i$, where $\eta_i^w$ is normally distributed with mean $0$ and variance $\sigma^2$, and sorts men by increasing score.
Then 
$u_k \leq \max_{w,i}\,\{2\cdot\P[\eta_i^w-\eta_{i+k}^w > k]\} \leq 2e^{-(k/2\sigma)^2}$ and
%the probability that $w$ ranks $m_{i+k}$ before $m_i$ is equal to $\P[m_{i+k} \succ_w m_i] = \P[\eta_i^w-\eta_{i+k}^w] > k$. Using the fact that $\eta_i^w-\eta_{i+k}^w$ is normally distributed, one can show that $u_k \leq 2e^{-(k/2\sigma)^2}$. Bounding the series with an integral,
the expected rank difference, by a short calculation, 
%difference of rank between each woman's worst and best stable partners
is at most $4\sqrt{\pi}\sigma^3(1+2e^{4\sigma^2}) = \mathcal O(1)$.
\end{itemize}

%\todo{transition to theorem B?}

 A stronger notion of approximate incentive compatibility is near-unicity of a stable matching, meaning that  most persons have either no or one unique stable partner, and thus have no incentive to misreport their preferences. % (Theorem~\ref{thm:geom} implies that each woman has $O(1)$ stable partners in expectation).
 When does that hold? One answer is given by Theorem~\ref{thm:geomunif}.

\begin{restatable}{mytheorem}{thmgeomunif}
    \label{thm:geomunif}
    Assume that each woman  independently draws her preference list from a regular distribution. 
    Let $u_k$ be an upper bound on the odds that man $m_{i+k}$ is ranked before man $m_i$:%, with $1 \leq i < i+k \leq M$.
    \[\forall k\geq 1,\quad u_k = \max_{w,i}\left\{
    \frac{\P[m_{i+k} \succ_w m_i]}{\P[m_{i} \succ_w m_{i+k}]}
    \;\middle|\;\begin{array}{l}
    w \text{ finds both }m_i\text{ and }m_{i+k}\text{ acceptable}
    \end{array}\right\}\]
    Further assume that all preferences are complete, that $u_k=\exp(-\Omega(k))$, and that men have uniformly random preferences. Then, in expectation the fraction of persons who have multiple stable partners converges to 0.
    %is $\mathcal O(\ln^2 \min(M,W))/(M+W)=o(1)$.
\end{restatable}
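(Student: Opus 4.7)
The plan is to combine Theorem~\ref{thm:geom} (which localizes each woman's stable partners to a small window in her own preference list) with a rejection-chain analysis of the men-proposing deferred acceptance (DA) algorithm, exploiting the uniformity of men's preferences via the principle of deferred decisions.

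First, because $u_k = \exp(-\Omega(k))$, both $\sum_k k u_k$ and $\sum_k k^2 u_k$ converge, so the bound of Theorem~\ref{thm:geom} reduces to a constant $C$: each woman's stable partners lie within $C$ consecutive positions of her list in expectation. Markov's inequality then says they span more than $K$ positions with probability at most $C/K$, which lets me restrict attention, for any fixed woman $w_0$, to an $O(K)$-sized set of candidate alternative stable partners in her preference list.

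Next, I fix $w_0$ and run the men-proposing DA to obtain $\mu_M$ with $m_0 = \mu_M(w_0)$. A classical characterization says that $w_0$ has multiple stable partners iff, in the instance where $w_0$ declares $m_0$ unacceptable, the DA leaves her with a partner she prefers to $m_0$. This restart is realized as a rejection chain: $w_0$ refuses $m_0$, then $m_0$ proposes to his next-preferred woman, who either rejects him or accepts and evicts her current partner, who then proposes further, and so on. Exposing men's uniform preferences on the fly, each proposal in the chain is a uniform draw among the proposer's as-yet-unproposed women. A ``good'' return to $w_0$ requires (i) some proposal to hit $w_0$, each such event having probability $O(1/N)$, and (ii) the proposer at that step to lie in the $O(K)$-sized candidate window. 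With an upper bound $L$ on the expected chain length, the probability that $w_0$ has multiple stable partners is $O(LK/N)$; summing over $w_0$ and using the fact that the numbers of men and women with multiple stable partners are equal (they are tied by the alternating cycle structure between $\mu_M$ and $\mu_W$) yields an expected fraction $O(LK/N)$, which tends to $0$ as long as $LK = o(N)$.

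The main obstacle is the chain-length bound $L$. I expect the correlated women's preferences to keep the chain short: most random proposals are rejected because women are already paired with relatively high-ranked men under the near-master-list structure, so the chain seldom extends. Making this rigorous — coupling the deferred-decision exposure of men's uniform preferences with the regularity of women's preferences, and showing the chain length is at most polylogarithmic in $N$ — is the key technical step. Once $L$ is controlled, taking $K$ and $L$ both $\mathrm{polylog}(N)$ gives the desired $o(1)$ bound on the expected fraction of persons with multiple stable partners.
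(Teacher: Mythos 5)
Your overall strategy---localize the alternative stable partners using the correlation of the women's lists, then use the uniformity of the men's lists to show that an alternative partner is unlikely---matches the spirit of the paper's proof, but the execution diverges at the crucial step and leaves a genuine gap. The paper does not analyze rejection chains at all. It works with the block decomposition of Section~\ref{subsection:separators}: it first shows (Lemma~\ref{lemma:expblock}, Corollary~\ref{corollary:failure}) that with probability $1-1/N^2$ every block has size at most $C\ln N$ and no woman ever ranks $m_j$ above $m_i$ when $j\geq i+C\ln N$. Then, for a fixed man $m_i$, it considers his favourite woman $w^*$ among those he ranks below $w_i$ and whose index is not within $C\ln N$ below $i$. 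Any stable partner of $m_i$ other than $w_i$ must lie in his block, and if $w^*$ has index larger than $i+2C\ln N$ then $(m_i,w^*)$ blocks every such alternative matching, so $w_i$ is his unique stable partner. Since $m_i$'s ranking of the women he likes less than $w_i$ remains uniform after conditioning on the man-optimal matching, the bad event has probability at most $3C\ln N/(N+C\ln N-i)$ (Lemma~\ref{lemma:onehusband}); summing over $i$ gives $O(\ln^2 N)$ women with several partners. Note that this per-woman bound is \emph{not} uniformly $O(\mathrm{polylog}(N)/N)$: it degrades to a constant for $i$ near $N$, and only the harmonic-type sum rescues the argument --- so your hoped-for uniform bound $O(LK/N)$ per woman is stronger than what actually holds.

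The gap in your proposal is precisely the step you flag as ``the key technical step'': you never bound the length of the rejection chain, nor the probability that a proposal returning to $w_0$ comes from your $O(K)$-sized candidate window, and neither is routine. With complete preferences and a balanced market the restarted chain does not die at an unmatched woman; it terminates only when some man exhausts his entire list, so the chain is not obviously short, and in the uniform benchmark the expected number of returns to $w_0$ is $\Theta(\ln N)$, which is inconsistent with a bound of the form $O(L/N)$ for polylogarithmic $L$. Moreover your candidate window is a set of \emph{ranks} in $w_0$'s realized preference list, so membership of the returning proposer in that window is correlated with the chain dynamics, which makes the product bound $O(LK/N)$ unjustified as stated; and the claim that each proposal hits $w_0$ with probability $O(1/N)$ requires controlling how many women each proposer has already exhausted. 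The paper's block-based argument sidesteps all of this: the ``window'' is the block, an index set measurable with respect to the conditioning variable $\MOSM$, and the uniformity of a single man's tail ranking suffices. To complete your route you would essentially have to redo the chain-control arguments of Immorlica--Mahdian and Ashlagi--Kanoria--Leshno in this correlated setting, which is exactly the content that is missing.
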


\vspace{-.2cm}
Notice that in the three examples of Theorem~\ref{thm:geom}, the sequence $(u_k)_{k \geq 1}$ is exponentially decreasing.
The assumptions of Theorem~\ref{thm:geomunif} are minimal in the sense that removing one would bring us back to a case where a constant fraction of woman have multiple stable partners.
\begin{itemize}
    \item \emph{Preference lists of women.} If we remove the assumption that $u_k$ is exponentially decreasing, the conclusion no longer holds: consider a balanced market  balanced ($M=W$) in which both men and women have complete uniformly random preferences; then most women have  $\sim \ln N$ stable husbands %Knuth Motwanni and Pittel 
    \cite{knuth1990stable,pittel1992likely}
    \item \emph{Preference lists of men.} Assume that men have random preference built as follows: starting from the ordering $w_1, w_2, \dots, w_M$, each pair $(w_{2i-1},w_{2i})$ is swapped with probability 1/2, for all $i$. A symmetric definition for women's preferences satisfy the hypothesis of Theorem~\ref{thm:geomunif}, with $u_1=1$ and $u_k=0$ for all $k\geq 2$. Then there is a 1/8 probability that men  $m_{2i-1}$ and $m_{2i}$ are both stable partners of women $w_{2i-1}$ and  $w_{2i}$, for all $i$, hence a constant expected fraction of persons with multiple stable partners.
    %\item \emph{Preference lists of men.} Assume that women have random preference built as follows: starting from the ordering $m_1, m_2, \dots, m_M$, each pair $(m_{2i-1},m_{2i})$ is swapped with probability 1/2, for all $i$. Then $u_1=1$ and $u_k=0$ for all $k\geq 2$. 
    %In both of the following settings, there is a 1/8 probability that men  $m_{2i-1}$ and $m_{2i}$ are both stable partners of women $w_{2i-1}$ and  $w_{2i}$, for all $i$, hence a constant expected fraction of persons with multiple stable partners.
    %\begin{itemize}
    %    \item \emph{When preferences are not complete}: assume that the acceptable partners of $m_{2i-1},m_{2i}$ are exactly $\{ w_{2i-1},w_{2i}\}$, with a uniform ordering between the two. 
    %    \item \emph{When preferences are not uniform}: assume each man has complete random preference built as follows: starting from the ordering $w_1, w_2, \dots, w_W$, each pair $(w_{2i-1},w_{2i})$ is swapped with probability 1/2, for all $1 \leq i \leq W/2$. 
    %\end{itemize}
    \item \emph{Incomplete preferences.} Consider a market divided in groups $\{m_{2i-1},m_{2i},w_{2i-1},w_{2i}\}$, where a man and a woman are mutually acceptable if they belong to the same group. Once again, with constant probability, $m_{2i-1}$ and $m_{2i}$ are both stable partners of women $w_{2i-1}$ and  $w_{2i}$.
\end{itemize}

%\paragraph{Number of stable partners} \todo{
One of the basic results in the study of stable matchings beyond worst case, initiated by Knuth Motwani and Pittel, concerns the number of stable husbands when all preference lists are uniformly random \cite{knuth1990stable}. 
%In Theorem~\ref{thm:popularity} we generalize their upper bound to the popularity model (which includes the uniform case) and prove that the expected number of stable partners is logarithmic in the number of agents and the ratio of popularities of two stable partners. 

%However, in some markets \cite{biro2020need, rheingans2020large}, a constant fraction of people may have several stable partners, and the difference of rank between their best and worst stable partner could be non-constant. Counting the number of stable partners is a weaker measure of core convergence, initiated by Knuth Motwani and Pittel \cite{knuth1990stable}
%}

\vspace{-.2cm}
\begin{restatable}{mytheorem}{thmpopularity}
    \label{thm:popularity}
    Let $w$ be a woman.  Assume that $w$ has popularity preferences defined by $\D_w$ and that she has at least one stable partner. The preference lists of  the men and of the women other than $w$ are arbitrary. Then 
%    \todo{Should we talk about this? woman $w$ is married in all stable matchings if and only if she receives a proposal from an acceptable man in Algorithm~\ref{algo:MPDA}, which is independent from her preference list. }
  \[\mathbb E[\hbox{Number of stable husbands of }w] \leq 1 + \ln d_{w} + \mathbb E\left[\ln\frac{\D_w(\mu_\W(w))}{\D_w(\mu_\M(w))}\right],\]
where $d_{w}$ denotes the number of acceptable husbands of $w$, $\mu_\M(w)$ is her worst stable partner and $\mu_\W(w)$ is her best stable partner.
\end{restatable}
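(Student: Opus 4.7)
The plan is to use the standard exponential representation of popularity preferences: for each acceptable man $m$, introduce independent random variables $T_m \sim \mathrm{Exp}(\D_w(m))$, so that sorting men in increasing order of $T_m$ yields a preference list distributed exactly as a popularity-$\D_w$ list for $w$. This follows from the memoryless property of exponentials and the identity $\P[T_i < T_j] = \lambda_i/(\lambda_i+\lambda_j)$, which together recover the recursive definition of popularity preferences. Set $A := T_{\mu_\W(w)}$ and $B := T_{\mu_\M(w)}$. A first, easy observation is that every stable partner of $w$ has $T$-value in $[A,B]$: if $T_m < A$ then $m$ is ranked strictly above $\mu_\W(w)$ in $w$'s list, so in the women-proposing deferred-acceptance run $w$ proposes to $m$ and is rejected, and the Gale--Shapley optimality of $\mu_\W(w)$ rules out $(m,w)$ as a stable pair; the symmetric men-proposing argument excludes $T_m > B$.

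The main technical step is a counting bound that uses the stability structure rather than merely the interval containment. The goal is
\[
\E[N_w] \;\leq\; 1 + \E\!\left[\ln(B/A)\right]
\]
up to controlled lower-order terms. The intuition is that the density of $w$'s stable partners along the $T$-axis is $O(1/t)$: traversing the chain $\mu_\W(w) = s_0 \succ_w s_1 \succ_w \cdots \succ_w s_{N_w-1} = \mu_\M(w)$ corresponds to successive rotations in the stable-matching lattice, each shifting $w$'s partner to a strictly larger $T$-value, and invoking the memoryless property of the exponentials at the stopping time $T_{s_i}$ should give a lower bound of the form $\E[\ln(T_{s_{i+1}}/T_{s_i}) \mid T_{s_i}] \geq 1$, so that telescoping bounds $N_w - 1$ by $\ln(B/A)$ in expectation.

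Once this bound is available, change variables to $Y_m := \D_w(m)\,T_m$, which are i.i.d.\ $\mathrm{Exp}(1)$. Then
\[
\ln(B/A) \;=\; \ln\!\left(Y_{\mu_\M(w)}/Y_{\mu_\W(w)}\right) \;+\; \ln\!\left(\D_w(\mu_\W(w))/\D_w(\mu_\M(w))\right),
\]
and standard order-statistic bounds on $d_w$ i.i.d.\ unit exponentials give $\E[\ln(Y_{\mu_\M(w)}/Y_{\mu_\W(w)})] \leq \ln d_w$, which assembles into the stated bound. The main obstacle is establishing the counting inequality: the interval $[A,B]$ can contain far more acceptable men than $N_w$ (by a factor of roughly $d_w/\ln d_w$ in the uniform case), so the bound must genuinely exploit the combinatorial structure of stability, not just the interval containment. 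The critical sub-claim -- a lower bound on the expected log-gap between consecutive stable partners of $w$ on the $T$-axis -- is the delicate part; I would attempt it via the rotation poset restricted to rotations involving $w$, combined with the memoryless property of the exponentials applied at the stopping times $T_{s_i}$.
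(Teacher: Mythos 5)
Your reformulation via exponential clocks is sound at the level of the two intermediate inequalities: with $A=T_{\mu_\W(w)}$ and $B=T_{\mu_\M(w)}$, both $\E[N_w]\leq 1+\E[\ln(B/A)]$ and $\E[\ln(Y_{\mu_\M(w)}/Y_{\mu_\W(w)})]\leq \ln d_w$ turn out to be true, and together they are equivalent to the paper's chain of estimates. The difficulty is that both of the steps you flag as ``to be established'' carry the entire content of the theorem, and the routes you sketch for them do not work. For the counting inequality, the missing ingredient is not the rotation poset but the Knuth--Motwani--Pittel enumeration (Theorem~\ref{lemma:MPDAextended} and Algorithm~\ref{algo:MPDAextended}): after conditioning on the man-optimal stable matching and on $w$'s ranking of the men who proposed to her in the initial run, the subsequent sequence $x_1,\dots,x_K$ of proposals made to $w$ is \emph{deterministic}, and her stable husbands are exactly the running minima of the independent clocks $T_{x_1},\dots,T_{x_K}$ measured against the initial batch of total popularity $p_\bot$. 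That is what licenses $\E[N_w]=1+\sum_i p_i/(p_\bot+p_1+\cdots+p_i)\leq 1+\ln\bigl((p_\bot+\cdots+p_K)/p_\bot\bigr)=1+\E[\ln(B/A)]$. Your proposed argument traverses the stable partners from best to worst, i.e.\ forward in $T$ but backward in proposal order; in that direction $T_{s_i}$ is not a stopping time of any filtration in which the identity of $s_{i+1}$ is determined, so the appeal to memorylessness ``at the stopping times $T_{s_i}$'' is unjustified, and you would additionally need an optional-stopping argument since the number of gaps is random.

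The second gap is the claim that $\E[\ln(Y_{\mu_\M(w)}/Y_{\mu_\W(w)})]\leq \ln d_w$ follows from ``standard order-statistic bounds.'' The men $\mu_\M(w)$ and $\mu_\W(w)$ are not fixed order statistics of the $Y_m$ (the argmin of $T$ is not the argmin of $Y$ once popularities differ), and the only bound available without further structure --- the log-ratio of the maximum to the minimum of $d_w$ i.i.d.\ unit exponentials --- has expectation $\ln d_w+\ln\ln d_w+\gamma+o(1)$, which overshoots the target. The inequality is true, but its proof is exactly the paper's Jensen step: $\mu_\W(w)$ equals each proposer with probability proportional to his popularity, so $\E[\ln\D_w(\mu_\W(w))]$ is the popularity-weighted average of the $\ln p_i$, and convexity of $t\mapsto t\ln t$ gives $\ln(p_\bot+p_1+\cdots+p_K)-\frac{\sum_i p_i\ln p_i}{\sum_i p_i}\leq \ln(K+1)\leq \ln d_w$, combined with $p_\bot\geq \D_w(\mu_\M(w))$. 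So while your decomposition is a legitimate (and arguably illuminating) repackaging of the paper's proof, both of its load-bearing steps are missing, and the hints you give for filling them in would not succeed as stated.
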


Theorem~\ref{thm:popularity} generalizes the following bound from \cite{knuth1990stable,pittel1992likely}: assume that $N$ men and $N$ women have complete uniformly random preference lists (``uncorrelated preferences"); then the expected number of stable pairs is asymptotically equivalent to  $N \ln N$. 

\begin{restatable}{corollary}{thmtelescopic}
\label{thm:telescopic}
In both settings (1) and (2), the expected number of stable pairs is at most $N(1 + \ln N)$
\begin{enumerate}
    \item[(1)] Assume that women have popularity preferences and that each man $m$ has an \emph{intrinsic popularity} $\D(m)$, such that if woman $w$ finds man $m$ acceptable then $\D_w(m) = \D(m)$.
    \item[(2)] Assume that men and women have symmetric \emph{popularity preferences}, such that if man $m$ and woman $w$ are mutually acceptable then $\D_m(w) = \D_w(m)$.
\end{enumerate}
%In both settings (1) and (2), the expected number of stable pairs is at most $N(1 + \ln N)\sim N\ln N$.
\end{restatable}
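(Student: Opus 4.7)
The plan is to apply Theorem~\ref{thm:popularity} to every woman who has at least one stable partner and sum the resulting inequalities. Writing $S_\W$ for the set of women matched in some (equivalently, by Gale--Shapley, every) stable matching, and $S_\M$ for the corresponding set of men, and noting that the number of stable pairs equals the sum over $w \in S_\W$ of the number of $w$'s stable husbands, the sum yields
\[ \E[\text{number of stable pairs}] \;\leq\; \E\!\left[\sum_{w \in S_\W}(1 + \ln d_w)\right] \;+\; \E[T], \]
where $T := \sum_{w \in S_\W}\ln\bigl(\D_w(\mu_\W(w))/\D_w(\mu_\M(w))\bigr)$. The heart of the argument is showing that $T$ vanishes or cancels in both settings.

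For setting~(1), since $\mu_\W(w)$ and $\mu_\M(w)$ are both stable (hence acceptable) partners of $w$, the proportionality $\D_w \propto \D$ on $w$'s acceptable set gives $\D_w(\mu_\W(w))/\D_w(\mu_\M(w)) = \D(\mu_\W(w))/\D(\mu_\M(w))$, so the $w$-dependent normalization drops out. Re-indexing the two resulting sums by the matched man (both $\mu_\W$ and $\mu_\M$ are bijections $S_\W \to S_\M$) and invoking the classical fact that $S_\M$ is invariant across all stable matchings, both sums equal $\sum_{m \in S_\M} \ln \D(m)$, so $T = 0$ pointwise.

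For setting~(2), I apply Theorem~\ref{thm:popularity} symmetrically to each man, obtaining a parallel inequality in which the log-ratio term is $T_\M := \sum_{m \in S_\M}\ln(\D_m(\mu_\M(m))/\D_m(\mu_\W(m)))$ (for a man, his best stable partner is $\mu_\M(m)$ and his worst is $\mu_\W(m)$). Re-indexing both $T$ and $T_\M$ by matched pairs and using the symmetry $\D_w(m) = \D_m(w)$ yields $T_\M = -T$, so averaging the women's and men's inequalities cancels the log-ratio terms exactly.

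To conclude, I combine the resulting bounds with $|S_\W| = |S_\M| \le N$ and $d_w, d_m \le N$ (the latter immediate in the balanced setting $M=W=N$, and handled by choosing the tighter of the two sides in setting~(2)), which yields $\E[\text{number of stable pairs}] \le N(1 + \ln N)$. The main obstacle is the bookkeeping in the telescoping step: verifying that the re-indexed sums really pair up, which rests on the ``rural hospitals''-style invariance of the matched set across stable matchings and, in setting~(2), on careful use of the symmetry hypothesis $\D_w(m) = \D_m(w)$.
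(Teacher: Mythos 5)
Your proof is correct and follows essentially the same route as the paper's: apply Theorem~\ref{thm:popularity} to each matched woman, and cancel the log-ratio terms either by re-indexing over the (stable-matching-invariant) set of matched men in setting~(1), or by summing with the symmetric men's inequality in setting~(2). Your observation that the per-woman normalization of $\D_w$ cancels inside the ratio in setting~(1) is a detail the paper's short proof leaves implicit, but the argument is otherwise the same.
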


\vspace{-.2cm}
%We say that men have \emph{intrinsic} popularities when all women agree on the popularity of each man they find acceptable. 
Intrinsic popularities model ``one-sided'' correlations, for example when all women agree that some men are more popular.
% when all women agree on the popularity of each man they find acceptable.
%as women tend to sort men by decreasing popularity when popularities are highly skewed.
Symmetric popularities model ``cross-sided'' correlations, for example when men and women prefer partners with whom they share some centers of interest. 
%Interestingly, symmetric popularities can also encompass ``one-sided'' correlations: defining $\D_m(w) = \D_w(m) = \D(m) \cdot \D(w)$ is equivalent (because of the renormalization of popularities) with men and women having an intrinsic popularity specified by $\D$.
Both intrinsic popularities and symmetric popularities generalizes the uniform case.
The upper bound from Corollary~\ref{thm:telescopic} matches the bound from~\cite{pittel1992likely}, implying that uncorrelated preferences are a worst case situation up to lower order terms: correlations reduce the number of stable pairs.% for ``core-convergence''.

\vspace{-.2cm}\begin{proof} For the purpose of this introduction, we assume that $M = W$ and that everyone has complete preferences, which implies that everyone is matched in all stable matchings. The proof without this assumption is more technical and can be found in Appendix~\ref{section:telescopic}.
In both settings~(1) and~(2), Theorem~\ref{thm:popularity} can be applied for each woman, therefore
\[\textstyle\E[\text{nb of stable pairs}] \leq N(1+\ln N) + \E[\sum_{w\in\W}\ln\D_w(\mu_\W(w))-  \sum_{w\in\W}\ln\D_w(\mu_\M(w))].\]
In setting (1) both sums are equal to $\sum_{m\in\M}\ln\D(m)$, thus their difference is equal to 0, concluding the proof. In setting~(2), we swap the roles of men and women, and apply Theorem~\ref{thm:popularity} for each man,
\[\textstyle\E[\text{nb of stable pairs}] \leq N(1+\ln N) + \E[\sum_{m\in\M}\ln\D_m(\mu_\M(m))-  \sum_{m\in\M}\ln\D_m(\mu_\W(m))].\]
Now, observe that for every matching $\mu$ we have $\sum_{w\in\W}\ln\D_w(\mu(w)) = \sum_{m\in\M}\ln\D_m(\mu(m))$, thus summing the two formula concludes the proof. 
\end{proof}

Recall that the upper bound from Theorem~\ref{thm:popularity} depends on a ratio of popularities.
When $w$ has complete popularity preferences and each man $m_i$ has popularity $0.99^i$, the ratio is at most $0.99^{-M}$, thus at most $\approx 1\%$ of men are stable husbands of $w$. In Appendix~\ref{section:onewomanlb}, we show that this $1\%$ upper bound is tight in the worst case.

%
%A second example where Theorem~\ref{thm:popularity} is tight 
%When each man $m_i$ has popularity $0.99^i$, the ratio between two popularity is at most $0.99^{1-M}$, thus in expectation at most $\approx1\%$ of men are stable partners of~$w$. In subsection~\ref{subsection:onewomanlb}, we prove that in this example the bound is asymptotically tight.
%
%
%
%
%~\ref{thm:popularity} provides a bound on the number of stable husbands of a woman, with respect to the popularities given by this woman to men.
%
When everyone has popularity preferences, the following theorem provides an improved upper bound on the popularity ratio between the stable husbands of a woman. 
The bound depends on how uniform the preferences of men are (parameter $R_\M$, the maximal ratio between the popularities of two distinct women for a given man)
and how similar the preferences of women are (parameter $Q_\W$, the maximal ratio between the popularities of two distinct men for two distinct women).
%stable husbands of the population of women. It leaves open the possibility that $\ln(N)$ women have $N$ stable husbands while other women have a single stable husband.
%The following theorem provide a bound which depends on the popularity preferences of men.
%can be used to bound the number of stable husbands of a particular woman.
The parameter  
$R_\M$ is small when the preferences of every man among women are close to be uniform; they are actually uniform when $R_\M=1$.
The parameter  
$Q_\W$ is close to $1$ when women tend to agree on the relative popularities of men. In case  men have intrinsic popularities, like in setting (1) of Corollary~\ref{thm:telescopic}, then $Q_\W=1$.

%Denote $R_\M$ be the maximal ratio between the popularities of two distinct women for a given man
%\[
%R_\M = \max_{ \substack{m\in \M\\w_0,w_1 \in \W}}
%\frac{\D_m(w_0)}{\D_m(w_1)} \enspace.
%\]
%The parameter  
%$R_\M$ is small when the preferences of every man among women are close to be uniform; they are actually uniform when $R_\M=1$.

%Denote $Q_\W$ be the maximal ratio between the popularities of two distinct men for two distinct women:
%\[
%Q_\W = \max_{\substack{w_0,w_1 \in \W\\m_0,m_1\in \M}}
%\frac{\D_{w_0}(m_0)}{\D_{w_0}(m_1)}
%\cdot
%\frac{\D_{w_1}(m_1)}{\D_{w_1}(m_0)}\enspace.
%\]
%The parameter  
%$Q_\W$ is close to $1$ when women tend to agree on the relative popularities of men. In case  men have intrinsic popularities (like in setting 1) of Corollary~\ref{thm:telescopic}), then $Q_\W=1$.

\vspace{-.2cm}
\begin{restatable}{mytheorem}{womanboundedtheorem}
\label{theorem:womanbounded}
Assume that men and women have popularity preferences.
Denote
\[
R_\M = \max_{ \substack{m\in \M\\w_0,w_1 \in \W}}
\frac{\D_m(w_0)}{\D_m(w_1)}
\qquad
Q_\W = \max_{\substack{w_0,w_1 \in \W\\m_0,m_1\in \M}}
\frac{\D_{w_0}(m_0)}{\D_{w_0}(m_1)}
\cdot
\frac{\D_{w_1}(m_1)}{\D_{w_1}(m_0)}\enspace.
\]
Let $w$ be a woman.
%Let $N_w$ be the number of men whose popularity is less or equal than the popularity of the husband of $w$ in the man-proposing stable matching.
Then with probability 
$\geq (1 - \frac{2}{N^2})$ the popularity ratio (for $w$) between any two stable husbands of $w$ is no more than
%\[
%\left(N^5\cdot Q_\W\right)^ {1 + \frac{4\ln(N)
%\left(1 + \ln_2\left( N
%%%%%%\frac{N}{N_w}
%\right)\right)}
%{\ln\left(1 + \frac{1}{R_\M} \right)}
%}\enspace.
%\]
$
\left(N^5\cdot Q_\W\right)^ {
1 + 
\frac{4\ln(N)\left(1 + \log_2(N)\right)}{\ln\left(1 + 1/R_\M\right)}
}\enspace.
$
\end{restatable}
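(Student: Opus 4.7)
The plan is to use Theorem~\ref{thm:popularity} as the main tool, applied both to $w$ directly and to each man who is a stable husband of $w$, and to chain the resulting bounds via the ratios $R_\M$ and $Q_\W$ through a doubling argument on the lattice of stable matchings.

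First, I would swap the roles of men and women in Theorem~\ref{thm:popularity} and apply it to every man $m$ that is a stable husband of $w$. Since $R_\M$ uniformly bounds any popularity ratio $\D_m(w_0)/\D_m(w_1)$, the expected number of stable wives of $m$ is at most $1 + \ln N + \ln R_\M$. Combined with Markov's inequality and a union bound over the $N$ men, this yields a pointwise high-probability bound on the number of stable wives of every man; the polynomial loss in the count paid to reduce the failure probability to $O(1/N^2)$ is what explains the factor $N^5$ appearing in the base of the final bound.

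Second, I would exploit the well-known lattice structure on the set of stable matchings: $w$'s stable husbands form a chain $m_1 \succ_w m_2 \succ_w \dots \succ_w m_r$, and consecutive husbands in this chain are linked by rotations that involve shared intermediate stable wives. A doubling argument would then bound $\D_w(m_1)/\D_w(m_r)$ by recursively splitting the chain into halves and bounding the ratio across each half via a common intermediate stable wife; translating a wife's popularity view back into $w$'s view costs a multiplicative factor $Q_\W$ (by definition of $Q_\W$), while the number of intermediate husbands contributes a polynomial factor in $N$ (from the first step). Iterating over the $1 + \log_2 N$ dyadic levels produces the $\log_2 N$ factor appearing in the claimed exponent.

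The main obstacle will be the $1/\ln(1+1/R_\M)$ factor in the exponent. A naive Markov conversion of the expectation bound from Theorem~\ref{thm:popularity} gives an exponent scaling linearly in $R_\M$, not in $1/\ln(1+1/R_\M)$. Obtaining the sharper form likely requires replacing Markov by a Chernoff-type concentration on the number of stable wives of a man, adapted to the structure of popularity preferences, where intuitively $\ln(1+1/R_\M)$ plays the role of the Chernoff exponent for the tail of the number of men that $w$ will effectively consider before being matched. A secondary difficulty will be aggregating the $O(\log N)$ high-probability events across dyadic levels into the final failure probability $2/N^2$, which should follow from a union bound combined with conditioning on the success of the previous dyadic level.
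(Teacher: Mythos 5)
Your proposal does not follow the paper's route, and I don't think it can be completed as described; there are two structural gaps.

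First, Theorem~\ref{thm:popularity} points in the wrong direction for this statement. It takes a popularity ratio between best and worst stable partners as \emph{input} and outputs a bound on the expected \emph{count} of stable partners. Here you must bound the popularity ratio itself, and counting the stable wives of each man (even with a sharp high-probability version) gives you no control over $\D_w(m)/\D_w(m')$ for two stable husbands $m,m'$ of $w$. Likewise, the doubling argument over the lattice of stable matchings has no mechanism that actually caps the ratio: a ``common intermediate stable wife'' linking two halves of the chain lets you translate between two women's popularity views at a cost of $Q_\W$, but only if you already have a bound on some popularity ratio to translate --- which is exactly what you are trying to establish. Nothing in the deterministic lattice structure prevents $w$ from having two stable husbands of wildly different popularity; the bound is inherently probabilistic and must come from the dynamics of the proposal process.

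Second, you correctly identify that the exponent $1/\ln(1+1/R_\M)$ cannot come from Markov, but the fix is not a Chernoff bound on counts of stable wives. The paper obtains it by running Algorithm~\ref{algo:MPDAextended} as a stochastic process in which men's preferences are revealed on the fly, and tracking the $w$-popularity of the successive \emph{proposers} (not the stable husbands). Men are sorted by $w$-popularity and cut into dyadic blocks $F_0,F_1,\dots$ starting at $\mu_\M(w)$; if some block contains a ``huge popularity gap,'' then whenever the current proposer sits in the middle band $E_1'$ above that gap, the next proposer falls back below it with probability at least $1/(R_\M+1)$ --- because each man's preferences over women are within a factor $R_\M$ of uniform and at least half of the candidate wives are married to less popular men. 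Hence the process survives $L$ consecutive steps in $E_1'$ with probability at most $(1-1/(R_\M+1))^L=\exp(-L\ln(1+1/R_\M))$, and taking $L=4\ln N/\ln(1+1/R_\M)$ makes this $N^{-4}$; a union bound over the at most $N^2$ proposal steps gives the $1/N^2$ failure term. The factor $N^5$ arises not from a Markov-to-high-probability conversion but from conditioning on ``standard preferences'' (no woman prefers a man to one who is $T=N^5$ times more popular), which also caps the per-step popularity increase at $T\cdot Q_\W$; compounding over $L$ steps and over the $\log_2 N$ dyadic levels yields the stated exponent. Without this downward-drift argument on the proposer sequence, the exponent you would obtain scales polynomially in $R_\M$ rather than as $1/\ln(1+1/R_\M)$.
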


From this upper-bound bound on the popularities of stable husbands,
one derives an upper-bound on the expected number of stable husbands.

\vspace{-.2cm}
\begin{restatable}{corollary}{womanboundedcor}
\label{cor:womanbounded}
Assume that men and women have popularity preferences.
The expected number of stable husbands of any woman $w$
is  bounded by $\mathcal{O}\bigg(\frac{\ln(Q_\W)}{\ln\left(1 + \frac{1}{R_\M} \right)}\ln^3(N)\bigg)\enspace.$
%$\mathcal{O}\left(\ln^3(N) \ln(Q_\W)/\ln\left(1 + 1/R_\M \right)\right).$
\end{restatable}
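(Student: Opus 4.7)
The plan is to combine Theorem~\ref{thm:popularity}, which bounds the expected number of stable husbands of $w$ in terms of an expected log popularity ratio between her best and worst stable partners, with Theorem~\ref{theorem:womanbounded}, which gives a high-probability bound on this ratio. Writing $H(w)$ for the number of stable husbands of $w$ and $r = \D_w(\mu_\W(w))/\D_w(\mu_\M(w))$, Theorem~\ref{thm:popularity} yields
\[\E[H(w)] \leq 1 + \ln d_w + \E[\ln r].\]
Since $d_w \leq M = \mathcal{O}(N)$, the first two terms contribute $\mathcal{O}(\ln N)$, which is dominated by the target bound; the task reduces to controlling $\E[\ln r]$.

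Let $A = \{r \leq B\}$ denote the high-probability event from Theorem~\ref{theorem:womanbounded}, with $B = (N^5 Q_\W)^{1 + 4\ln(N)(1+\log_2 N)/\ln(1+1/R_\M)}$ and $\P(A) \geq 1 - 2/N^2$. A direct expansion gives
\[\ln B = (5\ln N + \ln Q_\W)\left(1 + \frac{4\ln N\,(1 + \log_2 N)}{\ln(1+1/R_\M)}\right) = \mathcal{O}\!\left(\frac{\ln^3 N + \ln Q_\W \cdot \ln^2 N}{\ln(1+1/R_\M)}\right),\]
which matches the target $\mathcal{O}(\ln Q_\W \cdot \ln^3 N / \ln(1+1/R_\M))$ after absorbing lower-order terms (under the mild convention $\ln Q_\W = \Omega(1)$, standard in this asymptotic regime).

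Next, I decompose $\E[H(w)] = \E[H(w)\mathbb{1}_A] + \E[H(w)\mathbb{1}_{\neg A}]$. The bad-event term is handled by the deterministic bound $H(w) \leq d_w \leq M$ combined with $\P(\neg A) \leq 2/N^2$, yielding an $\mathcal{O}(1/N)$ contribution. For the good-event term, I apply Theorem~\ref{thm:popularity} conditionally on $A$: conditionally, $\E[H(w)\mid A] \leq 1 + \ln d_w + \E[\ln r\mid A] \leq 1 + \ln d_w + \ln B$, which gives the claimed estimate.

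The main obstacle is this conditioning step, since Theorem~\ref{thm:popularity} is stated as an unconditional inequality. I would address this either by checking that its proof passes through conditioning on events measurable with respect to the preference profile, or (more robustly) by extracting from that proof a pointwise inequality of the shape $H(w) \leq 1 + \ln d_w + \ln r$ on the realization of the random preferences, after which the split-and-bound argument above goes through without any conditioning and delivers the asymptotic estimate $\E[H(w)] = \mathcal{O}(\ln Q_\W \cdot \ln^3 N / \ln(1+1/R_\M))$.
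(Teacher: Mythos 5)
Your high-level plan (split on the high-probability event from Theorem~\ref{theorem:womanbounded}, pay $\mathcal O(N)\cdot 2/N^2$ on the bad event, and control the log-popularity ratio on the good event) has the right shape, but the step you yourself flag as ``the main obstacle'' is a genuine gap, and neither of your proposed repairs works. There is no pointwise inequality $H(w)\le 1+\ln d_w+\ln r$: with uniform popularities one has $r=1$ identically, yet on the classical instance in which every man--woman pair is stable (see Appendix~\ref{section:onewomanlb}) the realization of $w$'s uniformly random preference list coincides with the adversarial one with probability $1/N!$, giving $H(w)=N$ with positive probability. Theorem~\ref{thm:popularity} is intrinsically an inequality between expectations: its proof sums acceptance probabilities of ``best so far'' proposals and applies Jensen, and no realization-wise statement survives. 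Conditioning Theorem~\ref{thm:popularity} on your event $A=\{r\le B\}$ is not legitimate either: $A$ is defined in terms of the set of stable husbands of $w$, hence depends on $w$'s own preference order, and conditioning on it distorts exactly the conditional acceptance probabilities of Lemma~\ref{lemma:popularity} on which that proof rests. So the good-event term $\E[H(w)\mathbb 1_A]$ is not actually bounded by your argument.

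The paper closes this hole with two moves you would need to reproduce. First, by inspection of the proof of Theorem~\ref{theorem:womanbounded} (property $(\spadesuit)$ and the bound~\eqref{eq:rrhugo}), the quantity $B$ bounds not merely the ratio between stable husbands but the $w$-popularity of \emph{every proposal} received by $w$ during Algorithm~\ref{algo:MPDAextended}; this event is determined by the preferences of the participants other than $w$ (phase~1 of the algorithm never consults $w$'s ranking, since she rejects everyone), so it does not interfere with the randomness used in phase~2. Second, Lemma~\ref{lem:prel} recasts the expectation bound of Theorem~\ref{thm:popularity} conditionally on $\mu_\M$ and in terms of $\max_{m\in L_w}\D_w(m)/\D_w(\mu_\M(w))$ over the whole proposal list $L_w$, which is exactly the quantity the strengthened high-probability statement controls. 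With those two ingredients your split-and-bound computation, including the estimate of $\ln B$ and the convention $\ln Q_\W=\Omega(1)$, goes through and yields the stated asymptotics.
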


Corollary~\ref{cor:womanbounded}  is most relevant when $R_\M$ and $Q_\W$ are not too large: the bound is polylog in $N$ as long as $R_\M$ is polylog in $N$ and $Q_\W$ is polynomial in $N$.
% In the  case where $R_\M$ is large the dependency of the expected number of stable husbands is linear in $R_\M$ since  $\ln\left(1+\frac{1}{R_\M}\right)\approx \frac{1}{R_\M}$.

In the extreme case where both $R_\M=1$ and $Q_\M=1$,
by symmetry all women have the same expected number 
of stable husbands;
and Corollary~\ref{thm:telescopic} shows that this number is $\ln(N)$ while Corollary~\ref{cor:womanbounded}
provides a looser upper-bound of $\ln^3(N)$.

%\subsection{Techniques}

%\todo{...}

\vspace{-.2cm}\subsection{Related work}

Analyzing instances that are less far-fetched than in the worst case is the motivation underlying the model of stochastically generated preference lists.
A series of papers \cite{pittel1989average,knuth1990stable,pittel1992likely,pittel2007number,lennon2009likely} study the model where $N$ men and $N$ women have complete uniformly random preferences. Asymptotically, and in expectation, the total number of stable matchings is $\sim e^{-1}N \ln N$, in which a fixed woman has $\sim \ln N$ stable husbands, where her best stable husband has rank $\sim\ln N$ and her worst stable husband has rank $\sim N/\ln N$. Theorem~\ref{thm:popularity} and its proof extend the upper-bound on the number of stable husbands from \cite{knuth1990stable}.

The first theoretical explanations of the ``core-convergence'' phenomenon where given in \cite{immorlica2015incentives} and \cite{ashlagi2017unbalanced}, in variations of the standard uniform model. Immorlica and Mahdian \cite{immorlica2015incentives} consider the case where men have constant size random preferences (truncated popularity preferences). Ashlagi, Kanoria and Leshno \cite{ashlagi2017unbalanced}, consider slightly unbalanced matching markets ($M < W$). Both articles prove that the fraction of persons with several stable partners tends to 0 as the market grows large. Theorem~\ref{thm:geomunif} and its proof incorporate ideas from those two papers.
%Both articles provide randomized analyses of algorithms which enumerates stable partners: \cite{ashlagi2017unbalanced} studies an algorithm due to \cite{gusfield1987three}, and \cite{immorlica2015incentives} studies its simplification from \cite{knuth1990stable}, reproduced here in Algorithm~\ref{algo:MPDAextended}. Starting by the men-proposing deferred acceptance procedure,  breaking some marriages

Beyond strong ``core-convergence'', where most agents have a unique stable partner, one can bound the utility gain by manipulating a stable mechanism. Lee \cite{lee2016incentive} considers a model with random cardinal utilities, and shows that agents receive almost the same utility in all stable matchings.
Kanoria, Min and Qian \cite{kanoria2021matching}, and Ashlagi, Braverman, Thomas and Zhao \cite{ashlagi2020tiered} study the rank of each person's partner, under the men and women optimal stable matchings, as a function of the market imbalance and the size of preference lists \cite{kanoria2021matching}, or as a function of each person's (bounded) popularity \cite{ashlagi2020tiered}. Theorem~\ref{thm:geom} can be compared with such results.

Beyond one-to-one matchings, school choice is an example of many-to-one markets. Kojima and Pathak \cite{kojima2009incentives} generalize results from~\cite{immorlica2015incentives} and prove that most schools have no incentives to manipulate. Azevedo and Leshno \cite{azevedo2016supply} show that large markets converge to a unique stable matching in a model with a continuum of students. To counter balance those findings, Bir\'o, Hassidim, Romm and Shorer \cite{biro2020need}, and Rheingans-Yoo \cite{rheingans2020large} argue that socioeconomic status and geographic preferences might undermine core-convergence, thus some incentives remain in such markets.
\vspace{-.2cm}

%\newpage
%\section{Background: Stable matchings with random preferences}
%\label{section:model}
%\input{model}

%\newpage
\section{Strongly correlated preferences: Proof of Theorem~\ref{thm:geom}}
\label{section:geom}
\newcommand{\MOSM}{\mathcal H}

\thmgeom*

 %the odds of man $m_{i+k}$ being ranked by some woman before man $m_i$ for some $i$ with $1 \leq i < i+k \leq M$. %The \emph{odds} of an event is equal to the probability it occurs, divided by the probability it does not. Notice that for events whose probability is at most 1/2 have, odds and probability are within a factor 2 of each other. Bounding the odds simplifies the rest of our analysis, and is equivalent to bounding the probabilities.

%In subsection \ref{subsection:MPDA}, we recall how to compute the men-optimal stable matching, where each woman has her worst stable partner. 

\vspace{-.1cm} In subsection \ref{subsection:separators}, we define a partition of stable matching instances into \emph{blocks}. For strongly correlated instances, blocks provide the structural insight to start the analysis: in Lemma~\ref{lemma:rankdifference}, we use them to upper-bound the difference of ranks between a woman's worst and best stable partners by the sum of  (1) the number $x$ of men coming from other blocks and who are placed between stable husbands in the woman's preference list, and (2) the block size.  

The analysis requires a delicate handling of conditional probabilities. In subsection \ref{subsection:MOSM}, we explain how to condition on the men-optimal stable matching, when preferences are random.

Subsection~\ref{subsection:tractable} analyzes (1). The men involved are out of place compared to their position in the ranking $m_1\succ \ldots \succ m_M$, and the odds of such events can be bounded, thanks to the assumption  that distributions of preferences are {regular}. Our main technical lemma there is Lemma~\ref{lemma:regular}. 

Subsection~\ref{subsection:block} analyzes (2), the block size by first giving a simple greedy algorithm (Algorithm~\ref{algo:separators}) to compute a block. Each of the two limits of a block is computed by a sequence of ``jumps", so the total distance traveled is a sum of jumps which, thanks to Lemma~\ref{lemma:regular} again, can be stochastically dominated by a sum $X$ of independent random variables (see 
Lemma~\ref{lemma:domination}
%Lemmas~\ref{lemma:dominationr} and~\ref{lemma:dominationl}
); thus it all reduces to analyzing $X$, a simple mathematical exercise (Lemma~\ref{lemma:meandomination}).

Finally, subsection~\ref{subsection:puttingeverythingtogether} combines the Lemmas previously established to prove Theorem~\ref{thm:geom}. 

%\subsection{Computing the man optimal stable matching}
%\label{subsection:MPDA}

%Most of our proofs are based on analyzing  Algorithm~\ref{algo:MPDAextended}.
%Here we recall algorithms and structural results from ~\cite{gale1962college,knuth1990stable,gale1985some}. 

Our analysis builds on Theorems~\ref{lemma:MPDA} and ~\ref{thm:rural}, two fundamental and well-known %stable matching
results. 

\begin{theorem}[Adapted from \cite{gale1962college}]
\label{lemma:MPDA}
Algorithm \ref{algo:MPDA} outputs a stable matching $\mu_\M$ in which every man (resp. woman) has his best (resp. her worst) stable partner. Symmetrically, there exists a stable matching $\mu_\W$ in which every woman (resp. man) has her best (resp. his worst) stable partner.
\end{theorem}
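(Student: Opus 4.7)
The plan is to follow the classical Gale–Shapley analysis, in four steps. First, check that the deferred acceptance algorithm is well-defined: each man proposes to women in decreasing order of his preference list, each woman holds her tentatively best proposal and rejects strictly worse ones, and no man ever proposes to the same woman twice. The procedure therefore terminates in at most $MW$ steps and returns a function that pairs each man with at most one woman and vice-versa.

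Second, prove stability of the output $\mu_\M$. Suppose for contradiction that $(m,w)$ is a blocking pair, so that $w \succ_m \mu_\M(m)$. Since $m$ proposes in decreasing order of preference and stops only when accepted by $\mu_\M(m)$, he must have proposed to $w$ earlier and been rejected. A rejection only happens when $w$ is holding a man she strictly prefers to $m$, and since the quality of her tentative partner only improves over time, we get $\mu_\M(w) \succ_w m$, contradicting the assumption that $(m,w)$ blocks $\mu_\M$.

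Third, show man-optimality using the standard ``first rejection by a stable partner'' induction. Order the rejections that occur during the run, and assume, for contradiction, that some man is eventually rejected by a woman who is his partner in some stable matching; consider the first such rejection, say of $m$ by $w$, where $w = \mu(m)$ for some stable $\mu$. At that moment, $w$ holds some $m'$ with $m' \succ_w m$. By minimality, $m'$ has not yet been rejected by any stable partner, so every woman he strictly prefers to $w$ has already rejected him; in particular $\mu(m')$ has rejected him, hence $w \succ_{m'} \mu(m')$. Then $(m',w)$ blocks $\mu$, a contradiction. Therefore every man ends up with a woman at least as good as any stable partner, which forces equality and yields man-optimality.

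Fourth, deduce woman-pessimality as an immediate corollary: if some stable $\mu$ gave a woman $w$ a strictly worse partner than $\mu_\M(w)$, then setting $m:=\mu_\M(w)$, man-optimality gives $w \succ_m \mu(m)$ and the hypothesis gives $m \succ_w \mu(w)$, so $(m,w)$ blocks $\mu$. The existence and analogous properties of $\mu_\W$ follow by symmetry, running the women-proposing variant of the algorithm. The only genuinely non-trivial step is the inductive argument in the third part, and it is entirely standard once the ``first rejection by a stable partner'' framing is adopted; the remaining parts are routine.
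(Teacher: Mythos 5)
The paper does not prove this statement at all: it is imported as a classical result (``Adapted from Gale and Shapley''), so there is no internal proof to compare against. Your argument is the standard textbook proof — termination, stability by the ``rejections only improve the held proposal'' invariant, man-optimality via the first rejection of a man by one of his stable partners, and woman-pessimality as a corollary of man-optimality — and this is exactly the right way to establish the theorem, including for the incomplete-list setting of the paper provided one treats ``remaining single'' as being matched to oneself.

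One sentence in your third step is written backwards, though the underlying idea is sound. You write: ``every woman he strictly prefers to $w$ has already rejected him; in particular $\mu(m')$ has rejected him, hence $w \succ_{m'} \mu(m')$.'' The fact that $m'$ has been rejected by every woman he strictly prefers to $w$ follows from his being at $w$ in his proposal order, not from minimality; and the correct use of minimality is the \emph{negation} of what you wrote: since no stable partner of $m'$ has rejected him yet, $\mu(m')$ cannot be among the women he strictly prefers to $w$ (all of whom have rejected him), hence $w \succ_{m'} \mu(m')$. As written, ``$\mu(m')$ has rejected him'' would itself contradict the minimality of the chosen rejection. Fix that one inference and the proof is complete.
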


\begin{theorem}[Adapted from \cite{gale1985some}]
\label{thm:rural}
 Each person is either matched in all stable matchings, or single in all stable matchings. In particular, a woman is matched in all stable matchings if and only if she received at least one acceptable proposal during Algorithm~\ref{algo:MPDA}.
\end{theorem}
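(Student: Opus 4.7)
The plan is to combine the existence of the man-optimal/woman-pessimal and woman-optimal/man-pessimal stable matchings (Theorem~\ref{lemma:MPDA}) with a one-line counting argument. Write $\mu_\M$ and $\mu_\W$ for these two extremal stable matchings, and recall from the paper's conventions that a person $p$ is single in $\mu$ exactly when $\mu(p)=p$, and that in any stable matching every matched pair is mutually acceptable (otherwise $(p,p)$ would block). In particular, a man $m$ is matched in $\mu$ iff $\mu(m)\succ_m m$, and similarly for women.

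First I would show that in any stable matching $\mu$ the set of matched men coincides with the set of matched men in $\mu_\M$, and similarly for women. By Theorem~\ref{lemma:MPDA}, for every man $m$ the ordering $\mu_\M(m)\succeq_m \mu(m)$ holds, so if $m$ is matched in $\mu$ (that is $\mu(m)\succ_m m$) then $\mu_\M(m)\succ_m m$ and $m$ is matched in $\mu_\M$; hence matched men in $\mu$ are contained in matched men in $\mu_\M$. Dually, $\mu_\M$ is woman-pessimal, so $\mu(w)\succeq_w \mu_\M(w)$ for every woman $w$; therefore any $w$ matched in $\mu_\M$ is matched in $\mu$, i.e.\ matched women in $\mu_\M$ are contained in matched women in $\mu$. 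Letting $a(\nu)$ denote the common number of matched men and matched women in a matching $\nu$, these two inclusions give $a(\mu)\leq a(\mu_\M)\leq a(\mu)$, so both inclusions are equalities of sets.

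Since $\mu$ was arbitrary, every person is either matched in all stable matchings or single in all of them, which is the first part of the theorem. For the ``in particular'' clause, recall that Algorithm~\ref{algo:MPDA} outputs $\mu_\M$, that men only propose to women they find acceptable, and that during the execution each woman retains her best current acceptable suitor (rejecting any proposer she finds unacceptable). Therefore a woman $w$ ends up matched in $\mu_\M$ if and only if she has received at least one proposal from a man she finds acceptable; combined with the first part of the statement, this yields the required equivalence.

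The argument is short and purely combinatorial and uses none of the probabilistic machinery of the paper; the only real care needed is to keep the two conventions straight (single versus matched, acceptable versus not) so that the opposing optimality/pessimality of $\mu_\M$ on the two sides translates into the two opposite set inclusions that drive the counting argument.
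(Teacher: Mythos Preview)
The paper does not actually prove Theorem~\ref{thm:rural}; it is stated as a background result adapted from \cite{gale1985some} and used without proof. Your argument is correct and is essentially the standard proof of the one-to-one Rural Hospitals Theorem: man-optimality of $\mu_\M$ gives the inclusion on matched men, woman-pessimality of $\mu_\M$ gives the opposite inclusion on matched women, and the cardinality equality $|\text{matched men}|=|\text{matched women}|$ forces both inclusions to be equalities. The ``in particular'' clause is then an immediate reading of Algorithm~\ref{algo:MPDA}. There is nothing to add.
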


\vspace{-.2cm}
\begin{algorithm}[h!]
\begin{algoindent}
\item\textbf{Input:}
Preferences of men $(\succ_m)_{m \in M}$ and women $(\succ_w)_{w \in \W}$.
\item\textbf{Initialization :} Start with an empty matching $\mu$. 
%{\color{gray}\hfill$\triangleright$ define $\mu \gets \text{id}_{\M\cup\W}$}
\item\textbf{While} a man $m$ is single and has not proposed to every woman he finds acceptable\textbf{, do}
%{\color{gray}\hfill $\triangleright$ $\exists m \in \M,\;\mu(m) = m$ and $\pi(m) \neq m$}
\begin{algoindent}
  \item $m$ proposes to his favorite woman $w$ he has not proposed to yet.
  %{\color{gray}\hfill$\triangleright$ $w \gets \pi(m)$}
  \item \textbf{If} $m$ is $w$'s favorite acceptable man among all proposals she received\textbf{,}
  %{\color{gray}\hfill$\triangleright$ if $m \succ_w \mu(w)$}
  \begin{algoindent}
  \item $w$ accepts $m$'s proposal, and rejects her previous husband if she was married.\\
  %{\color{gray}\hfill$\triangleright$ execute $\mu(\mu(w)) \gets \mu(w) \;;\; \mu(m) \gets w \;;\; \mu(w) \gets m$}
  \end{algoindent}
  %\item $m$ updates his next proposal.
  %{\color{gray}\hfill$\triangleright$ $\pi(m) \gets  \text{next choice in }\succ_m$}
\end{algoindent}
\item\textbf{Output:} Resulting matching.
%{\color{gray}\hfill$\triangleright$ return $\mu_\M \gets \mu$}
\end{algoindent}
\caption{Men Proposing Deferred Acceptance.}
\label{algo:MPDA}
\end{algorithm}
\vspace{-.5cm}

%%%%%%%%%%%%%%%%%%%%%%%%%%%%%%%%%%%%%%%%%%%%%%%%%%%%%%%%
%%%%%% DEFINING BLOCKS AND PROVING X+R-L-1
%%%%%%%%%%%%%%%%%%%%%%%%%%%%%%%%%%%%%%%%%%%%%%%%%%%%%%%%%
\subsection{Separators and blocks}
\label{subsection:separators}
In this subsection, we define the block structure underlying our analysis.

%\clearpage
%%%%%%%%%%%%%%%%%%%%%%%%%%%%%%%%%%%%%%%%%%%%%%%%%%%%%%%%%%%%%%%%%%%%%%%%%%%%%%%%%%%%%%%%%%%%%%%%%%
%%%%%%%%%%%%%%%%%%%%%%%%%%%%%%%%%%%%%%%%%%%%%%%%%%%%%%%%%%%%%%%%%%%%%%%%%%%%%%%%%%%%%%%%%%%%%%%%%%

\vspace{-.2cm}
\begin{definition}[separator]
A \emph{separator} is a set $S \subseteq \M$ of men such that in the men-optimal stable matching  $\mu_\M$, each woman married to a man in $S$ prefers him to all men outside $S$ : 
    \[\forall w \in \mu_\M(S)\cap\W,\quad \forall m \in\M\setminus S,\quad \mu_\M(w) \succ_w m\] 
\end{definition}

\vspace{-.2cm}
\begin{lemma}\label{lemma:separator}
    Given a separator $S \subseteq \M$, every stable matching matches $S$ to the same set of women.
\end{lemma}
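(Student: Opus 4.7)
The plan is to fix an arbitrary stable matching $\mu$ and show that $\mu(S)\cap\W$ equals $W_S := \mu_\M(S)\cap\W$. I would prove the inclusion $W_S\subseteq\mu(S)\cap\W$ via a blocking pair argument that exploits the man-optimality of $\mu_\M$, and then close the argument by a cardinality check relying on the Rural Hospitals Theorem~\ref{thm:rural}.

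For the inclusion, I would fix $w\in W_S$, let $m := \mu_\M(w)\in S$, and invoke Theorem~\ref{thm:rural} to ensure that $w$ is matched in $\mu$ as well, say with $m' := \mu(w)$. Assuming for contradiction that $m'\notin S$, the separator property immediately yields $m\succ_w m'$. On the men's side, observe that $\mu(m)\neq w$ (otherwise $\mu(w)=m\in S$, contradicting $m'\notin S$), while man-optimality of $\mu_\M$ ensures that $w = \mu_\M(m)$ is strictly better for $m$ than any other stable partner, hence $w\succ_m \mu(m)$. Thus $(m,w)$ would block $\mu$, a contradiction.

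For the reverse inclusion I would use a counting argument. Theorem~\ref{thm:rural} implies that each $m\in S$ is matched in $\mu$ iff he is matched in $\mu_\M$. Since any matching is a bijection between its matched men and its matched women, $|\mu(S)\cap\W|$ equals the number of matched men in $S$ under $\mu$, which equals the number of matched men in $S$ under $\mu_\M$, i.e.\ $|W_S|$. Combined with $W_S\subseteq\mu(S)\cap\W$, this forces equality.

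The only real subtlety is the careful use of Theorem~\ref{thm:rural}: it is needed both to guarantee that $w$ stays matched in $\mu$ (so that $m'=\mu(w)$ is a well-defined man to which the separator property can be applied) and to control the cardinality $|\mu(S)\cap\W|$ so that the counting step closes. Without Rural Hospitals one would have to rule out separately that some man in $S$ becomes unmatched while another in $S$ acquires a new partner outside $W_S$; with it in hand, the rest is a standard blocking-pair exercise.
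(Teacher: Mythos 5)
Your proof is correct and follows essentially the same route as the paper: show that each woman in $\mu_\M(S)\cap\W$ is matched inside $S$ in any stable matching, then close with a cardinality argument. The only differences are that you re-derive the woman-pessimality of $\mu_\M$ via an explicit blocking-pair argument where the paper simply cites it from Theorem~\ref{lemma:MPDA}, and that you invoke Theorem~\ref{thm:rural} explicitly in the counting step, which the paper leaves implicit.
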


\vspace{-.2cm}
\begin{proof}
%Without loss of generality, we can assume that all men of $S$ are matched in all stable matchings (men who are single stay single in all stable matchings).
Let $w \in \mu_\M(S)$ and let $m$ be the partner of $w$ in some stable matching. Since $\mu_\M$ is the woman-pessimal stable matching by Theorem~\ref{lemma:MPDA}, $w$ prefers $m$ to $\mu_\M(w)$. By definition of separators, that implies that $m\in S$. Hence, in every stable matching $\mu$, women of $\mu_\M(S)$ are matched to men in $S$. By a cardinality argument, men of $S$ are matched by $\mu$ to $\mu_\M(S)$.
\end{proof}

\vspace{-.2cm}
\begin{definition}[prefix separator, block]
    \label{def:block}
    %Assume there is a canonical order $m_1, \dots, m_N$ on men.
    %Matching $\mu_\M$ induces an ranking $w_1, \dots, w_N$ on married women.
    A \emph{prefix separator} is a separator $S$ such that  $S = \{m_1, m_2, \dots, m_t\}$ for some $0 \leq t \leq N$.     
    Given a collection of $b+1$ prefix separators $S_i = \{m_1, \dots, m_{t_i}\}$ with $0 = t_0 < t_1 < \dots < t_b = N$, the $i$-th \emph{block} is the set $B_i = S_{t_i} \setminus S_{t_{i-1}}$ with $1 \leq i \leq b$.
    
    Abusing notations, we will denote $S$ as the prefix separator $t$ and $B$ as the block $(t_{i-1},t_i]$. 
\end{definition}

\begin{lemma}\label{lemma:block}
    Given a block $B \subseteq \M$, every stable matching matches $B$ to the same set of women.
\end{lemma}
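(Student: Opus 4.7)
The plan is to derive Lemma~\ref{lemma:block} as an essentially immediate corollary of Lemma~\ref{lemma:separator}, using the fact that a block is by construction the set-theoretic difference of two (prefix) separators. Since Lemma~\ref{lemma:separator} already pins down the image of a separator under any stable matching, the natural move is to ``subtract'' the two images.

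Concretely, write $B = B_i = S_{t_i}\setminus S_{t_{i-1}}$ as in Definition~\ref{def:block}, and note that both $S_{t_i}$ and $S_{t_{i-1}}$ are prefix separators, hence separators. Let $\mu$ be any stable matching, and set $W_i := \mu_\M(S_{t_i})\cap\W$ and $W_{i-1} := \mu_\M(S_{t_{i-1}})\cap\W$. I would argue in three short steps. First, Lemma~\ref{lemma:separator} applied to each of $S_{t_i}$ and $S_{t_{i-1}}$ gives $\mu(S_{t_i})\cap\W = W_i$ and $\mu(S_{t_{i-1}})\cap\W = W_{i-1}$, independently of the choice of $\mu$. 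Second, since $S_{t_{i-1}}\subseteq S_{t_i}$, applying $\mu$ (which is a bijection on $\M\cup\W$) and intersecting with $\W$ yields $W_{i-1}\subseteq W_i$, and therefore
\[
\mu(B)\cap\W \;=\; \bigl(\mu(S_{t_i})\cap\W\bigr)\setminus\bigl(\mu(S_{t_{i-1}})\cap\W\bigr) \;=\; W_i\setminus W_{i-1},
\]
which depends only on $\mu_\M$ and not on $\mu$. Third, to account for the possibility that some men in $B$ are unmatched in $\mu$, I invoke Theorem~\ref{thm:rural}: the set of men in $B$ that are single is the same across all stable matchings (it equals $B\setminus \mu_\M(\W)$), and the remaining men in $B$ are married to the fixed set $W_i\setminus W_{i-1}$.

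I do not expect a real obstacle: the content is bookkeeping, and the one mild subtlety is keeping track of ``matched vs.\ single,'' which is why Theorem~\ref{thm:rural} has to be cited. Everything else is Lemma~\ref{lemma:separator} together with the inclusion $S_{t_{i-1}}\subseteq S_{t_i}$ of the two prefix separators that delimit the block.
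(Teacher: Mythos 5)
Your proof is correct and follows exactly the paper's route: the paper's own one-line argument is ``Applying Lemma~\ref{lemma:separator} to $S_{t_i}$ and to $S_{t_{i-1}}$ proves the Lemma,'' and you have simply spelled out the set subtraction and the matched-versus-single bookkeeping that the paper leaves implicit. The extra appeal to Theorem~\ref{thm:rural} is harmless but not strictly needed, since a man who is single in $\mu$ contributes nothing to $\mu(B)\cap\W$, which is the only set the statement is about.
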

\begin{proof}
    $B$ equals $S_{t_i} \setminus S_{t_{i-1}}$ for some $i$. Applying Lemma~\ref{lemma:separator} to $S_{t_i}$ and to $S_{t_{i-1}}$ proves the Lemma.
\end{proof}

\vspace{-.3cm}
\begin{lemma}\label{lemma:rankdifference}
    Consider a woman $w_n$  who is matched by $\mu_\M$ and let $B=(l,r]$ denote her block. Let $x$ denote the number of men from a better block that are ranked by $w_n$ between a man of $B$ and $m_n$:  
\vspace{-.2cm}
    $$x = |\{i \leq l\;|\; \exists j > l,\;m_j \succ_{w_n} m_i \succ_{w_n} m_n\}|.
\vspace{-.2cm}$$
    Then in $w_n$'s preference list, the difference of ranks between $w_n$'s worst and best stable partners is at most $x+r-l-1$.
\end{lemma}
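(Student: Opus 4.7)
The plan is to leverage Lemma~\ref{lemma:block}: since $w_n$ is matched by $\mu_\M$ to $m_n \in B$ and every stable matching matches $B$ to the same set of women, every stable partner of $w_n$ must lie in the block $B = \{m_{l+1},\ldots,m_r\}$. In particular, her best stable partner $\mu_\W(w_n)$ belongs to $B$, and by Theorem~\ref{lemma:MPDA} her worst stable partner is $\mu_\M(w_n) = m_n$, also in $B$.

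The next step is to count the men strictly between $\mu_\W(w_n)$ and $m_n$ in $w_n$'s preference list; this count equals the rank difference minus one. I would split it according to whether each such man lies inside $B$ or outside $B$. Those inside $B$ number at most $|B| - 2 = r - l - 2$, since $\mu_\W(w_n)$ and $m_n$ themselves account for two members of $B$.

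The crucial step is to control the men outside $B$ sitting strictly between $\mu_\W(w_n)$ and $m_n$. Here I would invoke the fact that $S_{t_i} = \{m_1,\ldots,m_r\}$ is a prefix separator and that $w_n \in \mu_\M(S_{t_i}) \cap \W$: the separator property then forces $w_n$ to prefer $m_n$ to every $m_i$ with $i > r$, so no such man can appear strictly between $\mu_\W(w_n)$ and $m_n$. Consequently, any man outside $B$ that is strictly between $\mu_\W(w_n)$ and $m_n$ in her list has index $i \leq l$; moreover, writing $\mu_\W(w_n) = m_j$ with $j > l$, each such $m_i$ satisfies $m_j \succ_{w_n} m_i \succ_{w_n} m_n$ and therefore witnesses the condition defining $x$. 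Their number is thus at most $x$.

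Combining the two bounds, the rank difference is at most $1 + (r - l - 2) + x = x + r - l - 1$, which is the claim. I do not anticipate a serious obstacle: the argument is a careful bookkeeping exercise. The only delicate point is recognizing that the separator property is one-sided — it rules out men of index greater than $r$ from being preferred by $w_n$ to $m_n$, but says nothing symmetric about men with smaller index — which is precisely why the stated bound involves only the indices $i \leq l$ rather than the whole complement of $B$.
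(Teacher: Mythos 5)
Your proof is correct and follows essentially the same route as the paper: identify the worst stable partner as $m_n$ via woman-pessimality, locate the best stable partner inside the block, and split the men appearing between them in $w_n$'s list into in-block men (at most $r-l$ including the endpoints) and out-of-block men, which the prefix-separator property at $r$ forces to have index $i \leq l$ and hence to be counted by $x$. The bookkeeping matches the paper's, so there is nothing further to add.
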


\vspace{-.3cm}
\begin{proof}
Since $\mu_\M$ is woman-pessimal by Theorem~\ref{lemma:MPDA}, $m_n$ is the last stable husband in $w_n$'s preference list. Let $m_j$ denote her best stable husband. 

In $w_n$'s preference list, the interval from $m_j$ to $m_n$ contains men from her own block, plus possibly some additional men. Such a man $m_i$ comes from outside her block $(l,r]$  and she prefers him to $m_n$: since $r$ is a prefix separator, we must have $i\leq l$. Thus $x$ counts the number of men who do not belong to her block but who in her preference list are ranked between $m_j$ and $m_n$. 

On the other hand, the number of men who belong to her block and who in her preference list are ranked between $m_j$ and $m_n$ (inclusive) is at most $r-l$.

Together, the difference of ranks between $w_n$'s worst and best stable partners is at most $x+(r-l)-1$. See Figure~\ref{fig:pref} for an illustration. 
\end{proof}
\vspace{-.5cm}

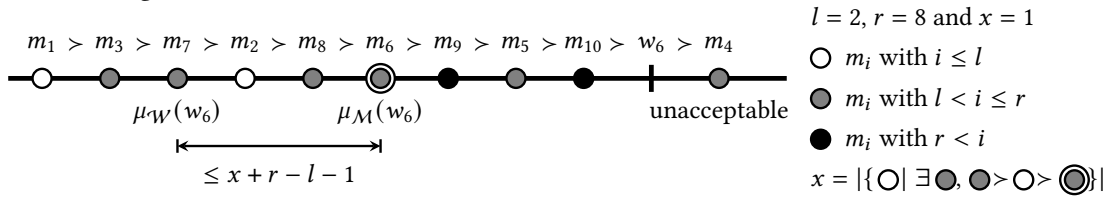
\begin{figure}[h!]
\vspace{-.2cm}
\begin{center}
\scalebox{.9}{\begin{tikzpicture}[line width=2pt,>=stealth]
    \def\xscale{1}
    \draw (.5*\xscale,0) -- (12*\xscale,0);
    \foreach \i/\t in {1/$m_1$,2/$m_3$,3/$m_7$,4/$m_2$,5/$m_8$,
    6/$m_6$,7/$m_9$,8/$m_5$,9/$m_{10}$,10/$w_6$,11/$m_4$} {
        \draw (\i*\xscale,0.5) node[anchor=center] {\t};
    }
    \fill[white] (6*\xscale,0) circle (6pt);
    \draw[line width=1pt] (6*\xscale,0) circle (6pt);
    \foreach \i in {1.5,2.5,...,10.5} {
        \draw (\i*\xscale,.5) node[anchor=center] {$\succ$};
    }
    \foreach \i in {1,4} {
        \fill[white](\i*\xscale,0) circle (4pt);
        \draw[line width=1pt] (\i*\xscale,0) circle (4pt);
    }
    \foreach \i in {2,3,5,6,8,11} {
        \fill[gray](\i*\xscale,0) circle (4pt);
        \draw[line width=1pt] (\i*\xscale,0) circle (4pt);
    }
    \foreach \i in {7,9} {
        \fill[black](\i*\xscale,0) circle (4pt);
        \draw[line width=1pt] (\i*\xscale,0) circle (4pt);
    }
    \draw (10*\xscale,-.2) -- (10*\xscale,.2);
    \draw (3*\xscale,-.5) node[anchor=center] {$\mu_\W(w_6)$};
    \draw (6*\xscale,-.5) node[anchor=center] {$\mu_\M(w_6)$};
    \draw (11*\xscale,-.5) node[anchor=center] {unacceptable};
    \draw[<->, line width=1pt] (3*\xscale,-1) -- (6*\xscale,-1);
    \draw[line width=1pt] (3*\xscale,-1.1) -- (3*\xscale,-0.9);
    \draw[line width=1pt] (6*\xscale,-1.1) -- (6*\xscale,-0.9);
    \draw (4.5*\xscale,-1.4) node[anchor=center] {$\leq x +r-l-1$};
    % legend
    \fill[white](12.5*\xscale,0.3) circle (4pt);
    \fill[gray](12.5*\xscale,-.3) circle (4pt);
    \fill[black](12.5*\xscale,-.9) circle (4pt);
    \draw[line width=1pt] (12.5*\xscale,0.3) circle (4pt);
    \draw[line width=1pt] (12.5*\xscale,-.3) circle (4pt);
    \draw[line width=1pt] (12.5*\xscale,-.9) circle (4pt);
    \draw (12.2*\xscale,0.9) node[anchor=west] {$l=2$, $r=8$ and $x = 1$};
    \draw (12.7*\xscale,0.3) node[anchor=west] {$m_i$ with $i \leq l$};
    \draw (12.7*\xscale,-.3) node[anchor=west] {$m_i$ with $l < i \leq r$};
    \draw (12.7*\xscale,-.9) node[anchor=west] {$m_i$ with $r < i$};
    \draw (12.2*\xscale,-1.5) node[anchor=west] {$x = |\{\quad|\;\exists\quad,\quad\succ\quad\succ\;\quad\}|$};
    \fill[gray] (14.35*\xscale,-1.5) circle (4pt);
    \fill[gray] (14.85*\xscale,-1.5) circle (4pt);
    \fill[gray] (16.25*\xscale,-1.5) circle (4pt);
    \draw[line width=1pt] (16.25*\xscale,-1.5) circle (4pt);
    \draw[line width=1pt] (16.25*\xscale,-1.5) circle (6pt);
    \draw[line width=1pt] (13.5*\xscale,-1.5) circle (4pt);
    \draw[line width=1pt] (14.35*\xscale,-1.5) circle (4pt);
    \draw[line width=1pt] (14.85*\xscale,-1.5) circle (4pt);
    \draw[line width=1pt] (15.5*\xscale,-1.5) circle (4pt);
\end{tikzpicture}}
\end{center}
\vspace{-.2cm}
\caption{Preference list of $w_n$, with $n=6$. 
%From left to right, men are represented from best to worst.
The block of $w_n$ is defined by a left separator at $l=2$ and a right separator at $r=8$.
Colors white, gray and black corresponds to blocks, and are defined in the legend.
All stable partners of $w_n$ must be gray.
Men in black are all ranked after $m_n = \mu_\M(w_n)$.
The difference in rank between $w_n$'s worst and best partner is at most the number of gray men (here $r-l = 6$), minus 1, plus the number of white men ranked after a gray man and before $m_n$ (here $x=1$).
%Men from $B_{high}$ are represented by dark circles.
%Men from $B_{low}$ are represented by gray/white circles.
%Here $x = 6$ is the number of gray shapes;
% and $z = 3$ men from $B_{high}$ are ranked behind a man
%from $B_{low}$.
}
\vspace{-.7cm}
\label{fig:pref}
\end{figure}

%%%%%%%%%%%%%%%%%%%%%%%%%%%%%%%%%%%%%%%%%%%%%%%%%%%%%%
%%%%%% EXPLAINING THE CONDITIONING
%%%%%%%%%%%%%%%%%%%%%%%%%%%%%%%%%%%%%%%%%%%%%%%%%%%%%
\subsection{Conditioning on the man optimal stable matching when preferences are random}
\label{subsection:MOSM}

We study the case where each person draws her preference list from an arbitrary  distribution. The preference lists are random variables, that are independent but not necessarily identically distributed.

Intuitively, we use the \emph{principle of deferred decision} and construct preference lists in an online manner. By Theorem~\ref{lemma:MPDA} the man-optimal stable matching $\mu_\M$ is computed by Algorithm~\ref{algo:MPDA}, and the remaining randomness can be used for a stochastic analysis of each person's stable partners. To be more formal, we define a random variable $\MOSM$, and inspection of Algorithm~\ref{algo:MPDA} shows that $\MOSM$ contains enough information on each person's preferences to run Algorithm~\ref{algo:MPDA} deterministically.

\vspace{-.2cm}\begin{definition}\label{def:MOSM}
Let $\MOSM = (\mu_\M, (\sigma_m)_{m\in \M}, (\pi_w)_{w\in\W})$ denote the random variable consisting of (1) the man-optimal stable matching $\mu_\M$, (2) each man's ranking of the women he prefers to his partner in $\mu_\M$, and (3) each woman's ranking of the men who prefer her to their partner in $\mu_\M$.
%\[\forall \omega \in\Omega,\quad \MOSM(\omega) = \left(\mu_\M,
%[x_i^m]_{1 \leq i \leq k_m}^{m\in\M},
%[y_i^w]_{1 \leq i \leq l_w}^{w\in\W}\right)
%\quad \text{where}\]
\end{definition}
\vspace{-.2cm}

%%%%%%%%%%%%%%%%%%%%%%%%%%%%%%%%%%%%%%%%%%
%%%%%%% ANALYZING X
%%%%%%%%%%%%%%%%%%%%%%%%%%%%%%%%%%%%%%%%%%%
\subsection{Analyzing the number $x$ of men from other blocks}
\label{subsection:tractable}

%Now, using the fact that preference distributions are regular, Lemma~\ref{lemma:regular} formalizes the following intuition: for all $1 \leq i < i+k \leq M$, conditioning on $\MOSM$ only decreases the probability that $m_i$'s wife prefer man $m_{i+k}$.

%In this subsection, wewe define sufficient conditions for our analysis to go through. Each person's set of acceptable partner is deterministic; men have deterministic preferences; women have independent independent (not necessarily identical) random preferences over their acceptable partners. The distribution of each woman's preference list must be sufficiently regular, and must be likely to rank men in their canonical order $m_1,\dots,m_M$.

\begin{lemma}
    \label{lemma:regular}
    Recall  the sequence $(u_k)_{k\geq 1}$ defined in the statement of Theorem~\ref{thm:geom}:
    \[\forall k\geq 1,\quad u_k = \max_{w,i}\left\{
    \frac{\P[m_{i+k} \succ_w m_i]}{\P[m_{i} \succ_w m_{i+k}]}
    \;\middle|\;\begin{array}{l}
    %w\in \W,\; i \in \{1, \dots, M-k\}\\
    w \text{ finds both }m_i\text{ and }m_{i+k}\text{ acceptable}
    \end{array}\right\}\]
    %Conditioning on $\MOSM$, we know the men optimal stable matching $\mu_\M$. Let $m_i$ be a man who is married in $\mu_\M$, and let $w$ be his wife.
    Let $w$ be a woman. Given a subset of her acceptable men and  a ranking of that subset  $a_1 \succ_{w} \dots \succ_{w} a_p$, we condition on the event that in $w$'s preference list, $a_1 \succ_{w} \dots \succ_{w} a_p$ holds.
    %\succ_{w} w \succ b_1 \succ_{w} \dots \succ_{w} b_q$.
    Let $m_i = a_1$ be $w$'s favorite man in that subset. Let $J_i$ be a random variable, equal to the highest $j \geq i$ such that woman $w$ prefers $m_j$ to $m_i$. Formally, $J_i = \max\{j \geq i\;|\; m_j \succeq_{w} m_i\}$.
    Then, for all $k\geq 1$, we have
    \[
    %\P[J_i\leq i+k-1\;|\;J_i \leq i+k]
    \P[J_i< i+k\;|\;J_i < i+k+1]
    \geq \exp(-u_{k}),
    \quad\text{and }\quad
    \P[J_i<i+k] \geq \exp(\textstyle-\sum_{\ell \geq k}u_\ell).\]
\end{lemma}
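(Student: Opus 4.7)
The first bound is the crux; the second follows by telescoping. Let $E = \{a_1 \succ_w \cdots \succ_w a_p\} \cap \{J_i < i+k+1\}$ denote the full conditioning event (recall $a_1 = m_i$). Because $\{J_i < i+k\} = \{J_i < i+k+1\} \cap \{m_i \succ_w m_{i+k}\}$, the first bound is equivalent to showing $\P[m_i \succ_w m_{i+k} \mid E] \geq \exp(-u_k)$. The plan is to establish the chain
\[
\P[m_{i+k} \succ_w m_i \mid E] \leq \P[m_{i+k} \succ_w m_i] \leq u_k \cdot \P[m_i \succ_w m_{i+k}] \leq u_k \cdot \P[m_i \succ_w m_{i+k} \mid E],
\]
which gives $\P[m_i \succ_w m_{i+k} \mid E] \geq 1/(1+u_k) \geq e^{-u_k}$. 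The middle inequality is the definition of $u_k$, and the third follows from the first by complementation (since $\P[m_i \succ_w m_{i+k} \mid E] + \P[m_{i+k} \succ_w m_i \mid E] = 1$).

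The core step is the first inequality. First I would rewrite $\{J_i < i+k+1\}$ as $\bigcap_{j \geq i+k+1} \{m_i \succ_w m_j\}$ (over acceptable $m_j$), so that $E$ is a conjunction of relations all placing $m_i$ above certain men, combined with a specific ordering among the $a$'s. Then I would decompose $E$ as a disjoint union of ``complete chain'' events $E_\sigma$, indexed by the linear orderings $\sigma$ extending $E$ into a total ranking of the involved men, necessarily of the form $m_i \succ_w c_{\sigma(1)} \succ_w \cdots \succ_w c_{\sigma(q)}$. For each such chain, applying Definition~\ref{def:regular} to the sequence $(m_{i+k}, m_i, c_{\sigma(1)}, \ldots, c_{\sigma(q)})$ yields $\P[m_{i+k} \succ_w m_i \mid E_\sigma] \leq \P[m_{i+k} \succ_w m_i]$. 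Averaging over $\sigma$ with weights $\P[E_\sigma]/\P[E]$ gives the desired bound.

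For the second inequality, I telescope. The events $\{J_i < i+\ell\}$ are nested in $\ell$ and become the whole sample space once $i+\ell$ exceeds the largest index of a man acceptable to $w$, so
\[
\P[J_i < i+k] = \prod_{\ell \geq k} \P[J_i < i+\ell \mid J_i < i+\ell+1] \geq \prod_{\ell \geq k} e^{-u_\ell} = \exp\Bigl(-\sum_{\ell \geq k} u_\ell\Bigr),
\]
where trailing factors for which no acceptable $m_{i+\ell}$ exists are simply $1$.

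The main obstacle is the decomposition step: verifying that the chains $E_\sigma$ genuinely partition $E$ and that each corresponds to a sequence of distinct acceptable partners to which Definition~\ref{def:regular} applies. This is a careful bookkeeping argument rather than a deep difficulty, but it is essential to cleanly connect a ``regularity'' assumption phrased for single chains to a multi-chain conditioning event.
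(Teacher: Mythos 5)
Your proof is correct and follows essentially the same route as the paper: the paper also reduces each conditional probability $\P[J_i<j\mid J_i\leq j]$ to the regularity definition by summing over the complete rankings $\sigma_S$ of the already-placed men (your chain events $E_\sigma$), bounds $\P[m_j\succ_w m_i\mid\sigma_S]\leq\P[m_j\succ_w m_i]$, converts to $(1+u_{j-i})^{-1}\geq e^{-u_{j-i}}$, and telescopes for the second inequality. The only differences are presentational — the paper frames the decomposition as a sequential insertion of $m_N,\dots,m_i$ into the partial ranking, and reaches $1/(1+u_k)$ directly from $1-\P[m_{i+k}\succ_w m_i]$ rather than via your third inequality — so there is nothing substantive to change.
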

\vspace{-.2cm}
\begin{proof}
    %All the analysis is done conditioning on $\MOSM$. Let $m_i = a_1 \succ_{w} a_2 \succ_{w} \dots \succ_{w} a_p$ be $w$'s ranking of men who prefer her to their partner in $\mu_\M$.
    $J_i$ is determined by $w$'s preference list. We construct $w$'s preference list using the following algorithm: initially we know her ranking $\sigma_A$ of the subset $A=\{ a_1,a_2,\ldots , a_p\}$ of acceptable men, and $m_i=a_1$ is her favorite among those. For each $j$ from $N$ to $i$ in decreasing order, we insert $m_j$ into the ranking according to the distribution of $w$'s preference list, stopping as soon as some $m_j$ is ranked before $m_i$ (or when $j=i$ is that does not happen). Then the step $j\geq i$ at which this algorithm stops  equals $J_i$.

    To analyze the algorithm, observe that at each step $j=N, N-1, \ldots  $, we already know $w$'s ranking of the subset  $S=\{m_{j+1}, \dots, m_N\} \cup \{a_1, \dots, a_p\} \cup \{\text{men who are not acceptable to }w\}$. If $m_j$ is already in $S$, $w$ prefers $m_i$ to $m_j$, thus the algorithm continues and $J_i < j$. Otherwise the algorithm inserts $m_j$ into the existing ranking:  by definition of regular distributions (Definition~\ref{def:regular}), the probability that $m_j$ beats $m_i$ given the ranking constructed so far is at most the unconditional probability $\P[m_j \succ_{w} m_i]$. 
    \[
    \P[J_i < j \;|\; w\text{'s partial ranking at step $j$}] \geq
    1 - \P[m_j \succ_{w} m_i].
    \]
   By definition of $u_{j-i}$, we have
    %\[
    %1 - \P[m_j \succ_{w} m_i]=\frac{1}{
    %1+\frac{\P[m_j\succ_w m_i]}{\P[m_i\succ_w m_j]}
    %} \geq \frac{1}{1+u_{j-i}} \geq \exp(-u_{j-i}).
    %\]
    $
    1 - \P[m_j \succ_{w} m_i]=
    \left(\textstyle1+\frac{\P[m_j\succ_w m_i]}{\P[m_i\succ_w m_j]}
    \right)^{-1} \geq (1+u_{j-i})^{-1} \geq \exp(-u_{j-i}).
    $
    %We use the law of total probability, 
    ~\smallskip\par\noindent
    Summing over all rankings $\sigma_S$ of $S$ that are compatible with $\sigma_A$ and with $J_i\leq j$,
    $$
    \P[J_i<j\;|\;J_i \leq j]= \hspace{-.5cm}\sum_{
    \substack{\text{$\sigma_S$ compatible with}\\ J_i\leq j \text{ and with }\sigma_A}}\hspace{-.5cm}
    \P[\sigma_S\;|\;\sigma_A]\cdot\P[J_i<j\;|\;\sigma_S]
    \geq \sum_{\sigma_S} \P[\sigma_S\;|\;\sigma_A]\cdot \exp(-u_{j-i}) = \exp(-u_{j-i}).
    $$
Finally,
$
\P[J_i<j]=
%\P[J_i\leq j-1]=
\prod_{\ell =j}^N 
\P[J_i<\ell \;|\;J_i \leq \ell]\geq \prod_{k\geq j-i} \exp(-u_{k})= \exp(-\sum_{k\geq j-i} u_{k} ).
$
\end{proof}

%\subsection{Expected value of the upper bound}
%\label{subsection:stochastic}

Recall from Lemma~\ref{lemma:rankdifference} that $r-l-1+x$ is an upper bound on the difference of rank of woman $w_n$'s worst and best stable husbands. 
%In this subsection, we bound the expected value of the we use that fact that each woman draws her preferences from a regular distribution.
We first bound the expected value of the random variable $x$ defined in Lemma~\ref{lemma:rankdifference}.

\begin{lemma}\label{lemma:delta}
    Given a woman $w_n$, define the random variable  $x$ as in Lemma~\ref{lemma:rankdifference}: conditioning on $\MOSM$,  $x = |\{i \leq l\;|\; \exists j > l,\;m_j \succ_{w_n} m_i \succ_{w_n} m_n\}|$ is the number of men in a better block, who can be ranked between $w_n$'s worst and best stable husbands. Then $\E[x] \leq \sum_{k \geq 1} ku_k$.
\end{lemma}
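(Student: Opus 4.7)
The plan is a pair-counting argument combined with Lemma~\ref{lemma:regular}. First, I would drop the condition $m_i \succ_{w_n} m_n$ from the definition of $x$ (a legitimate relaxation for an upper bound) and apply a union bound on the witness $j$ to obtain
\[
x \le \sum_{\substack{i<j\\ i\le l<j}} \mathbb{1}[m_j \succ_{w_n} m_i].
\]
I would then group these pairs by their gap $k=j-i$. Since the constraint $i\le l<i+k$ forces $i \in [\max(1,l-k+1),\,l]$, there are at most $k$ indices $i$ contributing for each fixed $k$, hence
\[
x \le \sum_{k\ge 1} \sum_{\substack{i:\ l-k+1\le i\le l}} \mathbb{1}[m_{i+k}\succ_{w_n} m_i].
\]

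Taking expectations, the plan is to show that each inner expectation is bounded by $k\,u_k$. By the definition of $u_k$, the unconditional probability satisfies $\P[m_{i+k}\succ_{w_n} m_i] \le u_k/(1+u_k) \le u_k$. Conditioning on $\MOSM$ (which determines $l$ and hence the summation range) and using the regularity property that underlies Lemma~\ref{lemma:regular}, one shows that $\P[m_{i+k}\succ_{w_n} m_i \mid \MOSM] \le u_k$ as well. Combined with the deterministic fact that at most $k$ indices appear in the sum, the tower rule yields
\[
\E\!\left[\sum_{i:\ l-k+1\le i\le l} \mathbb{1}[m_{i+k}\succ_{w_n} m_i]\right]
\le u_k \cdot \E\!\left[\sum_{i:\ l-k+1\le i\le l} 1\right]
\le k\, u_k,
\]
and summing over $k$ delivers $\E[x] \le \sum_{k\ge 1} k\,u_k$, as desired.

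The main obstacle is establishing the conditional probability bound above. The $\MOSM$-conditioning fixes $w_n$'s ranking of her proposers, which forms a chain with $m_n$ at the top; however, Definition~\ref{def:regular} directly handles conditioning on chains whose top is exactly $m_i$, which in our situation is typically a non-proposer with $i \le l < n$. Bridging the two requires an insertion-based argument in the spirit of the proof of Lemma~\ref{lemma:regular}: one samples $w_n$'s preference list online, inserting non-proposers one at a time into the skeleton given by $\pi_{w_n}$, and applies the regularity inequality at each step to bound the probability that the newly inserted man jumps above $m_i$. This yields the needed conditional bound $\P[m_{i+k}\succ_{w_n} m_i \mid \MOSM] \le u_k$ and completes the argument.
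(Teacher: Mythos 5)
Your decomposition by gap $k=j-i$ and the final accounting match the paper's (which sums $\sum_{i\le l}\sum_{k\ge l+1-i}u_k=\sum_{k\ge1}ku_k$), but the step you flag as ``the main obstacle'' is a genuine gap, and the relaxation you make at the very start is what creates it. Dropping the condition $m_i\succ_{w_n}m_n$ is not harmless once you condition on $\MOSM$: that condition is precisely what places $m_i$ at the \emph{top} of the chain $m_i\succ_{w_n}a_1\succ_{w_n}\dots\succ_{w_n}a_p$ of $w_n$'s proposers that $\MOSM$ fixes, and Definition~\ref{def:regular} (hence Lemma~\ref{lemma:regular}) only controls the probability that a newly inserted man beats the \emph{top} element of a conditioned chain. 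After your relaxation, the claimed bound $\P[m_{i+k}\succ_{w_n}m_i\mid\MOSM]\le u_k$ is false in general: if both $m_i$ and $m_{i+k}$ proposed to $w_n$ during Algorithm~\ref{algo:MPDA} and she ranked $m_{i+k}$ above $m_i$ (both below $m_n$), the indicator is determined by $\MOSM$ and equals $1$; and even when only $m_i$ is a proposer, the conditioning reveals that $m_i$ sits below $m_n$ (and possibly below other proposers) in her list, which can only \emph{increase} the chance that an inserted $m_{i+k}$ beats him --- regularity gives no inequality in that direction for non-top elements, so your insertion-based bridge does not go through. You also cannot fall back on the unconditional bound $\P[m_{i+k}\succ_{w_n}m_i]\le u_k$, because your summation range $(l-k,l]$ is random: you need the ``at most $k$ indices'' count and the probability bound to hold under the same conditioning. (A side remark: $\MOSM$ does not determine $l$; $l$ also depends on the realized lists of $w_1,\dots,w_{n-1}$, though this part is repairable since those are independent of $w_n$'s remaining randomness given $\MOSM$.)

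The fix is the paper's route: work per index $i\le l$ rather than per pair, keep the event as $\{\exists j>l:\ m_j\succ_{w_n}m_i\succ_{w_n}m_n\}$, condition on $\MOSM$ and $l$, and on the branch where $m_i\succ_{w_n}m_n$ apply Lemma~\ref{lemma:regular} with $m_i$ as the favorite of the subset $\{m_i,a_1,\dots,a_p\}$ to obtain
\[
\P[\exists j>l,\ m_j\succ_{w_n}m_i\succ_{w_n}m_n\mid\MOSM,l]\ \le\ 1-\exp\Bigl(-\textstyle\sum_{k\ge l+1-i}u_k\Bigr)\ \le\ \textstyle\sum_{k\ge l+1-i}u_k\,;
\]
summing over $i\le l$ then reorganizes into $\sum_{k\ge1}ku_k$, the same total you were aiming for.
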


\begin{proof}
    Start by conditioning on $\MOSM$, and let $m_n = a_1 \succ_{w} a_2 \succ_{w} \dots \succ_{w} a_p$ be $w_n$'s ranking of men who prefer her to their partner in $\mu_\M$.
    We draw the preference lists of each woman $w_i$ with $i < n$, and use Algorithm~\ref{algo:separators} to compute the value of $l$. 
    
    For each $i \leq l$, we proceed as follows. If $m_n \succ_{w_n} m_i$, then $m_i$ cannot be ranked between $w_n$'s worst and best stable partners. Otherwise, we are in a situation where $m_i \succ_{w_n} a_1 \succ{w_n} \dots \succ_{w_n} a_p$. Using notations from Lemma~\ref{lemma:regular}, $w$ prefers $m_i$ to all $m_j$ with $j > l$ if and only if $J_i < l+1$. By  Lemma~\ref{lemma:regular} this occurs with probability at least $\exp(-\sum_{k \geq l+1-i} u_k)$. Thus
    \[\textstyle
        \P[\exists j>l,\;m_j \succ_{w_n} m_i \succ_{w_n} m_n\;|\;\MOSM,l] \leq 1 - \exp(-\sum_{k \geq l+1-i} u_k) \leq \sum_{k \geq l+1-i} u_k.
    \]
    Summing this probability for all $i \leq l$, we obtain
    $\E[x\;|\;\MOSM, l] \leq \sum_{i \leq l} \sum_{k \geq l+1-i} u_k \leq \sum_{k \geq 1} k u_k$.
\end{proof}

%%%%%%%%%%%%%%%%%%%%%%%%%%%%%%%%%%%%%%%%%%%%%%%%%%%%%
%%%%%%%%%   ANALYSIS OF R-L-1
%%%%%%%%%%%%%%%%%%%%%%%%%%%%%%%%%%%%%%%%%%%%%%%%%%%%%
\subsection{Analyzing the block size}
\label{subsection:block}
% end of remove figure

\begin{lemma}\label{lemma:correctnessalgorithm2}
    Consider $w_n$  who is matched by $\mu_M$. Then  Algorithm~\ref{algo:separators} outputs the block containing $w_n$.
    %outputs an upper bound on the difference of rank between her worst and best stable partner.
\end{lemma}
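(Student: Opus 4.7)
The plan is to characterize the block containing $w_n$ intrinsically and then verify that Algorithm~\ref{algo:separators} recovers this characterization via its greedy jumps. Using the indexing convention $w_i = \mu_\M(m_i)$, the block of $w_n$ is the unique interval $(l^\star,r^\star]$ with $l^\star < n \le r^\star$ such that both $\{m_1,\dots,m_{l^\star}\}$ and $\{m_1,\dots,m_{r^\star}\}$ are prefix separators and no $t$ with $l^\star < t < r^\star$ makes $\{m_1,\dots,m_t\}$ a prefix separator. So my task reduces to showing three facts about the algorithm's output $(l,r)$: (a) both $\{m_1,\dots,m_l\}$ and $\{m_1,\dots,m_r\}$ are prefix separators; (b) no $t$ with $l < t < r$ makes $\{m_1,\dots,m_t\}$ a prefix separator; and (c) $l < n \le r$.

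For (c), I would inspect the initialization: the algorithm starts at $(l,r) = (n-1,n)$ (or the trivial initialization bracketing $n$), and the right endpoint only grows while the left endpoint only shrinks, preserving $l < n \le r$ throughout. For (a), I would use the termination condition: the algorithm halts precisely when no woman $w_i$ with $i \le r$ prefers some $m_j$ with $j > r$ to her $\mu_\M$-partner, which is the defining condition of a prefix separator; the symmetric statement handles $l$.

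The main content is (b), which encodes minimality. The greedy jump rule updates $r_0 \mapsto r_1 := \max\{j > r_0 : \exists i \le r_0,\ m_j \succ_{w_i} m_i\}$ whenever this maximum exists. Then for every intermediate value $t \in (r_0, r_1)$, the pair $(w_i, m_{r_1})$ witnessing this maximum satisfies $i \le r_0 \le t$ and $r_1 > t$, so $\{m_1,\dots,m_t\}$ fails the separator property. Iterating over all successive jumps, no $t$ with $n \le t < r$ is a prefix separator. The analogous backward argument for $l$ (each downward jump $l_0 \mapsto l_1 < l_0$ picks the minimum $l_1$ such that some $w_i$ with $i \le l_0$ prefers a man $m_j$ with $j > l_0$ to $m_i$, and any skipped $t$ inherits the same witness) establishes minimality of $l$. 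Combining (a)--(c) with the uniqueness of the finest block partition yields the lemma.

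The main obstacle is the bookkeeping in the minimality argument: I must verify carefully that the witness index $i$ produced by each jump lies in the already-settled portion (so the witness remains valid for every intermediate $t$), which follows directly from the selection rule but needs to be tracked across multiple successive jumps. A secondary subtlety is that the right and left computations interact only through the shared $\mu_\M$, so they can be analyzed independently; boundary cases $l = 0$ and $r = N$ are handled by the convention that $\{m_1,\dots,m_0\} = \emptyset$ and $\{m_1,\dots,m_N\} = \M$ are trivially prefix separators.
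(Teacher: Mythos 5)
Your proof is correct and takes essentially the same approach as the paper's: both argue that the witnessing pair found at each greedy jump simultaneously certifies that every skipped index fails to be a prefix separator, so the algorithm terminates at the largest prefix separator $\leq n-1$ on the left and the smallest prefix separator $\geq n$ on the right, which by definition bound the block containing $w_n$. Your write-up is merely a more explicitly structured version of the paper's argument.
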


\vspace{-.3cm}
\begin{algorithm}[h!]
    \begin{algoindent}
        \item \textbf{Initialization:}
        \begin{algoindent}
            \item Compute the man optimal stable matching $\mu_\M$.
            \item Relabel women so that $w_i$ denotes the wife of $m_i$ in $\mu_\M$
            %\item [The algorithm assumes that every man is married.]
            \item Pick a woman $w_n$ who is married in $\mu_\M$.
        \end{algoindent}
        \item \textbf{Left prefix separator:} initialize $l \gets n-1$
        \begin{algoindent}
            \item \textbf{while} there exists $i\leq l$ and $j>l$ such that $m_j \succ_{w_i} m_i$:
            \begin{algoindent}
                \item $l \gets \min\{i \leq l\;|\;\exists j>l,\;m_j \succ_{w_i} m_i\} - 1$.
            \end{algoindent}
        \end{algoindent}
        \item \textbf{Right prefix separator:} initialize $r \gets n$.
        \begin{algoindent}
            \item \textbf{while} there exists $j> r$ and $i\leq r$ such that $m_j \succ_{w_i} m_i$:
            \begin{algoindent}
                \item $r \gets \max\{j > r\;|\;\exists i \leq r, \; m_j \succ_{w_i} m_i\}$.
            \end{algoindent}
        \end{algoindent}
%        \smallskip\item \textbf{Number of men from a better block, ranked between two stable husbands:}
%        \begin{algoindent}
%            \item $x \gets |\{i \leq l\;|\; \exists j > l,\;m_j 
%\succ_{w_n} m_i \succ_{w_n} m_n\}|$
%        \end{algoindent}
        \smallskip\item \textbf{Output:} $(l,r]$.
%        the difference of rank between $w_n$'s worst and best partners is $ \leq x + r - l - 1$.
    \end{algoindent}
    \caption{Computing a block}
    \label{algo:separators}
\end{algorithm}
\vspace{-.3cm}

\begin{proof}
Algorithm~\ref{algo:separators} is understood most easily by following its execution on Figure~\ref{figure:jump}. 
Algorithm~\ref{algo:separators} applies a right-to-left greedy method to find the largest prefix separator $l$ which is $\leq n-1$. By definition of prefix separators, a witness that some $t$ is not a prefix separator is a pair $(m_j,w_i)$ where $j>t\geq i$ and  woman $w_i$ prefers man $m_j$ to her partner: $m_j >_{w_i} m_i$. Then the same pair also certifies that no $t'=t,t-1,t-2,\ldots ,i$ can be a prefix separator either, so the algorithm jumps to $i-1$ and looks for a witness again. When there is no witness, a prefix separator has been found, thus $l$ is the largest prefix separator $\leq n-1$. 
Similarly, Algorithm~\ref{algo:separators} computes the smallest prefix separator $r$ which is $\geq n$. 
Thus, by definition of blocks, $(l,r]$ is the block containing $w_n$. 
\end{proof}

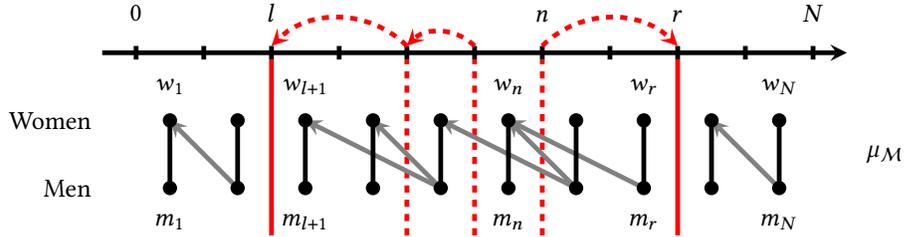
\begin{figure}[h!]
\vspace{-.2cm}
  \begin{center}
  \scalebox{.9}{\begin{tikzpicture}[line width=2pt,>=stealth]
    \def\xscale{1}
    \draw[->] (0*\xscale,3) -- (11*\xscale,3);
    \draw (0,2) node[anchor=east] {Women};
    \draw (12*\xscale,1.5) node[anchor=east] {$\mu_\M$};
    \draw (0,1) node[anchor=east] {Men};
    \foreach \x in {0.5,1.5,...,10.5}
      \draw (\x*\xscale,2.9) -- (\x*\xscale,3.1);
    \foreach \x in {4.5,5.5,6.5}
      \draw[dashed, red] (\x*\xscale, 0.3) -- (\x*\xscale, 2.9);
    \foreach \x in {2.5,8.5}
      \draw[red] (\x*\xscale, 0.3) -- (\x*\xscale, 2.9);
    \draw[->, dashed, red] (6.5*\xscale,3.1) to [bend left=45] (8.5*\xscale,3.1);
    \draw[->, dashed, red] (5.5*\xscale,3.1) to [bend right=45] (4.5*\xscale,3.1);
    \draw[->, dashed, red] (4.5*\xscale,3.1) to [bend right=45] (2.5*\xscale,3.1);
    \foreach \a/\b in {1/2,3/5,4/5,5/7,6/7,6/8,9/10}
      \draw[gray,->] (\b*\xscale, 1) -- (\a*\xscale, 2);
    \draw (1*\xscale, .5) node {$m_{1}$};
    \draw (3*\xscale, .5) node {$m_{l+1}$};
    \draw (6*\xscale, .5) node {$m_{n}$};
    \draw (8*\xscale, .5) node {$m_{r}$};
    \draw (10*\xscale, .5) node {$m_{N}$};
    \draw (1*\xscale, 2.5) node {$w_{1}$};
    \draw (3*\xscale, 2.5) node {$w_{l+1}$};
    \draw (6*\xscale, 2.5) node {$w_{n}$};
    \draw (8*\xscale, 2.5) node {$w_{r}$};
    \draw (10*\xscale, 2.5) node {$w_{N}$};
    \foreach \x in {1,...,10} {
      \fill (\x*\xscale,2) circle (3pt);
      \fill (\x*\xscale,1) circle (3pt);
      \draw[line width=2pt] (\x*\xscale,1) -- (\x*\xscale,2);
    }
    \draw[anchor=south] (\xscale*10.5,3.3) node {$N$};
    \draw[anchor=south] (\xscale*8.5,3.3) node {$r$};
    \draw[anchor=south] (\xscale*6.5,3.3) node {$n$};
    \draw[anchor=south] (\xscale*2.5,3.3) node {$l$};
    \draw[anchor=south] (\xscale*0.5,3.3) node {$0$};
  \end{tikzpicture}}
  \end{center}
  \vspace{-.3cm}
  \caption{Computing the block containing woman $w_n$. The vertical black edges correspond to the men-optimal stable matching $\mu_\M$. There is a light gray arc $(m_j,w_i)$ if $j>i$ and  woman $w_i$ prefers man $m_j$ to her partner: $m_j \succ_{w_i} m_i$. The prefix separators correspond to the solid red vertical lines which do not intersect any gray arc.  Algorithm~\ref{algo:separators} applies a right-to-left greedy method to find the largest prefix separator $l$ which is $\leq n-1$, jumping from dashed red line to dashed red line, and a similar left-to-right greedy method again to find the smallest prefix separator $r$ which is $\geq n$. This determines the block $(l,r]$ containing $n$.}
  \vspace{-.2cm}
  \label{figure:jump}
\end{figure}

\vspace{-.2cm}
\begin{definition}\label{def:domination}
Let $X$ be the random variable defined as follows.  Let $(\Delta_t)_{t \geq 0}$ denote a sequence of i.i.d.r.v.'s taking non-negative integer values with the following distribution:
\[\forall \delta > 0, \quad\P[\Delta_t < \delta] = \exp\left(\textstyle-\sum_{k \geq \delta} k u_k\right)\]
Then $X = \Delta_0+\Delta_1+\cdots +\Delta_{T-1} $, where $T$ is the first $t\geq 0$ such that $\Delta_t=0$.
 \end{definition}
 
%\vspace{-.3cm}

The proofs of the following Lemmas can be found in Appendix~\ref{section:geom_proofs}.

\begin{restatable}{lemma}{lemmadomination}\label{lemma:domination}
    Given a woman $w_n$, let $(l,r]$ denote the block containing $n$. Conditioning on $\MOSM$,  $l$ and $r$ are integer random variable, such that $r-n$ and $n-1-l$ are stochastically dominated by $X$.
\end{restatable}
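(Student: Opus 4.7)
The plan is to trace the execution of Algorithm~\ref{algo:separators} and show that each successive jump of its right (resp.\ left) loop is stochastically dominated by one copy of $\Delta$, conditionally on the history; a standard coupling then yields $r - n \leq X$ and $n - 1 - l \leq X$ almost surely. Throughout the argument I condition on $\MOSM$. After this conditioning, the quantities $J_i = \max\{j \geq i : m_j \succeq_{w_i} m_i\}$ are mutually independent across $i$ (they depend on disjoint women's preferences), and each $J_i$ enjoys the tail bounds of Lemma~\ref{lemma:regular}, namely $\P[J_i < i + k] \geq \exp(-\sum_{\ell \geq k} u_\ell)$ and $\P[J_i < i+k \mid J_i < i+k+1] \geq \exp(-u_k)$.

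I first treat the right loop. Let $r_0 = n < r_1 < \cdots < r_T = r$ be its trajectory and $\Delta_t = r_{t+1} - r_t$ the consecutive jumps. The update rule gives $r_{t+1} = \max_{i \leq r_t} J_i$, so the history at time $t$ reveals the $r_s$'s and implies the marginal constraints $J_i \leq r_t$ for all $i \leq r_{t-1}$ (with the convention $r_{-1} = 0$), while leaving the variables $J_i$ with $i \in (r_{t-1}, r_t]$ completely fresh. Hence the event $\Delta_t < \delta$ reduces to $J_i < r_t + \delta$ for every $i \in (r_{t-1}, r_t]$, and conditional independence together with Lemma~\ref{lemma:regular} yield
\[
\P[\Delta_t < \delta \mid \text{history}] \;\geq\; \prod_{i=r_{t-1}+1}^{r_t} \exp\!\left(-\sum_{\ell \geq r_t - i + \delta} u_\ell\right) \;\geq\; \exp\!\left(-\sum_{\ell \geq \delta} \ell\, u_\ell\right) \;=\; \P[\Delta < \delta],
\]
the last step being a one-line swap of summations that counts each $u_\ell$ at most $\ell$ times.

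The left loop is treated analogously but with an additional layer of conditioning: by the time one reaches step $t$, previous iterations have already revealed the marginal inequalities $J_i \leq l_{t-1}$ for every $i \leq l_t$. I handle this by applying the conditional form of Lemma~\ref{lemma:regular} iteratively to bound $\P[J_i \leq l_t \mid J_i \leq l_{t-1}]$ from below, and then multiplying over the relevant range $i \leq l_t - \delta + 1$. After the same summation swap, each left jump is once more dominated by $\Delta$ conditionally on the history.

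With per-step domination in place, I couple each $\Delta_t$ with an independent $\Delta_t^{\mathrm{iid}} \sim \Delta$ satisfying $\Delta_t \leq \Delta_t^{\mathrm{iid}}$ almost surely. The algorithm then terminates no later than the first time $\Delta_t^{\mathrm{iid}} = 0$, which is exactly the stopping time $T$ defining $X$, so summing yields both of the claimed dominations $r - n \leq X$ and $n - 1 - l \leq X$. The main obstacle is the bookkeeping of the history in the left loop: one must verify that the only information about $J_i$ carried over to step $t$ for the indices that matter is the marginal inequality $J_i \leq l_{t-1}$, so that the conditional independence of distinct $J_i$'s (and hence the product bound above) survives the iterated conditioning. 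The right loop, by contrast, reveals no prior information on the fresh indices $i \in (r_{t-1}, r_t]$ and is therefore easier to handle directly.
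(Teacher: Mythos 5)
Your proposal is correct and follows essentially the same route as the paper's proof: trace the jumps of Algorithm~\ref{algo:separators}, dominate each jump by $\Delta_t$ conditionally on the history via Lemma~\ref{lemma:regular} (using the unconditional tail bound for the fresh indices in the right loop and the conditional form $\P[J_i \leq l_t \mid J_i \leq l_{t-1}]$ for the left loop), swap summations to count each $u_\ell$ at most $\ell$ times, and sum up to the stopping time. The bookkeeping point you flag for the left loop is exactly the one the paper handles, so nothing is missing.
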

%\begin{restatable}{lemma}{lemmadominationr}\label{lemma:dominationr}
%    Given a woman $w_n$, let $(l,r]$ denote the block containing $n$. Conditioning on $\MOSM$,  $r$ is an integer random variable, such that $r-n$ is stochastically dominated by $X$.
%\end{restatable}

%\vspace{-.2cm}
%\begin{restatable}{lemma}{lemmadominationl}\label{lemma:dominationl}
%    Given a woman $w_n$, let $(l,r]$ denote the block containing $n$. Conditioning on $\MOSM$,  $l$ is an integer random variables, such that  $n-1-l$ is stochastically dominated by $X$.
%\end{restatable}

%\vspace{-.2cm}
%\begin{proof}
%The proof is similar to the proof of Lemma~\ref{lemma:dominationr}, and can be found in Appendix~\ref{section:geom_proofs}
%\marginpar{Move this proof to appendix?}
%\end{proof}

\vspace{-.2cm}
\begin{restatable}{lemma}{lemmameandomination}
    \label{lemma:meandomination}
    We have $\E[X] \leq \exp(\sum_{k\geq 1} k u_k) \sum_{k \geq 1} k^2 u_k$.
\end{restatable}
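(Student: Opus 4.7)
My plan is to compute $\E[X]$ by first reducing it to $\E[\Delta_0]/\P[\Delta_0=0]$ using the geometric stopping structure, and then bounding $\E[\Delta_0]$ using $1-e^{-x}\le x$. Let $p=\P[\Delta_t=0]=\exp(-\sum_{k\ge 1}k u_k)$, which is the value at $\delta=1$ of the cumulative distribution given in Definition~\ref{def:domination}. Since the $\Delta_t$ are i.i.d., $T$ is a geometric random variable: $\P[T=t]=(1-p)^t p$, and more usefully $\P[T>t]=(1-p)^{t+1}$, i.e. $\P[t<T]=(1-p)^{t+1}$ once we note that $\{t<T\}=\{\Delta_0\ne 0,\dots,\Delta_t\ne 0\}$.

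Next I would compute $\E[X]$ by writing $X=\sum_{t\ge 0}\Delta_t\mathbb 1[t<T]$ and exchanging sum and expectation. Since $\{t<T\}=\{\Delta_0\ne 0,\dots,\Delta_t\ne 0\}$ involves $\Delta_t$ only through the factor $\mathbb 1[\Delta_t\ne 0]$, and since $\Delta_t\cdot\mathbb 1[\Delta_t\ne 0]=\Delta_t$, independence of the $\Delta_s$ gives
\[
\E[\Delta_t\,\mathbb 1[t<T]]
=\E[\Delta_t]\cdot\prod_{s<t}\P[\Delta_s\ne 0]
=\E[\Delta_0]\cdot(1-p)^t.
\]
Summing the geometric series yields $\E[X]=\E[\Delta_0]/p$.

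It remains to bound $\E[\Delta_0]$. Using the formula $\E[\Delta_0]=\sum_{\delta\ge 1}\P[\Delta_0\ge\delta]$ and the distribution of $\Delta_0$ from Definition~\ref{def:domination},
\[
\E[\Delta_0]=\sum_{\delta\ge 1}\bigl(1-\exp(-\textstyle\sum_{k\ge\delta}k u_k)\bigr)
\le\sum_{\delta\ge 1}\sum_{k\ge\delta}k u_k
=\sum_{k\ge 1}k^2 u_k,
\]
where the inequality is $1-e^{-x}\le x$ and the last equality swaps the two sums (each $k u_k$ appears for $\delta=1,\dots,k$). Combining, $\E[X]=\E[\Delta_0]/p\le \exp(\sum_{k\ge 1}k u_k)\sum_{k\ge 1}k^2 u_k$, which is the desired bound.

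There is no real obstacle here; the only delicate point is justifying the factorization of $\E[\Delta_t\,\mathbb 1[t<T]]$, which works cleanly because $\Delta_t\cdot\mathbb 1[\Delta_t\ne 0]=\Delta_t$, so the constraint $\Delta_t\ne 0$ embedded in $\{t<T\}$ can be dropped, leaving an expression that is a product of independent factors.
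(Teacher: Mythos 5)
Your proof is correct and follows essentially the same route as the paper: both reduce $\E[X]$ to $\E[\Delta_0]/\P[\Delta_0=0]$ using the geometric stopping structure, and then bound $\E[\Delta_0]$ via the tail-sum formula together with $1-e^{-x}\le x$, swapping the order of summation to get $\sum_{k\ge 1}k^2u_k$. The only difference is cosmetic: the paper invokes Wald's equation for the first step, whereas you verify the identity $\E[X]=\E[\Delta_0]/p$ by a direct factorization, which is a worthwhile precaution since the event $\{t<T\}$ depends on $\Delta_t$ itself and your observation that $\Delta_t$ times the indicator of $\{\Delta_t\neq 0\}$ equals $\Delta_t$ is exactly what makes the conclusion legitimate.
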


%\vspace{-.2cm}
%\begin{proof}
%    The proof relies on Wald's equation, and can be found in Appendix~\ref{section:geom_proofs}.
    %\marginpar{Move proof to appendix?}
%\end{proof}

%%%%%%%%%%%%%%%%%%%%%%%%%%%%%%%%%%%%%%%%%%%%%%%
%%%%%%  PUT ALL TOGETHER
%%%%%%%%%%%%%%%%%%%%%%%%%%%%%%%%%%%%%
\subsection{Putting everything together}
\label{subsection:puttingeverythingtogether}
\begin{proof}[Proof of Theorem~\ref{thm:geom}] 
Without loss of generality, we may assume that $N = M \leq W$ and that each man is matched in the man-optimal stable matching $\mu_\M$: to see that, for each man $m$ we add a ``virtual'' woman $w$ as his least favorite acceptable partner, such that $m$ is the only acceptable partner of $w$. A man is single in the original instance if and only if he is matched to a ``virtual'' woman in the new instance. 

We start our analysis by conditioning on the random variable $\MOSM$ (see Definition~\ref{def:MOSM}). Algorithm~\ref{algo:MPDA} then computes $\mu_M$, which matches each woman to her worst stable partner. 
Up to \emph{relabeling} the women, we may also assume that for all $i\leq N$ we have $w_i:=\mu_{\M}(m_i)$.

Let $w_n$ be a woman who is married in $\mu_\M$. From there, we use Lemma~\ref{lemma:rankdifference}  to  bound the difference of rank between her worst and best stable partner by $x+r-l-1=x+(r-n)+(n-l-1)$. We bound the expected value of $x$ using Lemma \ref{lemma:delta}, and the expected values of both $r-n$ and $n-l-1$ using Lemmas~\ref{lemma:domination} %\ref{lemma:dominationr},~\ref{lemma:dominationl}
and~\ref{lemma:meandomination}.
\end{proof}
\vspace{-.2cm}

\section{Unique stable partner: Proof of Theorem~\ref{thm:geomunif}}
\label{section:geomunif}
\newcommand{\OK}{\mathcal K}

%In this section, we prove Theorem~\ref{thm:geomunif}.

\thmgeomunif*

\vspace{-.2cm} The proof first continues the analysis of blocks started in Section \ref{subsection:block}. When $u_k=\exp(-\Omega(k))$, it can be tightened with a mathematical analysis to prove (Corollary~\ref{corollary:failure}) that with high probability, no block size exceeds $O(\log n)$, and that in addition, in her preference list no woman switches the relative ordering of two men $m_i$ and $m_{i+\Omega(\log n)}$. The rest of the proof assumes that those properties hold.
The only remaining source of randomness comes from the preference lists of men.

The intuition is that it is hard for man $m_i$ to have another stable partner from his block. Because of the random uniform assumption on $m_i$'s preference list, between $w_i$ and the next person from his block, his list is likely to have some woman $w_j$ with $j\gg i$. Woman $w_j$ likes $m_i$ better than her own partner, because of the no-switching property, and $m_i$ likes her better than his putative second stable partner, so they form a blocking pair preventing $m_i$'s second stable partner. Transforming that intuition into a proof requires care because of the need to condition on several events.

%\begin{corollary}
%\label{cor:manipulability}
%  When matching N men having uniform popularity preferences,
%  with N women having geometric popularity preferences,
%  using the men-proposing deferred acceptance algorithm,
%  it is optimal for all men and nearly all women to report their true preferences.
%  More precisely, the expected fraction of women who have an incentive not
%  to report their true preferences vanishes as $\mathcal O(\ln^2 N / N)$.
%\end{corollary}
%\begin{proof}
%  It is possible for someone to manipulate a stable matching procedure if and 
%  only if they are not matched to their favorite stable partner
%  \cite{demange1987further}. When using the men-proposing deferred acceptance
%  procedure, each man is married to his favorite stable wife, and each
%  woman is married to her least preferred stable husband. The men have no incentive to manipulate. 
%  We show in the proof of Lemma~\ref{lemma:geounifcond} that the
%  expected number of women with more than one stable husband is at most
%  $\mathcal O(\ln^2 N)$. Thus only a fraction $\mathcal O(\ln^2 N / N)$ of women
%  can manipulate.
%\end{proof}

%\subsection{Typical instances}

\begin{restatable}{definition}{defOK}
\label{def:OK}
Let $C=O(1)$ be a constant to be defined later. Let $\OK$ denote the event that every block  has size at most $C \ln N$, and every woman prefers man $m_i$ to  man $m_{i+k}$% for every $i$
, whenever $k\geq C \ln N$.
%\begin{itemize}
%    \item every block  has size at most $C \ln N$, and
%    \item every woman prefers man $m_i$ to  man $m_{i+k}$ for every $i$, whenever $k\geq C \ln N$.
%\end{itemize} 
\end{restatable}

\vspace{-.2cm}
The proofs of the following Lemmas can be found in Appendix~\ref{section:geomunif_proofs}.

\vspace{-.2cm}
\begin{restatable}{lemma}{lemmaexpblock}
    \label{lemma:expblock}
    Assume that women have preferences drawn from regular distributions such that
    $u_k = \exp(-\Omega(k))$.
    Then, the size of each man's block is a random variable with an exponential tail: %for every $i$,
    $$
    \forall i,\quad
    \P[\text{block containing }m_i\text{ has size }\geq k]=\exp(-\Omega(k)).
    $$ 
\end{restatable}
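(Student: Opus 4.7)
The plan is to reduce the statement to a moment-generating-function estimate on the random variable $X$ of Definition~\ref{def:domination}, and then invoke the stochastic domination from Lemma~\ref{lemma:domination}.

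First I would translate the claim about $m_i$'s block into the setting of Section~\ref{section:geom}. After the standard relabeling so that $w_i = \mu_\M(m_i)$, the block containing $m_i$ coincides with the block $(l,r]$ produced by Algorithm~\ref{algo:separators} applied to $w_i$, and its size equals $(r-i)+(i-1-l)+1$. By Lemma~\ref{lemma:domination}, conditionally on $\MOSM$, both $r-i$ and $i-1-l$ are stochastically dominated by $X$. A union bound together with averaging over $\MOSM$ yields
\[
    \P[\text{block size} \geq k] \;\leq\; 2\,\P\!\left[X \geq (k-1)/2\right],
\]
so it suffices to prove $\P[X \geq k] = \exp(-\Omega(k))$.

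Next I would analyse $X = \Delta_0 + \cdots + \Delta_{T-1}$ via exponential moments. Writing $u_k \leq C_1 e^{-c_1 k}$, the tail sum $\sum_{k \geq \delta} k u_k$ decays exponentially in $\delta$. This gives two facts: (i) the atom $p_0 := \P[\Delta_t = 0] = \exp(-\sum_{k \geq 1} k u_k)$ is a positive constant; and (ii) $\P[\Delta_t \geq \delta] \leq \sum_{k \geq \delta} k u_k$ is exponentially small in $\delta$, so the moment generating function $M(\lambda) := \E[e^{\lambda \Delta_t}]$ is finite on some neighborhood of $0$. The stopping time $T$ is geometric with success probability $p_0$, and conditionally on $T = k$ the variables $\Delta_0, \ldots, \Delta_{k-1}$ are i.i.d. copies of $\Delta$ conditioned on being positive, so
\[
    \E[e^{\lambda X}] \;=\; \sum_{k \geq 0} p_0 (1-p_0)^k \, \tilde M(\lambda)^k \;=\; \frac{p_0}{1 - (1-p_0)\, \tilde M(\lambda)},
\]
where $\tilde M(\lambda) := \E[e^{\lambda \Delta_t} \mid \Delta_t > 0]$. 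Since $\tilde M$ is continuous with $\tilde M(0) = 1$, a small enough $\lambda > 0$ ensures $(1-p_0)\tilde M(\lambda) < 1$, so the MGF is finite. A standard Chernoff bound then yields $\P[X \geq k] \leq e^{-\lambda k} \E[e^{\lambda X}] = \exp(-\Omega(k))$, which combined with the reduction above finishes the proof.

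The main obstacle is not deep; it is simply careful bookkeeping of the exponential-decay constants from the hypothesis $u_k = \exp(-\Omega(k))$ through the MGF estimate, and verifying that the conditional domination provided by Lemma~\ref{lemma:domination} integrates properly to an unconditional tail bound on the block size (which is immediate since $X$ is constructed independently of $\MOSM$).
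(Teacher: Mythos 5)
Your proposal is correct and follows essentially the same route as the paper: the same reduction via Lemma~\ref{lemma:domination} and a union bound, followed by showing that $X$ has a finite exponential moment; your moment-generating-function computation $\E[e^{\lambda X}] = p_0/(1-(1-p_0)\tilde M(\lambda))$ is exactly the paper's probability-generating-function identity $G_X(z)=\P[\Delta_0=0]/(1+\P[\Delta_0=0]-G_{\Delta_0}(z))$ under the substitution $z=e^{\lambda}$, and the concluding continuity argument is the same.
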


%\begin{proof}
%    The proof is technical and can be found in Appendix~\ref{section:expblock}.
%\end{proof}

\vspace{-.2cm}
\begin{restatable}{corollary}{corollaryfailure}
  \label{corollary:failure}
  One can choose $C = \mathcal O(1)$ such that the probability of event $\OK$ is $\geq 1 - 1/N^2$.
\end{restatable}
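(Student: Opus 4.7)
The plan is to bound $\P[\OK^c]$ by a union bound that decomposes the failure event into two parts: (a) some block has size exceeding $C\ln N$, and (b) some woman $w$ ranks $m_{i+k}$ above $m_i$ for some $i$ and some $k \geq C\ln N$. I will arrange for each contribution to be at most $\tfrac{1}{2N^2}$ by taking $C$ to be a sufficiently large constant.

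For part~(a), I invoke Lemma~\ref{lemma:expblock}: the hypothesis $u_k = \exp(-\Omega(k))$ supplies absolute constants $A, B > 0$ such that for every $i$,
\[
\P[\text{the block containing }m_i\text{ has size} \geq k] \leq A\,e^{-Bk}.
\]
Since every block is the block of at least one of the $N$ men, a union bound over $i$ gives
\[
\P[\exists \text{ block of size} > C\ln N] \leq N \cdot A \cdot N^{-BC} = A\,N^{1-BC},
\]
which is at most $\tfrac{1}{2N^2}$ once $C \geq 4/B$ and $N$ is sufficiently large.

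For part~(b), I bound the inversion probability directly from the definition of $u_k$. Under the completeness assumption of Theorem~\ref{thm:geomunif}, every woman finds every man acceptable, so $\P[m_{i+k} \succ_w m_i] + \P[m_i \succ_w m_{i+k}] = 1$ and hence
\[
\P[m_{i+k} \succ_w m_i] \leq \frac{u_k}{1+u_k} \leq u_k \leq A'\,e^{-B'k}
\]
for absolute constants $A', B' > 0$ supplied by the hypothesis $u_k = \exp(-\Omega(k))$. A union bound over the $N$ women, the $N$ starting indices $i$, and the at most $N$ gap values $k \geq C\ln N$ then yields
\[
\P[\exists w,i,k\geq C\ln N:\; m_{i+k}\succ_w m_i] \leq N^3 \cdot A' \cdot N^{-B'C} = A'\,N^{3-B'C},
\]
which is at most $\tfrac{1}{2N^2}$ once $C \geq 6/B'$.

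Choosing $C = \max(4/B,\, 6/B')$ and summing the two bounds gives $\P[\OK] \geq 1 - 1/N^2$, as required. The argument itself is a routine union bound; the only point that needs care, which I would flag without belaboring, is that the constants appearing in Lemma~\ref{lemma:expblock} and in the decay of $u_k$ are absolute in $N$, which is what permits $C$ to be an absolute constant independent of the market size.
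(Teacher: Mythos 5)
Your proof is correct and follows essentially the same route as the paper: a union bound splitting the failure of $\OK$ into oversized blocks (handled via the exponential tail of Lemma~\ref{lemma:expblock} and a union bound over the at most $N$ blocks) and long-range inversions (handled by bounding $\P[m_{i+k}\succ_w m_i]$ by $u_k$ and a union bound over the at most $N^3$ triples), then taking $C$ large enough to make each contribution at most $1/(2N^2)$. The only cosmetic difference is that you spell out the intermediate bound $\P[m_{i+k}\succ_w m_i]\leq u_k/(1+u_k)$ and make the dependence of $C$ on the decay constants explicit, which the paper leaves implicit.
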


%\begin{proof}
%    The proof is straightforward and can be found in Appendix~\ref{section:expblock}.
%\end{proof}

%\subsection{Blocking pairs}

\vspace{-.2cm}
\begin{restatable}{lemma}{lemmaonehusband}
  \label{lemma:onehusband}
  Fix $i\in [1,N]$.
  Conditioning on $\MOSM$ and on $\OK$,
  the probability that woman $w_i$ has more than
  one stable husband is at most $3 C \ln N/(N + C \ln N - i)$.
\end{restatable}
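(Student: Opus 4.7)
The plan is to bound, conditional on $\MOSM$ and $\OK$, the probability that some man $m_{j^*}\neq m_i$ can be stably paired with $w_i$, and to union-bound over the candidate $j^*$'s. By Lemma~\ref{lemma:block} any such $m_{j^*}$ lies in $w_i$'s block $(l,r]$, and under $\OK$ this block contains at most $C\ln N$ indices, so there are at most $C\ln N$ candidates.

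Fix such a candidate $j^*$. A first structural observation drawn from $\OK$ is that $\sigma_{m_{j^*}}$, the set of women that $m_{j^*}$ ranks above $w_{j^*}$ (a piece of data read off $\MOSM$), is contained in $\{w_1,\dots,w_{j^*+C\ln N-1}\}$. Indeed, any $w_{k'}\in\sigma_{m_{j^*}}$ rejected $m_{j^*}$ during the men-proposing deferred acceptance in favor of the man she ultimately keeps, namely $m_{k'}=\mu_\M(w_{k'})$, so $m_{k'}\succ_{w_{k'}}m_{j^*}$; but $\OK$ forces $m_{j^*}\succ_{w_{k'}}m_{k'}$ as soon as $k'\geq j^*+C\ln N$, a contradiction. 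In particular every ``safe'' woman $w_k$ with $k\geq j^*+2C\ln N$ lies in the uniformly random tail of $m_{j^*}$'s preferences below $w_{j^*}$.

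Next, suppose some stable matching $\mu$ satisfies $\mu(m_{j^*})=w_i$. Any woman $w_k$ with $w_{j^*}\succ_{m_{j^*}}w_k\succ_{m_{j^*}}w_i$ must then fail to form a blocking pair with $m_{j^*}$ in $\mu$. I claim this forces $k<j^*+2C\ln N$: otherwise $w_k$'s block has size $\leq C\ln N$ with left endpoint $l'\geq k-C\ln N$, so $\mu(w_k)=m_{k''}$ with $k''\geq l'+1\geq j^*+C\ln N+1$; $\OK$ then yields $m_{j^*}\succ_{w_k}m_{k''}=\mu(w_k)$, which combined with $w_k\succ_{m_{j^*}}w_i=\mu(m_{j^*})$ makes $(m_{j^*},w_k)$ a blocking pair of $\mu$, contradicting stability.

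Conditional on $\MOSM$ and on $\OK$, the tail of $m_{j^*}$'s list below $w_{j^*}$ is a uniform random permutation of $\W\setminus\sigma_{m_{j^*}}\setminus\{w_{j^*}\}$, a set that contains $w_i$ and (by the structural observation) at least $N-j^*-2C\ln N+1$ safe women. The contrapositive of the previous paragraph says that for $w_i$ to be a stable partner of $m_{j^*}$, $w_i$ must precede every safe woman in this permutation, which happens with probability at most $1/(N-j^*-2C\ln N+2)$. Using $j^*\leq r$ and $r\leq i+C\ln N-1$ (from $\OK$) and union-bounding over the at most $C\ln N$ candidates yields a bound of order $C\ln N/(N-i-\Theta(\ln N))$, which, after absorbing the implicit constant into $C$, matches the claimed $3C\ln N/(N+C\ln N-i)$; the only regime where aligning the two forms could fail is $i=N-\mathcal O(\ln N)$, where the claimed bound already exceeds $1$ and the statement is trivial. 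The main obstacle is precisely the a priori unknown size of $\sigma_{m_{j^*}}$: without control on it the uniform tail of $m_{j^*}$'s list could miss most safe women and wreck the probability estimate, and the structural observation drawn from $\OK$ is exactly the tool that rules this out.
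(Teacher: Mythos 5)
Your proof is correct in its essentials but takes a genuinely different route from the paper's. The paper works entirely with the single man $m_i=\mu_\M(w_i)$: it colors the women he ranks below $w_i$ red, yellow or green according to their index relative to $i$, shows that red women cannot be stable partners of $m_i$ and that if his favorite woman $w^*$ among the yellow and green ones is green then $(m_i,w^*)$ would block any stable matching giving $m_i$ a second partner, and then reads the bound off as the single ratio $|Y|/(|Y|+|G|)\leq 3C\ln N/(N+C\ln N-i)$ using only the uniformity of $m_i$'s tail. You instead enumerate the at most $C\ln N$ candidate partners $m_{j^*}$ in $w_i$'s block and, for each one, use the uniformity of $m_{j^*}$'s own tail to show that $w_i$ must precede every ``safe'' woman there, then union-bound over $j^*$. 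Both arguments use the event $\mathcal K$ for the same two purposes (bounding the block size, and turning any far-away woman sitting between the two putative stable partners in a man's list into a blocking pair); your structural observation that $\sigma_{m_{j^*}}\subseteq\{w_1,\dots,w_{j^*+C\ln N-1}\}$ is a correct and necessary extra step, which the paper's choice of working with $m_i$ gets essentially for free (there, stability of $\mu_\M$ together with $\mathcal K$ directly forces every $w_j$ with $j>i+2C\ln N$ below $w_i$). The paper's version is tighter because it needs no union bound.

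The one point to fix is quantitative. Your final estimate is of order $C\ln N/(N-i-3C\ln N+3)$, and you claim the only regime where this fails to imply the stated $3C\ln N/(N+C\ln N-i)$ is where the stated bound exceeds $1$. That is not quite right: the stated bound exceeds $1$ only when $N-i<2C\ln N$, whereas your estimate is larger than the stated one for all $N-i$ below roughly $5C\ln N$; in the window $2C\ln N\leq N-i<5C\ln N$ the stated bound lies in $[1/2,1)$ while your denominator can degenerate to a constant or even become nonpositive. This costs only a constant factor overall — replacing $3$ by a larger constant, or handling the $O(\ln N)$ largest indices $i$ by the trivial bound $1$, leaves the $O(\ln^2 N)$ computation in the proof of Theorem~\ref{thm:geomunif} untouched — but as written your argument does not establish the lemma with its stated constant for those indices.
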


%The proofs of the Lemmas \ref{lemma:expblock}, \ref{corollary:failure} and \ref{lemma:onehusband} can be found in Appendix~\ref{section:geomunif_proofs}.

\begin{proof}[Proof of  Theorem~\ref{thm:geomunif}]
As in the previous proof, in our analysis we condition on event $\MOSM$ (see Definition~\ref{def:MOSM}), i.e. on (1) the man-optimal stable matching $\mu_\M$, (2) each man's ranking of the women he prefers to his partner in $\mu_\M$, and (3) each woman's ranking of the men who prefer her to their partner in $\mu_\M$. As before, a person who is not matched in $\mu_\M$ remains single in all stable matchings, hence, without loss of generality, we assume that $M = W = N$, and that  $w_i =\mu_\M(m_i)$ for all $1 \leq i \leq N$.

Let $Z$ denote the number of women with several stable partners. We show that in expectation $Z = \mathcal O(\ln^2 N)$, hence the fraction of persons with multiple stable partners converges to 0. We separate the analysis of $Z$ according to whether event $\OK$ holds. When $\OK$ does not hold, we bound that number by $N$, so by Corollary~\ref{corollary:failure}:
$
\E[Z]\leq (1/N^2)\times N+(1-1/N^2)\times \E(Z|\OK).
$
\par\noindent
Conditioning on $\MOSM$ and switching summations, we write:
$$\textstyle
\E(Z|\OK)
=\sum_{\MOSM} \P[\MOSM] \cdot \E(Z|\OK,\MOSM)
= \sum_i\sum_{\MOSM} \P[\MOSM]\cdot \P[w_i \text{ has several stable husbands}\;|\;\OK,\MOSM]
$$
By Lemma~\ref{lemma:onehusband}, we can write:
$
\P[w_i \text{ has several stable husbands}\;|\;\OK,\MOSM] \leq 3 C \ln N/(N + C \ln N - i).
$
%Since $u_k$ has an exponential tail, we know from Section~\ref{section:geom} that for each $i$ the  size of the block containing $i$ is $O(1)$ in expectation. Here, we move to a high probability analysis and prove that with high probability over the women's preference lists given $\MOSM$, (1) every block has size $\mathcal O(\ln N)$, and (2) every woman  prefers every man $m_i$ to every man $m_j$ such that $j-i >\Omega(\ln N)$ (see Definition~\ref{def:OK}) and Lemma~\ref{corollary:failure}).
%At this point, we sampled the preferences of all women, and we know each man's ordering of the women he prefers to his partner in $\mu_\M$. For each $i \in [1,N]$, Lemma~\ref{lemma:onehusband} samples the rest of the preference list of $m_i$, and give an upper bound on the probability that $w_i$ has several stable partners. 
Hence the expected number of women who have several stable partners is at most $1/N$ plus
\[\sum_{i=1}^N \frac{3 C \ln N}{N + C \ln N - i} =\sum_{i=0}^{N-1} \frac{ 3 C \ln N }{i + C \ln N}
\leq 3 C \ln N \int_{C \log N - 1}^{C\log N-1+N}\frac{\mathrm dt}{t} = 3C \ln N \ln\left(\frac{C\log N-1+N}{C \log N - 1}\right)\]
When $N$ is large enough, we can simplify this bound to $3 C \ln^2 N$.
\end{proof}

\section{Counting stable partners: Proof of Theorem~\ref{thm:popularity}}
\label{section:deferredacceptance}
In this section we analyze the classic algorithm from~\cite{knuth1990stable} that lists all stable husbands of a woman $w$, so as to study the number of stable husbands of a woman $w$ who has popularity preferences.  

In subsection~\ref{subsection:popularity}, we prove Lemma~\ref{lemma:popularity}, which will be used to compute the probability that a proposal is ``best so far''. In subsection \ref{subsection:popularityproof}, we prove Theorem~\ref{thm:popularity} as follows. First, we condition on~$\MOSM$ (see Definition~\ref{def:MOSM}) and compute the men-optimal stable matching, then we compute the sequence of proposals received by $w$ in Algorithm~\ref{algo:MPDAextended}, and finally we apply Lemma~\ref{lemma:popularity} to upper-bound the expected number of ``best so far'' proposals.

\vspace{-.2cm}
\subsection{Enumerating stable partners}
\label{subsection:enumerate}

Starting from the men-optimal stable matching (Algorithm~\ref{algo:MPDA}), Algorithm~\ref{algo:MPDAextended} continues the execution of the deferred acceptance procedure, in which $w$ rejects every proposal she receives. Stable husbands of $w$ are ``best so far'' proposals, that is men who proposed to $w$ and were preferred to all men who proposed before them.

\vspace{-.2cm}
\begin{theorem}[Adapted from \cite{knuth1990stable}]\label{lemma:MPDAextended} Algorithm~\ref{algo:MPDAextended} outputs $w^*$'s set of stable husbands.
\end{theorem}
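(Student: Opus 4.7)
The plan is to prove both directions of set equality: every ``best so far'' proposal received by $w^*$ during Algorithm~\ref{algo:MPDAextended} corresponds to a stable husband, and conversely every stable husband of $w^*$ appears as a ``best so far'' proposal.

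For the soundness direction, suppose that at some step of the extended execution a man $m$ proposes to $w^*$ and is preferred to every man who proposed before. I would freeze the execution at this moment and define a candidate matching $\mu$ by setting $\mu(w^*)=m$, keeping every other woman matched to the best suitor she has seen so far, and then resuming men-proposing deferred acceptance from that state with $w^*$ now participating normally. By Theorem~\ref{lemma:MPDA} applied to the resumed run, this terminates in a stable matching. The only non-trivial check is that the resumed execution cannot cause $w^*$ to drop $m$: any man $m'$ who could displace $m$ from $w^*$ would have to satisfy $m' \succ_{w^*} m$; if $m'$ proposed before, he was by assumption less preferred than $m$, and if he proposes later, then at the moment he proposes he has been rejected by every woman he prefers to $w^*$, so the only way this could happen is if he had originally been accepted by some such woman and later displaced, but tracing the rejection chain shows this never affects $w^*$.

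For the completeness direction, let $m$ be a stable husband of $w^*$ in some stable matching $\mu$. By monotonicity of deferred acceptance under additional rejections (each man's set of proposals is non-decreasing as $w^*$ rejects more suitors), $m$ must eventually propose to $w^*$: indeed he already proposed to $\mu(m)=w^*$ in some execution where $w^*$ accepts him, and here he is only pushed further down his list. It remains to show that when he proposes, he is ``best so far''. Suppose for contradiction that a man $m'$ with $m' \succ_{w^*} m$ proposed earlier. Then $m'$ was rejected by every woman he prefers to $w^*$; in particular, in $\mu$ he must be matched to some $\mu(m') \succ_{m'} w^*$, so $\mu(m')$ preferred some other suitor to $m'$ during the extended execution. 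Following this rejection chain and using the standard Gale--Shapley argument that any man can only do better in the men-optimal matching than in $\mu$, I reach a contradiction with the stability of $\mu$ (a blocking pair $(m', \mu(m'))$ for $\mu$).

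The main obstacle will be the soundness argument: making rigorous the claim that ``freezing and resuming'' DA with $\mu(w^*)=m$ never unseats $m$. The cleanest way is to phrase this as an invariant on the state of the extended algorithm, namely that the multiset of rejections produced by Algorithm~\ref{algo:MPDAextended} up to the moment of $m$'s proposal is exactly realizable as a prefix of some run of ordinary DA ending in a stable matching that pairs $m$ with $w^*$. Once the invariant is set up, both directions follow by short arguments using Theorem~\ref{lemma:MPDA} and Theorem~\ref{thm:rural}.
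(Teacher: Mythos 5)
First, note that the paper does not prove this statement at all: it is imported verbatim from Knuth's book (the McVitie--Wilson ``breakmarriage'' enumeration of a woman's stable partners), so you are supplying a proof where the authors supply a citation. Your two-inclusion skeleton is the standard one, but both directions have genuine gaps as written. For soundness, the ``freeze and resume'' device is both unnecessary and unjustified. Unnecessary, because at the moment $m$ proposes to $w^*$ in Phase~1 every other man is already either held by some woman or has exhausted his list (Phase~1 maintains a single active proposer), so letting $w^*$ take $m$ leaves nothing to resume; the tentative assignment is already a complete matching. Unjustified, because Theorem~\ref{lemma:MPDA} asserts stability of the output of Algorithm~\ref{algo:MPDA} run from scratch, not of deferred acceptance restarted from an arbitrary intermediate state. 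What actually does the work is the invariant that every man has been rejected by exactly the women he strictly prefers to his current partner while every woman holds her best acceptable proposal so far; from this, any blocking pair $(m',w')$ with $w'\neq w^*$ is excluded as usual, and a blocking pair $(m',w^*)$ is excluded because $m'$ must have proposed to $w^*$ either earlier in $S$ (so $m\succ_{w^*}m'$ by the best-so-far condition) or already during Algorithm~\ref{algo:MPDA} (so $\mu_\M(w^*)\succ_{w^*}m'$, and $\mu_\M(w^*)$ is the first entry of $S$, hence $m\succeq_{w^*}\mu_\M(w^*)$). You need to state and use this invariant, and in particular you must handle the men rejected by $w^*$ during the initial run, who do not appear in $S$ at all.

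For completeness there are two concrete holes. First, Phase~1 terminates early (the \textbf{break}) as soon as the displacement chain reaches a woman who has never been matched; your monotonicity argument shows only that $m$ keeps sliding down his list, not that he reaches $w^*$ before this break fires. Ruling that out is precisely where Theorem~\ref{thm:rural} must enter (a never-matched woman is single in every stable matching, so if the chain reached her before $m=\mu(w^*)$ proposed to $w^*$, one extracts a blocking pair for $\mu$), and your proposal never invokes it. Second, your appeal to ``the standard Gale--Shapley argument that any man can only do better in the men-optimal matching than in $\mu$'' is about Algorithm~\ref{algo:MPDA} alone and does not transfer to the extended execution, where $w^*$'s wholesale rejections genuinely push men below their partners in $\mu$ once $w^*$ has rejected $\mu(w^*)$. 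The statement you need is time-indexed: for a stable matching $\mu$ with $\mu(w^*)=m$, no woman other than $w^*$ rejects her $\mu$-partner at any point \emph{before} $w^*$ rejects $m$; this is proved by a ``first bad rejection'' argument and then yields both that $m$ proposes to $w^*$ and that no $m'\succ_{w^*}m$ precedes him in $S$. The skeleton is salvageable, but the two places where real work is required --- the stability invariant at the freeze point, and the early-break/pre-emption analysis --- are exactly the places where your write-up waves its hands.
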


\vspace{-.2cm}
\begin{algorithm}[h!]
\begin{algoindent}
\item\textbf{Input:}
Preferences of men $(\succ_m)_{m \in M}$ and women $(\succ_w)_{w \in \W}$. Fixed woman $w^* \in \W$.
\item\textbf{Initialization:}
Start by executing Algorithm \ref{algo:MPDA}, if $w^*$ is unmatched then stop.
\item\textbf{Phase 1: sequence of proposals}
\begin{algoindent}
    \item Let $m\gets \mu_\M(w^*)$ be the \emph{proposer}, let $S \gets [\mu_\M(w^*)]$ be the \textbf{sequence of proposals} to $w^*$.
    \item\textbf{While} the \emph{proposer} $m$ has not proposed to every woman he finds acceptable\textbf{, do}
    \begin{algoindent}
        \item $m$ proposes to his favorite woman $w$ he has not proposed to yet.
        \item \textbf{If} $w = w^*$\textbf{:}
        \\\quad append $m$ to the sequence $S$,
        \item \textbf{Else, if} $w$ has never been matched\textbf{:} \\\quad \textbf{break} the while loop,
        \item \textbf{Else, if } $m$ is $w$'s favorite acceptable man among all proposals she received\textbf{:}
        \\\quad $w$ rejects her previous husband $m'$, accepts $m$,   the \emph{proposer} becomes $m'$
    \end{algoindent}
\end{algoindent}
\item\textbf{Phase 2: stable husbands}
\begin{algoindent}
    \item \textbf{For each} proposal $m\in S$ made to $w^*$, in order of reception\textbf{:}
    \item \begin{algoindent}
        \item \textbf{If} $m$ is the best proposition $w$ received so far\textbf{:}
        \item\quad $m$ is a stable husband of $w^*$.
    \end{algoindent} 
\end{algoindent}
\item\textbf{Output:} Set of stable husbands of $w^*$.
\end{algoindent}
\def\figurename{Algorithm}
\caption{Extended Men Proposing Deferred Acceptance.}
\label{algo:MPDAextended}
\end{algorithm}
\vspace{-.6cm}

\subsection{Popularity preferences}
\label{subsection:popularity}

    Recall that in our model for probability preferences, the set of acceptable partners is deterministic. To build her preference list, $w$ samples without replacement from her set of acceptable men, first drawing her favourite partner, then her second favourite, and so on until her least favourite acceptable partner.
    
\begin{lemma}\label{lemma:popularity}
    Assume that a woman $w$ has popularity preferences defined by $\mathcal D_w$. Conditioning on a partial ranking of acceptable men $a_1 \succ_w \dots \succ_w a_p$, the probability that $w$ rank $m$ before $a_1$ is exactly
    \[\P[m\succ_{w}a_1\;|\;a_1\succ_w \dots \succ_w a_p] = \frac{\D_w(m)}{\D_w(m) + \sum_{i=1}^p \D_w(a_i)}\]
\end{lemma}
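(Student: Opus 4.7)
The plan is to identify $w$'s sampling process with the \emph{Plackett--Luce model}---drawing partners one by one without replacement with probability proportional to $\D_w$---and to exploit its \emph{independence of irrelevant alternatives} (IIA): for any subset $Y$ of $w$'s acceptable partners, the first element of $Y$ drawn equals $y\in Y$ with probability $\D_w(y)/\sum_{z\in Y}\D_w(z)$, and conditional on that event the subsequent draws form a fresh sampling-without-replacement run on the remaining partners. I would prove this in a short paragraph by conditioning on the process's state just before the first element of $Y$ is drawn: the common denominator $\sum_{z\in R}\D_w(z)$ (over the residual set $R\supseteq Y$) cancels when one takes the relative probabilities within $Y$, leaving a factor independent of everything that happened earlier outside $Y$.

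Applying IIA iteratively, both the numerator and denominator of
\[\P[m\succ_w a_1\mid a_1\succ_w\cdots\succ_w a_p]\;=\;\frac{\P[m\succ_w a_1\succ_w\cdots\succ_w a_p]}{\P[a_1\succ_w\cdots\succ_w a_p]}\]
become telescoping products. The denominator equals $\prod_{i=1}^p \D_w(a_i)/\sum_{j\geq i}\D_w(a_j)$, obtained by requiring successively that $a_1$ appears first in $\{a_1,\dots,a_p\}$, then $a_2$ first in $\{a_2,\dots,a_p\}$, and so on. The numerator is the same product pre-multiplied by $\D_w(m)/(\D_w(m)+\sum_{i=1}^p \D_w(a_i))$, reflecting that $m$ must be drawn first among $\{m,a_1,\dots,a_p\}$; by the ``fresh run'' part of IIA the remaining chain then plays out identically. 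The telescoping factors cancel in the ratio, leaving exactly $\D_w(m)/(\D_w(m)+\sum_{i=1}^p \D_w(a_i))$, as claimed.

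The main obstacle---really the only non-trivial step---is the careful statement and proof of the IIA lemma, i.e.\ that the future of the process after the first draw from $Y$ is genuinely a fresh sampling-without-replacement run and that conditioning on the entire chain $a_1\succ_w\cdots\succ_w a_p$ (rather than on a single ``first-in'' event) is correctly decomposed. Once this standard structural property is established, the identity reduces to the one-line cancellation above.
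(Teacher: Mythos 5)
Your proof is correct and follows essentially the same route as the paper: both compute the marginal probabilities of the two chains as telescoping products $\prod_i \D_w(a_i)/\sum_{j\geq i}\D_w(a_j)$ using the restriction/IIA property of sampling without replacement proportionally to $\D_w$, and then cancel the common factors in the ratio. The paper states the restriction property in one sentence ("one can ignore each time a man not in $\{a_1,\dots,a_p\}$ is drawn") where you propose to prove it explicitly, but the argument is the same.
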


\begin{proof}
 One nice feature of popularity preferences is that to compute the probability that $a_1 \succ_w \dots \succ_w a_p$, one can ignore each time a man not in $\{a_1,\dots,a_p\}$ is drawn. We obtain
    \[\P[a_1 \succ_w \dots \succ_w a_p] = \prod_{i=1}^p \frac{\D_w(a_i)}{\sum_{j=i}^p \D_w(a_j)}.\]
%    A similar formula for the probability that $m \succ_w a_1 \succ_w \dots \succ_w a_p$ concludes the proof.
    and similarly  for the probability that $m \succ_w a_1 \succ_w \dots \succ_w a_p$. Thus,
    \[\P[m\succ_{w}a_1\;|\;a_1\succ_w \dots \succ_w a_p \succ_w w] = \frac{\P[m\succ_{w}a_1\succ_w \dots \succ_w a_p \succ_w w]}{\P[a_1\succ_w \dots \succ_w a_p \succ_w w]} = \frac{\D_w(m)}{\D_w(m) + \sum_{i=1}^p \D_w(a_i)}\]
\end{proof}

\subsection{Proof of Theorem \ref{thm:popularity}}
\label{subsection:popularityproof}

%Recall that $w$ has popularity preferences defined by $\D_w$. The preference lists of all other participants are arbitrary. The preference list of $w$ is drawn ``online'' during the execution of Algorithms~\ref{algo:MPDA} and \ref{algo:MPDAextended}, according to Lemmas~\ref{lemma:probabest} and \ref{lemma:probaonline}.
%\medbreak
%\marginpar{\small Claire: ce paragraphe n'a pas sa place ici et devra etre deplace.}
%In previous work \cite{immorlica2015incentives}, Immorlica and Mahdian studied a setting where 
%women have complete arbitrary preferences and men have ``truncated'' popularity preferences.
%The main differences between our model and the one in \cite{immorlica2015incentives} are as follows: (1) They assume that the women's preferences are arbitrary and that the men's preferences are according to popularities, whereas for us popularities are on the women's side; (2) they assume that the preference list of every man is drawn according to popularities, whereas for us a single person uses popularities, and the other preference lists are arbitrary; (3) they assume that the men's preference lists have bounded length and that the women's preference lists are complete, whereas for us the lengths of the preference lists are arbitrary. 
\medbreak

\vspace{-.2cm}
\thmpopularity*

\vspace{-.5cm}
\begin{proof}
First, observe that $w$ is matched if and only if she receives a proposal in Algorithm~\ref{algo:MPDA}, which is independent from her ordering of acceptable men. By Theorem~\ref{lemma:MPDAextended}, Algorithm~\ref{algo:MPDAextended} outputs the stable husbands of $w$, so we analyze that algorithm, which starts by a call to Algorithm \ref{algo:MPDA}, which by Theorem~\ref{lemma:MPDA}  yields matching $\mu_\M$.
%By assumption $w$ has at least one stable husband, so (by Lemma~\ref{lemma:rural}) she has a partner $x_0=\mu_\M(w)$ in matching $\mu_\M$. 

We know the preferences of everyone except $w$. We start the analysis by conditioning on the random variable $\MOSM$, i.e. on woman $w$'s ranking of the men who prefer her to $m_\M(w)$  (see Definition~\ref{def:MOSM}).  From here, observe that the execution of Algorithm~\ref{algo:MPDA}, and of the first phase of Algorithm~\ref{algo:MPDAextended} are deterministic. Let $x_0 = \mu_\M(w)$ be $w$'s worst stable husband, $K$ denote the number of proposals received by $w$ in Phase 1, and let $x_1, x_2, \dots, x_K$ denote the sequence of proposals received by $w$ during the first phase of Algorithm~\ref{algo:MPDAextended}.

%Conditioning on $\MOSM$, observe that in the rest of the execution of Algorithm~\ref{algo:MPDAextended}, the sequence $x_1, x_2, \dots, x_K$ of proposals received
%  by $w$ during the rest of the execution of Algorithm \ref{algo:MPDAextended} is determined by the preference lists of all other participants, and is independent of $w$'s own preferences. Indeed, when the proposer $m$ proposes to $w$ (denoted $w^*$ in the pseudocode of the algorithm), in the case when he is not her favorite so far she rejects him, and in the case when he is her favorite so far she only accepts his proposal just long enough to add him to the set $S$ of stable partners, then immediately breaks off the pair, and $m$ is in the same situation as if he had been rejected outright. Thus, given $\mu_\M$, $w$'s preferences only influence the definition of $S$ and not the rest of the execution. 

Let  $p_\bot$ denote the sum of popularities of proposals received by $w$ during the initial call to Algorithm~\ref{algo:MPDA}, including $\mu_\M(w)$, and let  $p_i = \D_w(x_i)$ for all $0 \leq i \leq K$. By linearity of expectations, and then using Lemma \ref{lemma:popularity}, 
\vspace{-.2cm}
\begin{eqnarray}
\E[\hbox{Nb of stable husbands of }w\,|\,\MOSM] 
&=& 1+ \textstyle\sum_{i=1}^K \P[\hbox{proposal $x_i$ is accepted by $w$} \;|\;\MOSM] \\
& = & 1 +\textstyle\sum_{i=1}^K p_i/(p_\bot+p_1+\dots+p_i).
\label{eq:1}
\end{eqnarray}
\vspace{-.2cm}
We simplify the right-hand side with a sum-integral comparison: 
\begin{equation}\label{eq:sumintegral}
\sum_{i=1}^K \frac{p_i}{p_\bot + p_1 +  \dots + p_i} \leq \sum_{i=1}^K\int_{p_\bot + p_1 +\dots+p_{i-1}}^{p_\bot+p_1+\dots+p_{i}} \frac{\mathrm dt}{t} = \ln(p_\bot + p_1 +\dots+p_K) - \ln p_\bot .
\end{equation}
We use the convexity of $t \mapsto t \ln t$ and Jensen's inequality:
\begin{equation}\label{eq:jensen}
\ln(p_\bot + p_1 +\dots+p_K) \leq\ln (K+1)+ \frac{p_\bot\ln p_\bot + p_1 \ln p_1 + \dots + p_K \ln p_K}{p_\bot + p_1 + \dots + p_K}.
\end{equation}
We now focus on the right-hand side of the equation in Theorem \ref{lemma:popularity}. 
By definition of $\mu_\W$, man $\mu_\W(w)$ is the overall best proposition received by $w$.
It is $x_0$ with probability proportional to $p_\bot$ and it is $x_i$ ($1 \leq i \leq K$) with probability
proportional to $p_i$, thus
\begin{equation}\label{eq:righthand}
\E\left[\ln(\D_w(\mu_\W(w)))\,|\,\MOSM\right]=
\frac{p_\bot\ln p_0 + p_1 \ln p_1 + \dots + p_K \ln p_K}{p_\bot + p_1 + \dots + p_K}.
\end{equation} 
Combining equations (\ref{eq:1}), (\ref{eq:sumintegral}), (\ref{eq:jensen}) and (\ref{eq:righthand}), we obtain 
$$\E[\hbox{Nb of stable husbands of }w] \leq 
1 + \ln(K+1) + \E[\ln(\D(\mu_\W(w^*)))] - \frac{p_\bot \ln p_0 - p_\bot \ln p_\bot}{p_\bot + p_1 + \dots + p_K} - \ln p_\bot,
$$
where all expectations are conditioned on $\MOSM$. 
Since $p_\bot \geq p_0$, we can write
$$
 \frac{p_\bot \ln p_0 - p_\bot \ln p_\bot}{p_\bot + p_1 + \dots + p_K} + \ln p_\bot = 
  \frac{p_\bot \ln p_0 +(p_1+\ldots +p_K) \ln p_\bot }{p_\bot + p_1 + \dots + p_K}\geq\ln p_0 = \ln \D_w(\mu_\M(w)).
$$
To conclude the proof, observe that $K+1\leq d_w$ and take expectations over $\MOSM$.
\end{proof}
\vspace{-.2cm}

%\begin{proof}
%Take the inequality of Lemma~\ref{lemma:14}  and average over %$\mu_\M$.
%\end{proof}

%\begin{proof}
%  We express $\E[Y_{w}]$ using conditional expectation  on the value of $\mu_\M$.
%  \begin{align*}
%    \E[Y_{w}] &= \sum_{\mu \text{ matching}} \P[\mu_\M = \mu] \cdot \E[Y_{w} \;|\;\mu_\M = \mu]\\
%  &\leq \sum_{\mu \text{ matching}} \P[\mu_\M = \mu] \cdot \Big(1 + \ln N + \E[\ln(\D(\mu_\W(w)))\;|\;\mu_\M = \mu] + \ln(\D(\mu(w)))\Big)\\
%  &= 1 + \ln N + \E[\ln(\D(\mu_\W(w)))] - \E[\ln(\D(\mu_\M(w)))] = 1 + \ln N + \mathbb E\left[\ln\frac{\D(\mu_\W(w))}{\D(\mu_\M(w))}\right]
%  \end{align*}
%\end{proof}
%\newpage

\section{Counting stable partners: Proof of Theorem~\ref{theorem:womanbounded}}
This section provides
a brief sketch of proof of Theorem~\ref{theorem:womanbounded}.
The full proof can be found in the appendix.
This theorem provides a bound on the expected number of stable husbands of a woman when both men and women have popularity preferences.
The bound depends on how uniform the preferences of men are (parameter $R_\M$)
and how similar the preferences of women are (parameter $Q_\W\ref{theorem:womanbounded}$).Both parameters are formally defined in the introduction.

\womanboundedtheorem*

The proof of Theorem~\ref{theorem:womanbounded} relies  on the computation of the set of stable husbands of $w$ by Algorithm~\ref{algo:MPDAextended}. 
The execution of this algorithm is considered as a stochastic process, where preferences of men are progressively revealed.

Say for simplicity that $R_\M=1$ (men have uniform preferences over women) and $Q_\W=1$ (women agree on the popularities of men).
We also rule out some events which occur with a negligible probability. Set $T=N^5$. We assume that no woman may prefer a man $h$ to another man $h'$ if  
$h'$ is more than $T$ times more popular then $h$. Such an event occurs with probability at most $\frac{1}{N^2}$.

Consider the initial husband $m$ of $w$ in the man-proposing matching $\mu_\M$.
Then $w$ cannot have any husband $T$ times less popular than $h$. What about husbands more popular than $h$?

We order men by increasing popularity $m_1,m_2,\ldots , m_N$. Let $i$ such that $m=m_{i}$. We partition the set of men in sets of exponentially increasing size,
starting with all men less or equally popular than $m$.
Let $F_0=[1,m_i]$ ,$F_1=(m_i,2*m_i]$,
$F_2=(2*m_i,4*m_i]$,..., $F_j=(2^k*i,N]$,
where $2^j*i < N \leq 2^{j+1}*i$.

Set 
$L=\frac{4\ln(N)}{\ln\left(1 + \frac{1}{R_\M} \right)}$
and $K=(T\cdot Q_\W)^L$.
For every set $F_\ell,\ell\in 1\ldots j$,
we say that \emph{there is a huge popularity gap} in  $F_\ell$ if the popularity ratio for $w$ in this interval is $\geq K$, i.e. if $\D_w(m_{2^\ell*m_i}) \leq K \D_w(m_{\min(N,2^{\ell+1}*m_i})$.

In case there is no huge popularity gap in any of the $F_\ell,\ell\in 1\ldots j$ then the upper bound follows easily since $j\leq \ln_2(N)$.

Otherwise we select the smallest $\ell$ for which there is huge popularity gap in $F_\ell$.
Denote $E_0=F_0\cup F_1\cup\ldots F_{\ell-1}$
and $E_1=F_\ell$.
Moreover, denote $E_1'\subseteq E_1$ the set of men in $E_1$
whose popularity is above $T$ times the least popular men in $E_1$ and below $T$ times the most popular man in $E_1$.

Assume that along the progressive revelation 
of the preferences of men, the process reaches a man in $E_1'$. Since men have uniform preferences over all women and since $|E_0|\geq |E_1|$
there is probability $\geq \frac{1}{2}$
that the process falls back down to $E_0$ at the next step.

Thus from $E_1'$ the process is strongly attracted down to $E_0$ and since it lasts for at most $N^2$ steps, with high probability the process will not perform more than $\mathcal{O}(\ln(N))$ consecutive steps in $E_1'$.
Since the popularity increases by a factor of at most $T$ at every step,
the top popularity stays below $T^{\mathcal{O}(\ln(N))}$
times the popularity of the most popular man in $E_0$. That leads to the bound in the theorem.

From the upper bound on the popularities,
one extracts an upper bound on the expected number of husbands (Corollary~\ref{cor:womanbounded}) thanks to the following lemma.

\begin{restatable}{lemma}{womanboundedlemma}
\label{lem:prel}
Let $w$ be a woman married to a man
$\mu_\M(w))$ in the man-proposing stable matching.
Denote $L_w$ the list of proposals received
by $w$ during the enumeration of stable husbands computed by Algorithm~\ref{algo:MPDAextended}.
Then
\[
\mathbb{E}[\text{number of stable husbands of $w$} \mid \mu_\M]
\leq
\ln(|L_w|) + \ln\left(\max_{m\in L_w} \frac{\D_w(m)}{\D_w(\mu_\M(w))}\right)\enspace.
\]
\end{restatable}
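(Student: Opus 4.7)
The plan is to follow the same argument as in the proof of Theorem~\ref{thm:popularity}, stopping at an intermediate step before specializing further. First, condition on the random variable $\MOSM$ of Definition~\ref{def:MOSM}: as observed in Subsection~\ref{subsection:popularityproof}, this makes the execution of both Algorithm~\ref{algo:MPDA} and of Phase~1 of Algorithm~\ref{algo:MPDAextended} deterministic, so the ordered list $L_w = (x_0, x_1, \ldots, x_{|L_w|-1})$ of proposals received by $w$ is a fixed sequence starting with $x_0 = \mu_\M(w)$. By Theorem~\ref{lemma:MPDAextended}, the stable husbands of $w$ are precisely the ``best so far'' proposals in $L_w$, namely those $x_i$ that $w$ prefers to every $x_j$ with $j<i$.

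Write $p_i = \D_w(x_i)$ and $S_i = p_0 + \cdots + p_i$. Using Lemma~\ref{lemma:popularity}, and noting that the part of $w$'s preference list already exposed by $\MOSM$ is disjoint from the relative ranking of $x_i$ against $x_0, \ldots, x_{i-1}$ that one is conditioning on here, the event ``$x_i$ is best so far'' has conditional probability $p_i/S_i$. Summing by linearity of expectation yields
\[
\E[\text{number of stable husbands of } w \mid \MOSM] \;=\; 1 + \sum_{i=1}^{|L_w|-1}\frac{p_i}{S_i}.
\]

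From here the same sum-integral comparison as in the proof of Theorem~\ref{thm:popularity} (applying $p_i/S_i \leq \int_{S_{i-1}}^{S_i} dt/t$ for $i \geq 1$) bounds the tail sum by $\ln(S_{|L_w|-1}/S_0) = \ln\!\left(\sum_{m\in L_w}\D_w(m)/\D_w(\mu_\M(w))\right)$, and the elementary estimate $\sum_{m\in L_w}\D_w(m) \leq |L_w|\cdot\max_{m\in L_w}\D_w(m)$ then yields the claimed bound (the residual ``$+1$'' gets absorbed into a $\ln(e\,|L_w|)$-style term, harmless in the asymptotic statement of Corollary~\ref{cor:womanbounded}). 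Taking conditional expectation with respect to $\mu_\M$ — both $|L_w|$ and $\max_{m\in L_w}\D_w(m)$ remaining random variables at that level of conditioning — concludes. The main technical subtlety is justifying the clean application of Lemma~\ref{lemma:popularity}: the information about $w$'s preferences already revealed by $\MOSM$ is $w$'s ranking of the men who prefer her to their $\mu_\M$-partner, and one must check that this conditioning does not disturb the popularity-based formula for the probability that a newcomer $x_i$ (who prefers his own $\mu_\M$-partner to $w$, hence has not been ordered by $w$ inside $\MOSM$) beats a collection of previously seen proposals; this follows from the sequential-sampling construction of popularity preferences spelled out in the proof of Lemma~\ref{lemma:popularity}.
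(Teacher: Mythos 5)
Your proof is correct and follows essentially the same route as the paper's: condition on the man-optimal matching so that Phase~1 of Algorithm~\ref{algo:MPDAextended} becomes deterministic, invoke Lemma~\ref{lemma:popularity} for the ``best so far'' probabilities, apply the sum-integral comparison, and bound the total popularity of $L_w$ by $|L_w|$ times the maximum (the paper denominates by $p_\bot$ and then uses $p_\bot\geq \D_w(\mu_\M(w))$, which is only a cosmetic difference from your $S_0=p_0$). The residual additive constant you flag is also implicitly present in the paper's own chain of inequalities and, as you note, is immaterial for Corollary~\ref{cor:womanbounded}.
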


%\section*{Conclusion}

The bound on the number of stable husbands
given in Theorem~\ref{theorem:womanbounded}
is polylog in $N$ only if the preferences of men on women are close to uniform ($R_M$ should be polylog in $N$). We conjecture that this hypothesis is not necessary.

\newpage
%\small
\bibliographystyle{alpha}
\bibliography{biblio}

\appendix

\newpage
\section{Proofs from subsection~\ref{subsection:block}}
\label{section:geom_proofs}
\lemmadomination*
%\lemmadominationr*

\begin{proof}
    Conditioning on $\MOSM$, we know the men-optimal stable matching $\mu_\M$, and each woman's ranking of the men who prefer her to their partner in $\mu_\M$. Using notations from Lemma~\ref{lemma:regular}, let $J_i = \max\{j\geq i\;|\;m_j \succeq_{w_i} m_i\}$, for all $1 \leq i \leq M$.

    %We start with a stochastic domination of $r-n$.
    From Lemma~\ref{lemma:correctnessalgorithm2}, the right separator $r$ is computed with a while loop. Let $r_0 = n$ be the initial value of $r$.
    To decide whether $r_0$ is a separator, we look at $w_n$'s preference list. Let $r_1 = J_n$ be the maximum $j \geq n$ such that $w_n$ prefer $m_j$ to $m_n$. If $r_1 = r_0$, $w_n$ prefers $m_n$ to all men $m_j$ with $j > n$, and $r_0$ is a prefix separator. Otherwise, no prefix separator can exist between $r_0$ and $r_1$. Using Lemma~\ref{lemma:regular}, $r_1-r_0$ is stochastically dominated by $\Delta_0$.
    \[\forall \delta > 0,\quad \P[r_1-r_0<\delta\;|\;\MOSM] = \P[J_n<n+\delta\;|\;\MOSM] \geq \exp(\textstyle-\sum_{k \geq \delta} u_k) \geq \P[\Delta_0 < \delta]\]
    For all $t > 0$, we proceed by induction.
    To decide whether $r_{t}$ is a separator, we look at the preference lists of $w_{1+r_{t-1}}, \dots, w_{r_t}$. Let $r_{t+1} = \max\{J_{1+r_{t-1}},\dots,J_{r_t}\}$ be the maximum $j \geq r_t$ such that a woman $w_i$ prefer $m_j$ to $m_i$, with $r_{t-1} < i \leq r_t$. If $r_{t+1} = r_t$, then $r_t$ is a prefix separator. Otherwise, no prefix separator can exist between $r_t$ and $r_{t+1}$. We show that $\Delta_t$ stochastically dominates $r_{t+1}-r_t$. 
    \begin{align*}
        \forall \delta>0,\quad \P[r_{t+1}-r_t<\delta\;|\;\MOSM,J_{n}, \dots, J_{r_{t-1}}] 
        %&=\P[\exists i\in \{1+r_{t-1},\dots,r_t\}\;,J_i>r_t+\delta\}\\
        &=\prod_{i=1+r_{t-1}}^{r_t}\!\!\P[J_i<r_t+\delta\;|\;\MOSM,J_{n}, \dots, J_{r_{t-1}}]\\
        &\geq\prod_{i=1+r_{t-1}}^{r_t}\!\!\exp(\textstyle- \sum_{k\geq r_t+\delta-i} u_k)\quad(\text{Lemma}\;\ref{lemma:regular})\\
        &\geq\exp(\textstyle- \sum_{k\geq \delta} k u_k) = \P[\Delta_t < \delta]
    \end{align*}
    Summing up to $t$ such that $r_{t+1}=r_t$ proves that $X$ stochastically dominates $r-n$.
%\end{proof} 

%\lemmadominationl*
%\begin{proof}[Proof of Lemma~\ref{lemma:dominationl}]
%    Conditioning on $\MOSM$, we know the men-optimal stable matching $\mu_\M$, and each woman's ranking of the men who prefer her to their partner in $\mu_\M$. Using notations from Lemma~\ref{lemma:regular}, let $J_i = \max\{j\geq i\;|\;m_j \succeq_{w_i} m_i\}$, for all $1 \leq i \leq M$.

    We now prove that $X$ stochastically dominates $n-1-l$. From Lemma~\ref{lemma:correctnessalgorithm2}, the left separator $l$ is computed with a while loop, and let $l_0 = n-1$ be its initial value. To decide whether $l_0$ is a prefix separator, we need to know if a woman $w_i$ prefers a man $m_j$ to her husband $m_i$, with $i \leq l_0 < j$. More formally, $l_0$ is a prefix separator if and only if $J_i \leq l_0$ for all $i \leq l_0$. Defining $l_1 = \min\{i \leq l_t+1\;|\;J_i > l_t\} - 1$, $l_1 = l_0$ if and only if $l_0$ is a separator. Using Lemma~\ref{lemma:regular}, $l_1-l_0$ is stochastically dominated by $\Delta_0$.
    \[\forall \delta > 0,\quad \P[l_0-l_1<\delta\;|\;\MOSM] = \P[J_1, \dots, J_{l_0-\delta+1} \leq l_0\;|\;\MOSM] \geq \exp(\textstyle-\sum_{k \geq \delta} u_k) \geq \P[\Delta_0 < \delta]\]
    For all $t > 0$, we proceed by induction and let $l_{t+1} = \min\{i \leq l_t+1\;|\;J_i > l_t\} - 1$. More precisely, $l_{t+1}+1$ is the minimum $i \leq l_t+1$ such that $w_i$ prefer a man $m_j$ to her husband $m_i$ with $j > l_t$. If $l_{t+1}=l_{t}$, then $l_t$ is a prefix separator, and the process stop here. Otherwise, no prefix separator can exist between $l_{t+1}$ and $l_{t}$. A crucial property is that for all $i \leq l_{t+1}$, the best man in $w_i$'s partial list is still $m_i$, hence Lemma~\ref{lemma:regular} will still be applicable the next step.
    \begin{align*}
        \forall \delta>0,\quad \P\left[l_{t}-l_{t+1}<\delta\;\middle|\;\begin{array}{l}
        J_1,\dots, J_{l_t} \leq l_{t-1}\\
        \MOSM,l_0, \dots, l_t 
        \end{array} \right]
        = &\prod_{i=1}^{l_t-\delta+1}
        P[J_i \leq l_t\;|\; J_i \leq l_{t-1}, \MOSM]\\
        \geq & \prod_{i=1}^{l_t-\delta+1}
        \exp(\textstyle- \sum_{k\geq l_t+1-i} u_k)~~~~~(Lemma~\ref{lemma:regular})\\
        \geq&\exp(\textstyle- \sum_{k\geq \delta} k u_k) = \P[\Delta_t < \delta]
    \end{align*}
    Summing up to $t$ such that $l_{t+1}=l_t$ proves that $X$ stochastically dominates $n-1-l$.
\end{proof}

\lemmameandomination*
\begin{proof}[Proof of Lemma~\ref{lemma:meandomination}]
    From Wald's equation, $\E[X] = \E[T] \cdot \E[\Delta_0]$.
    The random variable $T$ is geometrically distributed, with a success parameter $\P[\Delta_0 = 0]$, hence $E[T] = 1/\P[\Delta_0 = 0]$. Because $\Delta_0$ only takes non-negative integer values, we can compute its expectation with a sum.
    \[\E[\Delta_0] = \sum_{\delta \geq 0} \P[\Delta_0 > \delta] = \sum_{\delta \geq 0} 1 - \exp\left(\textstyle-\sum_{k>\delta}k u_k\right) \leq \sum_{\delta \geq 0} \sum_{k>\delta}k u_k = \sum_{k \geq 1} k^2 u_k\]
\end{proof}

\section{Proofs from section~\ref{section:geomunif}}
\label{section:geomunif_proofs}
\defOK*

\lemmaexpblock*
\begin{proof}
    Recall that by Lemma~\ref{lemma:correctnessalgorithm2} blocks can be computed using Algorithm~\ref{algo:separators}. Let $(l,r]$ be the block of man $m_n$. By
    Lemma~\ref{lemma:domination},
    %Lemmas~\ref{lemma:dominationr} and~\ref{lemma:dominationl},
    both $r-n$ and $n-1-l$ are stochastically dominated by the random variable~$X$. If $X$ has an exponential tail, one can conclude the proof using the union bound.
    % Thus, we now show that
    %\[\forall k \geq 0, \P[X \geq k] = \exp(-\Omega(k))\]
    
    Let $G_X(z) = \E[z^X]$ be the probability generating function of $X$, which is defined at least for all real $z$ such that $|z| < 1$. In addition if $G_X(1+\varepsilon)$ is finite for some $\varepsilon > 0$, then Markov's inequality gives
    \[\forall k \geq 0,\quad \P[X \geq k] = \P[(1+\varepsilon)^X \geq (1+\varepsilon)^k] \leq G_X(1+\varepsilon) \exp(-k\ln(1+\varepsilon)) = \exp(-\Omega(k)).\]
    Computing $G_X$ using Definition~\ref{def:domination}, and conditioning on the value of $T$.
\[G_{X}(z) = \E[z^{X}] = \sum_{t=0}^{+\infty}\P[T = t] \cdot \E\left[z^{\sum_{i=0}^{t-1}\Delta_i}\;\middle|\;\forall i \in [0,t-1],\;\Delta_i > 0 \right]\]
Using the fact that all $ \Delta_i$'s are \textit{i.i.d.} we can simplify the expectation of the product.
\[G_{X}(z) = \cdot\sum_{t=0}^{+\infty}\P[T = t] \cdot \E\left[z^{\Delta_0}\;\middle|\;\Delta_0 > 0\right]^t = G_{ T}\left(\E\left[z^{\Delta_0}\;\middle|\;\Delta_0 > 0\right]\right)\]
The conditional expectation can be expressed as follows.
\[G_{\Delta_0}(z) = \E\left[z^{\Delta_0}\right] = \P[\Delta_0 > 0] \cdot \E\left[z^{\Delta_0}\;\middle|\;\Delta_0 > 0\right] + \P[\Delta_0 = 0]\]
\[\E\left[z^{\Delta_0}\;\middle|\;\Delta_0 > 0\right] = \frac{G_{\Delta_0}(z) - \P[\Delta_0 = 0]}{\P[\Delta_0 > 0]}\]
Now let us compute the generating function of $T$.
\[G_{T}(z) = \E[z^{T}] = \sum_{t=0}^{+\infty} z^t \cdot \P[T = t] = \sum_{k=0}^{+\infty} z^t \cdot \P[\Delta_0 > 0]^t \cdot \P[ \Delta_0 = 0] = \frac{\P[\Delta_0 = 0]}{1 - z \cdot \P[\Delta_0 > 0]}\]
Combining the three previous equations we obtain
\[G_{X}(z) = \frac{\P[\Delta_0 = 0]}{1 + \P[\Delta_0 = 0] - G_{\Delta_0}(z) }\]
Because of the assumption on women's preference distributions, we have $u_k = \exp(-\Omega(k))$. Hence,
\[\forall \delta \geq 1,\quad \P[\Delta_0=\delta] = \P[\Delta_0 < \delta+1]-\P[\Delta_0<\delta] = \textstyle \exp(-\sum_{k>\delta} ku_k) (1 - \exp(-\delta u_\delta)) \leq \delta u_\delta = \exp(-\Omega(\delta))\]
Thus, the convergence radius of $G_{\Delta_0}$ is strictly greater than 1. Because $G_{\Delta_0}$ is a probability generating function, it is continuous, strictly increasing, and $G_{\Delta_0}(1) = 1$. Therefore, there exists $\varepsilon > 0$ such that $G_{\Delta_0}(1+\varepsilon) < 1+\P[\Delta_0 = 0]$, which concludes the proof.
\end{proof}

\corollaryfailure*

\begin{proof}
%  \marginpar{move to appendix}
  For the first case of failure,
  recall from Lemma~\ref{lemma:expblock} that the size of a block has an exponential tail.
  Thus we can choose $C$ such that the probability of a given block has a size greater than
  $C \log N$ is at most $1/(2N^3)$. There are at most $N$ blocks, using the union bound the
  probability that at least one has a size exceeding $C \log N$ is at most $1/(2N^2)$.
  
  For the second case of failure, notice that the probability for a woman to prefer a
  man $m_j$ to a another man $m_i$ with $j > i + C \ln N\leq j$ is at most $u_{j-i} = e^{-\Omega(j-i)} = N^{-C\Omega(1)}$.
  Thus we can choose $C$ such that the probability of this happening is smaller
  than $1/(2N^5)$.
  Using the union bound over all triples of woman/$m_i$/$m_j$, the probability
  of a failure is at most $1/(2N^2)$.
  
  Choosing $C$ maximal between the two values, and using the union
  bound over the two possible cases of failure, the probability that $\OK$
  does not holds is at most $1/N^2$.
\end{proof}

\lemmaonehusband*

\begin{figure}[h]
\vspace{-.2cm}
  \begin{center}
  \scalebox{.8}{\begin{tikzpicture}
    \def\xscale{1.5}
    \fill[red!20] (-.25*\xscale,3) -- (1.75*\xscale,3) -- (3.5*\xscale,1);
    \fill[orange!20] (1.75*\xscale,3) -- (7.75*\xscale,3) -- (3.5*\xscale,1);
    \fill[green!20] (7.75*\xscale,3) -- (9.75*\xscale,3) -- (3.5*\xscale,1);
    \fill[red!40] (-0.25*\xscale,2.8) rectangle (1.75*\xscale,3.2);
    \fill[orange!40] (1.75*\xscale,2.8) rectangle (7.75*\xscale,3.2);
    \fill[green!40] (7.75*\xscale,2.8) rectangle (9.75*\xscale,3.2);
    \fill[green!40] (9.80*\xscale,2.8) rectangle (9.90*\xscale,3.2);
    \fill[green!40] (9.95*\xscale,2.8) rectangle (10*\xscale,3.2);
    \draw[->,>=stealth] (0*\xscale,0) -- (10*\xscale,0);
    \draw (-.5*\xscale,3) node[anchor=east] {Women};
    %\draw (-.5*\xscale,2) node[anchor=east] {$\mu_\M$};
    \draw (-.5*\xscale,1) node[anchor=east] {Men};
    \foreach \x in {0,.5,...,9.5} {
      \draw (\x*\xscale,3) circle (1.5pt);
      \draw (\x*\xscale,1) circle (1.5pt);
      \draw (\x*\xscale,-.1) -- (\x*\xscale,.1);
      \draw (\x*\xscale, 1) -- (\x*\xscale, 3);
    }
    \draw[dashed,->,>=stealth, line width=2pt,red] (3.5*\xscale,1) -- (1.0*\xscale,3);
    \draw[->,>=stealth, line width=2pt,green!60!black] (3.5*\xscale,1) -- (8.0*\xscale,3);
    %\draw[->,>=stealth, line width=2pt,green!60!black] (8.0*\xscale,1) -- (6.5*\xscale,3);
    %\draw[dashed,->,>=stealth, line width=2pt,red] (6.5*\xscale,1) -- (5.5*\xscale,3);
    \draw[densely dotted,>=stealth] (6.5*\xscale,1) -- (8.0*\xscale,3);
    \draw[densely dotted,>=stealth] (3.5*\xscale,1) -- (4.0*\xscale,3);
    %\draw[dashed,->,>=stealth, line width=2pt,red] (6.5*\xscale,1) -- (8.0*\xscale,3);
    \fill[red] (2.25*\xscale,2) circle (.5cm);
    \fill[white] (2*\xscale,1.9) rectangle (2.5*\xscale,2.1);
    \draw (3.5*\xscale, .5) node {$m_i$};
    %\draw (6.5*\xscale, .5) node {$m_j$};
    %\draw (8.0*\xscale, .5) node {$m_j$};
    \draw (3.5*\xscale, 3.5) node {$w_i$};
    %\draw (4.0*\xscale, 3.5) node {$w$};
    \draw (8.0*\xscale, 3.5) node {$w^*$};
    \draw (10.2*\xscale, 2) node {$\dots$};
    \draw[dashed] (-.25*\xscale,-.2) -- (-.25*\xscale,3.7);
    \draw[dashed] (1.25*\xscale,-.2) -- (1.25*\xscale,3.7);
    \draw[dashed] (2.75*\xscale,-.2) -- (2.75*\xscale,3.7);
    \draw[dashed] (4.25*\xscale,-.2) -- (4.25*\xscale,3.7);
    \draw[dashed] (6.25*\xscale,-.2) -- (6.25*\xscale,3.7);
    \draw[dashed] (8.25*\xscale,-.2) -- (8.25*\xscale,3.7);
    \draw[dashed] (9.75*\xscale,-.2) -- (9.75*\xscale,3.7);
    \draw[arrows={<[length=.1cm]}-{>[length=.1cm]}] (1.5*\xscale,-.5) -- (3.5*\xscale,-.5);
    \draw[arrows={<[length=.1cm]}-{>[length=.1cm]}] (3.5*\xscale,-.5) -- (5.5*\xscale,-.5);
    \draw[arrows={<[length=.1cm]}-{>[length=.1cm]}] (5.5*\xscale,-.5) -- (7.5*\xscale,-.5);
    \draw (2.5*\xscale, -.8) node {$C\ln N$};
    \draw (4.5*\xscale, -.8) node {$C\ln N$};
    \draw (6.5*\xscale, -.8) node {$C\ln N$};
  \end{tikzpicture}}
  \end{center}
  \vspace{-.3cm}
  \caption{Proof of Lemma~\ref{lemma:onehusband},
  probability that $w_i$ has several stable husbands $\leq$ ratio $\frac{\colorbox{orange!30}{\dots}}{\colorbox{orange!30}{\dots} \;+\; \colorbox{green!30}{\dots}}$}
  \label{figure:uniformproposals}
  \vspace{-.5cm}
\end{figure}

\begin{proof}  
Say that a woman $w_k$ with $k\neq i$ and to whom $m_i$ prefers $w_i$ is ``red''  if $k \leq i - C \ln N$, ``yellow'' if $i - C \ln N < k \leq i + 2C \ln N$, and ``green'' if $i + 2C \ln N < k$.  
Let $R$, $Y$ and $G$ be the sets of red, yellow and green women. Women who are not colored are ranked by $m_i$ better than $w_i$, his best stable partner, so they cannot be stable partners of $m_i$.

If woman $w_i$ has another stable partner besides $m_i$, then man $m_i$ also has at least one other stable partner. 
Because of $\OK$, every red woman $w_k$ prefers $m_k$ to $m_i$. Since $m_k$ is her worst stable partner, there is no stable marriage in which $m_i$ is paired with $w_k$. Thus all stable partners of $m_i$ must be among  $ Y \cup G$.

Let $w^*$ be $m_i$'s favorite woman among  $ Y \cup G$.  We will argue that if $w^*\in G$ then $w_i$ is $m_i$'s unique stable partner.  Assume, for a contradiction, that $m_i$ has another stable partner $w$ besides $w_i$, and consider that stable matching $\mu$. By  Lemma~\ref{lemma:separator}, $w$ must belong to $i$'s block. By $\OK$ and since $w^*\in G$, $w^*$ is in a different block, so $w\neq w^*$. Consider the pair $(m_i,w^*)$. By definition of $w^*$, man $m_i$ prefers $w^*$ to $w$. By $\OK$ and definition of $G$, $w^*$ prefers $m_i$ to the man of her block to whom she is married in $\mu$. So $(m_i,w^*)$ is a blocking pair, contradicting stability of $\mu$. This proves
$$ 
\P[w_i \text{ has more than 1 stable partner}]\leq \P[w^*\in Y].
$$
Recall that $m_i$'s preferences are uniformly random. Once we condition on $\MOSM$, the preferences of $m_i$ are still uniform over all the women to whom $m_i$ prefers $w_i$. Event $\OK$ only depends on the women's preference lists, so conditioning on $\OK$ does not change that, thus
$$ \P[w^*\in Y] =  \frac{|Y|}{|Y|+|G|} \leq \frac{3C\ln N}{3C \ln N + |G|}.$$
Finally, we argue that all women $w_j$ with $j>i + 2C \ln N$ are in $G$. Consider a woman $w_j$ with $j>i + 2C \ln N$. Conditioning on $\OK$,  $w_j$ prefers $m_i$ to her partner $w_j$ in $\mu_\M$, so by stability of $\mu_\M$, man $m_i$ prefers $w_i$ to $w_j$,  so $w_j\in G$. Thus, conditioning on $\MOSM$ and $\OK$,
we have $|G| = N-i-2C\ln N$.
%$$ |G|=N + C\ln N - i.$$
%Combining yields the Lemma.
%Because of $\OK$, green women all prefer $m_i$ to their partner, and so, by stability of $\mu_\M$, man $m_i$ prefers $w_i$ to all green women.
%By construction, $|Y| \leq 3C\ln N$. 
% and $|G| = N-i-2C\ln N$.
%Let $w^*$ be $m_i$'s favourite partner among $Y \cup G$. Because $m_i$'s preferences are uniformly random, $w^*$ belongs to $Y$ with probability
%\[\P[w^*\in Y] = \frac{|Y|}{|Y|+|G|} \leq \frac{3C\ln N}{3C \ln N + |G|} = \frac{3C \ln N}{N + C\ln N - i}\]
%Assume now that $w^* \in G$, we will prove that $m_i$ and $w_i$ are matched in all stable matchings. For the sake of contradiction, assume that $m_i$ has a second stable partner $w$. Using Lemma~\ref{lemma:separator}, two people who are not in the same block cannot be matched in a stable matching. Thus $w^* \neq w$, and by definition of $w^*$ we have $w^* \succ_{m_i} w$.
%\marginpar{H: c'est $\succ_m$ non?}
%However, in this new stable matching, $w^*$ will be matched with a man $m_j$ from her block, and because $w^* \in G$ we have $j-i > C \ln N$. Using the fact that $\OK$ holds, $w^*$ prefers $m_i$ to $m_j$, hence $(m_i,w^*)$ is a blocking pair.
\end{proof}

%\newpage
\section{Tightness of Theorem~\ref{thm:popularity}}
\label{section:onewomanlb}
In \cite{knuth1990stable}, Knuth, Motwanni and Pittel prove that when all persons have complete uniform preference lists, the upper-bound of Theorem \ref{thm:popularity}  is essentially tight. Here, we give another example showing that the upper bound from Theorem~\ref{thm:popularity} is also tight when $w$ has complete popularity preferences $\D_w: m_i \mapsto \lambda^i$ with parameter $\lambda = 0.99$.

%More precisely, we study the special case where woman $w$ has complete popularity preferences $\D_w:m_i \mapsto \lambda^i$ with $0 < \lambda < 1$.
In the upper bound from Theorem~\ref{thm:popularity}, the ratio of popularity is at most $\lambda^{1-M}$, hence its logarithm is at most $(1-M)\ln\lambda$.
When $\lambda = 0.99$, we get $(1-M)\ln\lambda \approx (1-\lambda) \cdot M$, hence Theorem~\ref{thm:popularity} states that at most $\approx 1\%$ of the men are stable husbands of $w$. Lemma~\ref{lemma:popularitylb} proves that there exists an instance such that this $1\%$ upper bound is asymptotically tight.

\begin{lemma}
\label{lemma:popularitylb}
Let $w$ a woman having complete popularity preferences $\D_w: m_i \mapsto \lambda^i$ with $0 < \lambda < 1$. One can choose the preference lists of the other persons such that:
\[\E[\text{Number of stable husbands of }w] > (1-\lambda) \cdot M\]
\end{lemma}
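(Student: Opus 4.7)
The plan is to exhibit an explicit ``chain'' instance and directly compute $\E[\text{Nb of stable husbands of }w]$ via Algorithm~\ref{algo:MPDAextended} and Lemma~\ref{lemma:popularity}. The goal is to force $w$ to receive, in order, one proposal from each man $m_M, m_{M-1}, \ldots, m_1$, so that the geometric shape of $\D_w$ makes every ``best so far'' event occur with probability at least $1-\lambda$.

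I would take $M$ men $\{m_1,\dots,m_M\}$ and $M$ women $\{w,w_1,\dots,w_{M-1}\}$, with incomplete preferences arranged as a chain: $m_M$ ranks $w \succ w_{M-1}$; for $2 \leq i \leq M-1$, $m_i$ ranks $w_i \succ w \succ w_{i-1}$; $m_1$ ranks $w_1 \succ w$; $w_{M-1}$ ranks $m_M \succ m_{M-1}$; and for $1 \leq j \leq M-2$, $w_j$ ranks $m_{j+1} \succ m_j$. All other pairs are declared mutually unacceptable. Woman $w$ has the popularity preferences of the statement, $\D_w(m_i)=\lambda^i$, and finds every $m_i$ acceptable.

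First I would verify that Algorithm~\ref{algo:MPDA} outputs $\mu_\M(w)=m_M$ and $\mu_\M(w_i)=m_i$ for $i<M$: each man proposes to his top choice and is immediately accepted, with no blocking pair possible since everyone gets their favorite acceptable partner. Then I would trace Phase~1 of Algorithm~\ref{algo:MPDAextended}. After $w$ rejects $m_M$, he proposes to $w_{M-1}$, who prefers him to her current husband $m_{M-1}$ and swaps; the displaced $m_{M-1}$ then proposes to $w$ and is appended to $S$; when rejected, he moves on to $w_{M-2}$, displacing $m_{M-2}$, who proposes to $w$; and so on until $m_1$ is rejected by $w$ with no remaining acceptable woman. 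This produces exactly $S = [m_M, m_{M-1}, \ldots, m_1]$.

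By Theorem~\ref{lemma:MPDAextended}, the stable husbands of $w$ are the ``best so far'' elements of $S$. The first entry $m_M$ always qualifies, and for each $i\in\{1,\dots,M-1\}$, Lemma~\ref{lemma:popularity} (summed over all orderings of $\{m_{M-i+1},\dots,m_M\}$) yields
\[
\P[m_{M-i}\text{ is best so far in }S] \;=\; \frac{\D_w(m_{M-i})}{\sum_{j=0}^{i}\D_w(m_{M-j})} \;=\; \frac{1-\lambda}{1-\lambda^{i+1}} \;>\; 1-\lambda.
\]
Summing by linearity of expectation gives $\E[\hbox{Nb of stable husbands of }w] > 1 + (M-1)(1-\lambda) > (1-\lambda)M$, as required. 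The main obstacle is verifying the rejection cascade: at every link one must confirm both that the displaced man's next acceptable woman is $w$ (so a proposal actually reaches her) and that the intermediate woman ranks the new proposer above her current partner (so the cascade propagates). The chain construction above is tailored so that both conditions hold automatically at every level.
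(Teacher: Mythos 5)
Your proof is correct and takes essentially the same approach as the paper: engineer an instance in which the extended deferred acceptance procedure delivers to $w$ exactly one proposal from each man in the order $m_M, m_{M-1}, \dots, m_1$, then sum the best-so-far probabilities $\frac{1-\lambda}{1-\lambda^{i+1}} > 1-\lambda$. The only difference is the instance itself — you use a chain with incomplete preferences, while the paper uses the folklore cyclic instance with complete preferences in which every pair is stable — but both produce the identical proposal sequence and the identical final computation.
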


\begin{proof}
Take a community with $N$ men and $N$ women. We adapt a folklore instance where each man-woman pair is stable. We replace the preference list of woman $w_1$ by a complete popularity preference list defined by $\D_w: m_i \mapsto \lambda^i$ with $0 < \lambda < 1$, which tends to be to similar with the original preference list $m_1 \succ m_2 \succ \dots \succ m_N$.
\[
\begin{array}{cc|ccccccccccc}
    m_1 && w_2 &\succ_{m_1}& w_3 &\succ_{m_1}& \dots &\succ_{m_1} & w_{N-1} & \succ_{m_1} & w_N &\succ_{m_1}& w_1 \\
    m_2 && w_3 &\succ_{m_2}& \dots &\succ_{m_2}&  w_{N-1} & \succ_{m_2} & w_N &\succ_{m_2}& w_1 &\succ_{m_2}& w_2 \\
    \vdots && \vdots & & & & & & & & & & \vdots \\
    m_{N-1} && w_N &\succ_{m_{N-1}}& w_1& \succ_{m_{N-1}}& w_2 &\succ_{m_{N-1}}& w_3 &\succ_{m_{N-1}}& \dots & \succ_{m_{N-1}} & w_{N-1}\\
    m_N && w_1 &\succ_{m_N}& w_2 &\succ_{m_N}& w_3 &\succ_{m_N}& \dots & \succ_{m_N} & w_{N-1} & \succ_{m_N}& w_N \\
    \\
    w_1 && \multicolumn{11}{l}{\fbox{complete popularity preferences $\D_w: m_i \mapsto \lambda^i$, for some  $0 < \lambda < 1$.}} \\
    w_2 && m_2 &\succ_{w_2}& \dots &\succ_{w_2}&  m_{N-2} & \succ_{w_2} & m_{N-1} &\succ_{w_2}& m_N &\succ_{w_2}& m_1 \\
    \vdots && \vdots & & & & & & & & & & \vdots \\
    w_{N-1} && m_{N-1} &\succ_{w_{N-1}}& m_N& \succ_{w_{N-1}}& m_1 &\succ_{w_{N-1}}& m_2 &\succ_{w_{N-1}}& \dots & \succ_{m_{N-1}} & m_{N-2} \\
    w_N && m_N &\succ_{w_N}& m_1 &\succ_{w_N}& m_2 &\succ_{w_N}& \dots & \succ_{w_N} & m_{N-2} & \succ_{w_N}& m_{N-1}
\end{array}
\]
\smallbreak
During the execution of Algorithm~\ref{algo:MPDA}, each man proposes to his favorite woman, and each woman receives exactly one proposition. Thus in the man-optimal stable matching, we have $\mu_\M(w_1) = m_N$, $\mu_\M(w_2) = m_1$, \dots, $\mu_\M(w_N) = m_{N-1}$.
\medbreak
Then, during the execution of Algorithm~\ref{algo:MPDAextended}, woman $w^* = w_1$ will receive propositions from $m_{N-1}$, $\dots$, $m_2$, $m_1$, exactly in that order. The proposal from man $m_i$ will be the ``best so far'' with probability $\lambda^i / \sum_{j=i}^N \lambda^j$. Hence, $w^*$'s expected number of stable husbands is exactly
\[
\E[\text{Number of stable husbands of }w] = \sum_{i=1}^N \frac{\lambda^i}{\sum_{j=i}^N \lambda^j} = \sum_{i=1}^N \frac{1-\lambda}{1-\lambda^{N-i+1}} > (1-\lambda) \cdot N
\]

%    w_1 & \fbox{popularity preferences}\\
%    w_3 &\succ_{m_2}& \dots &\succ_{m_2}& w_N &\succ_{m_2}& w_1 &\succ_{m_2}& w_2 & m_2 &
%    w_2 & m_2 \succ_{w_2} m_3 \succ_{w_2} \dots \succ_{w_2} m_N \succ_{w_2} m_1 \\
%\vspace{-1cm}~
\end{proof}

%\newpage
\section{Telescopic sums: Proof of Corollary \ref{thm:telescopic}}
\label{section:telescopic}
In this section, we repeatedly use Theorem~\ref{thm:popularity} to prove Corollary~\ref{thm:telescopic}. In both settings \textit{(1)} and \textit{(2)}, we express the number of stable pairs as a sum of number of stable partners, in which some terms cancel out nicely.

\thmtelescopic*

\medbreak
In subsection~\ref{section:idpop} and~\ref{section:sympop} we prove Theorems~\ref{thm:idpop} and~\ref{thm:sympop} which are stronger versions of settings \textit{(1)} and~\textit{(2)}, where we relax popularity constraints.

\subsection{Intrinsic popularities}
\label{section:idpop}

In this subsection, all women have popularity preferences. We say that men have \emph{intrinsic} popularities when all women agree on the popularity of each man they find acceptable. To measure the extent to which women agree on the popularity of men, we introduce a parameter $r_m \geq 1$ which is the ratio between the highest and the lowest popularity given to man $m$ by some woman who finds him acceptable. Lower values of $r_m$'s mean more correlations between the preferences of women. When $r_m = 1$, all women agree on the intrinsic popularity of man $m$.

\begin{theorem}
\label{thm:idpop}
Assume that each woman $w$ has popularity preferences defined by $\D_w$ over a set of $d_w \geq 1$ acceptable men. For each man $m$, we define the ratio $r_m$ between the highest and the lowest popularity given to $m$ by some woman who finds him acceptable. Then:
\[\E[\text{Number of stable pairs}] \leq N + \sum_{w \in \W} \ln d_w + \sum_{m \in \M} \ln r_m\]
When all women agree on the popularity of men, the number of stable pairs is at most $N + N \ln N$.
\end{theorem}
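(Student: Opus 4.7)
The plan is to apply Theorem~\ref{thm:popularity} to every woman with at least one stable husband, sum, and then observe that the log-ratio error terms telescope over the set of stably matched men, where each contribution is controlled by the ratio $r_m$.

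First I would invoke Theorem~\ref{thm:rural} (rural hospitals) so that the sets $\W'\subseteq\W$ of women matched in every stable matching and $\M'\subseteq\M$ of men matched in every stable matching are well defined, and the number of stable pairs equals the sum over $w\in\W'$ of the number of stable husbands of $w$. Applying Theorem~\ref{thm:popularity} to each $w\in\W'$ (which has $d_w\geq 1$) and using $|\W'|\leq N$ together with $\ln d_w\geq 0$, I get
\[
\E[\text{nb of stable pairs}]\;\leq\; N \;+\; \sum_{w\in\W}\ln d_w \;+\; \E\Bigl[\,\sum_{w\in\W'}\ln \D_w(\mu_\W(w))\;-\;\sum_{w\in\W'}\ln \D_w(\mu_\M(w))\Bigr].
\]

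Next I would reindex the inner sums on the man-side. For any stable matching $\mu$, the assignment $w\mapsto\mu(w)$ is a bijection from $\W'$ onto $\M'$, so
\[
\sum_{w\in\W'}\ln\D_w(\mu(w))\;=\;\sum_{m\in\M'}\ln\D_{\mu^{-1}(m)}(m).
\]
Applied to $\mu_\W$ and $\mu_\M$, the difference inside the expectation becomes $\sum_{m\in\M'}\bigl(\ln\D_{\mu_\W^{-1}(m)}(m)-\ln\D_{\mu_\M^{-1}(m)}(m)\bigr)$. Both women $\mu_\W^{-1}(m)$ and $\mu_\M^{-1}(m)$ find $m$ acceptable (they are married to him), so by the very definition of $r_m$ each bracket is at most $\ln r_m$. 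Since $\ln r_m\geq 0$, I can extend the sum to all $m\in\M$, yielding the bound $\sum_{m\in\M}\ln r_m$ and completing the proof.

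The main obstacle I anticipate is the bookkeeping around single persons: Theorem~\ref{thm:popularity} can only be applied once we know $w$ has at least one stable husband, and the bijection used to reindex only works between $\W'$ and $\M'$, both of which exist thanks to Theorem~\ref{thm:rural}. The last assertion of the theorem, that the bound is at most $N+N\ln N$ when all women agree on popularities, is then immediate: $r_m=1$ gives $\sum_m\ln r_m=0$, and $d_w\leq N$ gives $\sum_w\ln d_w\leq N\ln N$.
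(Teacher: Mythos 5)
Your proposal is correct and follows essentially the same route as the paper's proof in Appendix~\ref{section:telescopic}: apply Theorem~\ref{thm:popularity} to each stably matched woman, reindex the log-popularity sums over the set of stably matched men via the rural-hospitals bijection, and bound each man's telescoped contribution by $\ln r_m$. The paper phrases the single-person bookkeeping with indicator events $X_w$ and conditional expectations rather than the random sets $\W'$, $\M'$, and additionally records why Theorem~\ref{thm:popularity} still applies when the other women's preferences are random (a convex combination of deterministic cases) and why conditioning on $X_w$ is harmless (the event does not depend on $w$'s own list), but these are presentational differences rather than a different argument.
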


\begin{proof}
Theorem~\ref{thm:popularity} is valid for each woman $w\in\W$. Indeed, the case where all the other women have popularity preferences is actually a linear combination of cases where those women have deterministic preferences. However, one needs to deal with the assumption that $w$ receives at least one acceptable proposal during Algorithm~\ref{algo:MPDA}.
\medbreak
Each person is either matched in all stable matchings or single in all stable matchings. For each person $p$, define $X_p$ the event where $p$ is matched. For every woman $w$, event $X_w$ is true if and only if $w$ is receives at least one acceptable proposal during Algorithm~\ref{algo:MPDA}, which does not depend on the preference list of $w$. If we write $Y_w$ the number of stable husbands of $w$, we have:
\[\forall w\in \W,\quad\E[Y_w\;|\;X_w] \leq 1 + \ln d_w + \E\left[\ln\frac{\D_w(\mu_\W(w))}{\D_w(\mu_\M(w))}\;|\;X_w\right] \]
We write $Y = \sum_{w \in \W} Y_w$ the total number of stable pairs. Using linearity of expectation we obtain: 
\begin{align}
\E[Y] = \sum_{w\in \W} \E[Y_w\;|\;X_w]\cdot\P[X_w] &\leq \sum_{w\in\W}(1 + \ln d_w)\cdot \P[X_w] \nonumber \\
&+ \sum_{w\in\W} \E[\ln(\D_w(\mu_\W(w)))\;|\;X_w]\cdot\P[X_w] \label{thm:idpop:1}\\
&- \sum_{w\in\W} \E[\ln(\D_w(\mu_\M(w)))\;|\;X_w]\cdot \P[X_w] \label{thm:idpop:2}
\end{align}
The two sums (\ref{thm:idpop:1}) and (\ref{thm:idpop:2}) can be rewritten as sums over men: for every stable matching $\mu$ we have
\[\sum_{w\in\W} \E[\ln(\D_w(\mu(w)))\;|\;X_w]\cdot\P[X_w] = 
\sum_{m\in\M} \E[\ln(\D_{\mu(m)}(m))\;|\;X_m]\cdot\P[X_m]\]
 Using once again linearity of expectation:
\[\E[Y] \leq \sum_{w\in\W}(1 + \ln d_w)\cdot \P[X_w] + \sum_{m\in\M} \E\bigg[\ln \frac{\D_{\mu_\W(m)}(m)}{\D_{\mu_\M(m)}(m)}\;|\;X_m\bigg]\cdot \P[X_m]\]
To conclude the proof, observe that the sum $\sum_w \P[X_w]$ is at most $N=\min(M,W)$, and that the ratio $\D_{\mu_\W(m)}(m)/\D_{\mu_\M(m)}(m)$ is at most $r_m$.
%, and that in each matching at most $N = \min(M,W)$ women are matched.
%  Using the union-bound and Lemma \ref{thm:onewoman}, we can bound the probability that $Y$ exceed its expected value.
%  \[\Big(\forall w \in \W,\quad Y_w < 2 \E[Y_w] + \delta \ln N\Big)
%  \quad\Rightarrow\quad
%  \Big(Y < 2 \mathbb E[Y] + \delta N \ln N\Big)\]
%  \[\P[Y \geq 2 \mathbb E[Y] + \delta N \ln N] \leq \sum_{w \in \W} \P[Y_{w^*} \geq 2 \mathbb E[Y_{w^*}] + \delta \ln N] \leq N^{1-\delta/3}\]
\end{proof}

%\begin{remark}
%When men have arbitrary preference lists of length $K$, Theorem~\ref{thm:popularity} gives an upper bound of $N + N \ln K$ on the expected number of stable pairs: let $K_w$ be the number of men who rank woman $w$, by concavity of the logarithm, and using the fact that $\sum_{w\in \W} K_w = K\cdot N$, we have $\sum_{w \in W} \ln K_w \leq N \ln K$.
%\end{remark}

%\subsection{Women have independent but non-identical popularity preferences}
%\label{subsection:extension1}

%When women have i.i.d. popularity preferences, each man has ex-ante the same value in the eyes of every woman.
%One might want to relax this property, by allowing the distribution $\mathcal D$ to vary from one woman to another.

%\begin{theorem}
%\label{cor:popularitydiff}
%Assume that the men have arbitrary preferences lists and the women have independent popularity preferences: each woman has her own distribution $\mathcal D_w$. For each man $m$, let us define $C_m = \frac{\max_{w \in \W} \D_w(m)}{\min_{w\in\W} \mathcal D_w(m)}$. Then the expected total number of stable pairs is at most $N + N\ln N +  \sum_{m\in M} \ln C_m$.
%\end{theorem}
%\begin{proof}
%In the proof of Theorem~\ref{thm:popularity}, we bound the last two terms: for any $\mu_\M$ and $\mu_\W$ we have
%\[\sum_{w \in \W} \Big(\ln(\D(\mu_\W(w))) - \ln(\D(\mu_\M(w)))\Big)
%\leq \sum_{m \in \M} \Big(\max_{w \in \W} \mathcal D_w(m)- \min_{w \in \W} \mathcal D_w(m)\Big)
%\leq \sum_{m \in \M} \ln C_m\]
%\end{proof}

As long as the ratios $r_m$ are polynomial in $N$, the expected number of stable pairs is $O(N \ln N)$.

\subsection{Symmetric popularities}
\label{section:sympop}

In this subsection, both men and women have popularity preferences. We say that popularities are \emph{symmetric} when for every acceptable pair $(m,w)$ we have $\D_w(m) = \D_m(w)$. Symmetric popularities model ``cross-sided'' correlations, for example when men and women prefer partners with whom they share some centers of interest. Interestingly, symmetric popularities can also encompass ``one-sided'' correlations: defining $\D_m(w) = \D_w(m) = \D(m) \cdot \D(w)$ is equivalent (because of the renormalization of popularities) with men and women having an intrinsic popularity specified by $\D$.
\medbreak
To measure the extent to which popularities are symmetric, we introduce a parameter $r \geq 1$, such that for each acceptable pair $(m,w)$ the values of $\D_w(m)$ and $\D_m(w)$ are within a factor $r$ of each other. 

\begin{theorem}
\label{thm:sympop}
Let $r \geq 1$ be a parameter. Assume that each person $p$ has popularity preferences defined by $\D_p$ over a set of $d_p \geq 1$ acceptable partners, such that if man $m$ and woman $w$ are mutually acceptable then the values of $\D_w(m)$ and $\D_m(w)$ are within a factor $r$ of each other. Then:
\[\E[\text{Number of stable pairs}] \leq N(1+\ln r) +\sum_{w\in \W}\frac{\ln d_w}{2} +\sum_{m\in \M} \frac{\ln d_m}{2} \]
When $r=1$, popularities are symmetric and the number of stable pairs is at most $N+N\ln N$.
\end{theorem}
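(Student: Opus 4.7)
The plan is to mirror the proof of Theorem~\ref{thm:idpop} but now to invoke Theorem~\ref{thm:popularity} twice: once for each woman, and once for each man (the roles are symmetric since both sides have popularity preferences). Since every stable pair is counted exactly once from each side, summing the two applications will produce $2\,\E[Y]$, where $Y$ denotes the total number of stable pairs. The correction terms, which in the intrinsic-popularity case of Theorem~\ref{thm:idpop} canceled because $\sum_w \ln \D_w(\mu(w))$ rewrote identically as a sum over men, will now cancel only up to a controlled error of order $N\ln r$.

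Concretely, for each woman $w$ let $X_w$ denote the event that $w$ receives an acceptable proposal during Algorithm~\ref{algo:MPDA}; this event is independent of $w$'s own random ranking. Applying Theorem~\ref{thm:popularity} to each $w$ and summing as in the proof of Theorem~\ref{thm:idpop} yields
\[
\E[Y] \;\leq\; \sum_{w\in\W} (1+\ln d_w)\,\P[X_w] \;+\; \E\!\left[\sum_{w\in\W}\ln \D_w(\mu_\W(w)) \;-\; \sum_{w\in\W}\ln \D_w(\mu_\M(w))\right].
\]
By symmetry, applying Theorem~\ref{thm:popularity} to each man $m$ (in which case $\mu_\M(m)$ is his best and $\mu_\W(m)$ his worst stable partner) gives
\[
\E[Y] \;\leq\; \sum_{m\in\M} (1+\ln d_m)\,\P[X_m] \;+\; \E\!\left[\sum_{m\in\M}\ln \D_m(\mu_\M(m)) \;-\; \sum_{m\in\M}\ln \D_m(\mu_\W(m))\right].
\]

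The key step is to control the sum of the two correction terms. For any matching $\mu$ with $k\leq N$ acceptable pairs, the hypothesis that $\D_w(m)$ and $\D_m(w)$ are within a factor $r$ of each other on every acceptable pair gives
\[
\left|\sum_{w\in\W}\ln \D_w(\mu(w)) \;-\; \sum_{m\in\M}\ln \D_m(\mu(m))\right| \;=\; \left|\sum_{(m,w)\in\mu}\ln\frac{\D_w(m)}{\D_m(w)}\right| \;\leq\; k\ln r \;\leq\; N\ln r.
\]
Applying this bound to $\mu=\mu_\W$ and $\mu=\mu_\M$ individually, the sum of the two correction terms above is at most $2N\ln r$. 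Combined with $\sum_w\P[X_w]=\sum_m\P[X_m]\leq N$, summing the two bounds on $\E[Y]$ yields
\[
2\,\E[Y] \;\leq\; 2N + 2N\ln r + \sum_{w\in\W}\ln d_w + \sum_{m\in\M}\ln d_m,
\]
which after dividing by $2$ is exactly the claimed inequality.

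The main obstacle is the same bookkeeping already handled in Theorem~\ref{thm:idpop}: one must restrict the sums rewriting $\sum_w \ln \D_w(\mu(w))$ as a sum over men to the (random) set of matched persons, and accurately track the conditioning on $X_w$ and $X_m$ to apply Theorem~\ref{thm:popularity}. Since a person is single in all stable matchings iff they are unmatched in $\mu_\M$, and events $X_w,X_m$ depend only on the others' preferences, the rewriting goes through identically to the intrinsic-popularity case; the only new idea beyond Theorem~\ref{thm:idpop} is the pair-by-pair $\pm\ln r$ slack in the symmetry of the two sums, which propagates to the $N\ln r$ additive term.
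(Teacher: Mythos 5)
Your proposal is correct and follows essentially the same route as the paper's proof: apply Theorem~\ref{thm:popularity} once per woman and once per man, sum the two bounds, and observe that the correction terms cancel up to a per-pair slack of $\ln r$, giving a total of $2N\ln r$ before dividing by $2$. The only difference is cosmetic grouping of the error terms (you bound each matching's discrepancy by $N\ln r$, while the paper bounds each man's two popularity ratios by $r$), and you correctly flag the $X_w$/$X_m$ conditioning bookkeeping that the paper handles via Theorem~\ref{thm:idpop}.
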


\begin{proof}\setcounter{equation}{0}
For each person $p$, define $X_p$ the event where $p$ is matched.
We write $Y$ the total number of stable pairs.
Using the fact that women have popularity preferences, we start with the same proof as Theorem~\ref{thm:idpop}.
\begin{equation}\label{thm:sympop:1}
\E[Y] \leq \sum_{w\in\W}(1 + \ln d_w)\cdot \P[X_w] + \sum_{m\in\M} \E\bigg[\ln \frac{\D_{\mu_\W(m)}(m)}{\D_{\mu_\M(m)}(m)}\;|\;X_m\bigg]\cdot \P[X_m]
\end{equation}
Symmetrically, we can use the fact that men have popularity preferences, and use a symmetric version of Theorem~\ref{thm:popularity} to bound the expected number of stable wife of each man.
\begin{equation}\label{thm:sympop:2}
\E[Y] \leq \sum_{m\in\M}\left(1 + \ln d_m +\E\bigg[\ln \frac{\D_m(\mu_\M(m))}{\D_m(\mu_\W(m))}\;|\;X_m\bigg]\right)\cdot \P[X_m]
\end{equation}
Summing equations (\ref{thm:sympop:1}) and (\ref{thm:sympop:2}) yields
\begin{align*}
2\E[Y] &\leq \sum_{w\in\W}(1+\ln d_w)\cdot\P[X_w]+\sum_{m\in\W}(1+\ln d_m)\cdot\P[X_m]\\
&+\sum_{m\in\M} \E\left[\ln\frac{\D_{\mu_\W(m)}(m)}{\D_m(\mu_\W(m))}+\ln\frac{\D_m(\mu_\M(m))}{\D_{\mu_\M(m)}(m)} \;|\;X_m\right]\cdot \P[X_m]
\end{align*}
To conclude the proof, observe that the sums $\sum_w \P[X_w]$ and $\sum_m \P[X_m]$ are at most $N=\min(M,W)$, and that all ratios $\D_w(m)/\D_m(w)$ and  $\D_m(w)/\D_w(m)$ can be bounded by $r$.
\end{proof}

%\newpage
\section{Proof of Theorem \ref{theorem:womanbounded}}
\label{section:bounded_proofs}

%%%%%%%%%%%%%%%%%%%%%%%%%%%%%%%%%%%%%%%%%%%%%%%%%%%%%%%%%%%%%%%%%%%%%%%%%%%%%%%%%%%%%%%%%%%%%
%%%%%%%%%%%%%%%%%%%%%%%%%%%%%%%%%%%%%%%%%%%%%%%%%%%%%%%%%%%%%%%%%%%%%%%%%%%%%%%%%%%%%%%%%%%%
%%%%%%%%%%%%%%%%%%%%%%%%%%%%%%%%%%%%%%%%%%%%%%%%%%%%%%%%%%%%%%%%%%%%%%%%%%%%%%%%%%%%%%%%%%%%

Theorem~\ref{theorem:womanbounded} provides a bound on the maximal ratio of popularity of two stable husbands of a woman.
The bound depends on how uniform the preferences of men are (parameter $R_\M$)
and how similar the preferences of women are (parameter $Q_\W$).

\womanboundedtheorem*

%%%%%%%%%%%%%%%%%%%%%%%%%%%%%%%%%%%%%%
%%%%%%%%%%%%%%%%%%%%%%%%%%%%%%%%%%%%%%
%%%%%%%%%%%%%%%%%%%%%%%%%%%%%%%%%%%%%%

\begin{proof}[Proof of Theorem~\ref{theorem:womanbounded}]
Fix a woman $w$ and denote $\mu_\M$ be the man-optimal stable matching.
In the sequel, we reference as "$w$-popularities" the popularities from the point of view of $w$ given by  $\D_w$.
We also fix the preferences of all women except $w$.

\smallskip

The proof relies on the computation of the set of stable husbands of $w$ by Algorithm~\ref{algo:MPDAextended}. 

The algorithm is presented as running deterministically, with all preferences of men and women given as input, chosen \emph{ex-ante} before the algorithm starts;
however in the case of popularity preferences (Definition~\ref{definition:popularity}) this algorithm can also be seen as a stochastic process.
%Preference lists based on popularities are generated by the stochastic process illustrated on Figure~\ref{fig:popularity}.

In phase 1 of Algorithm~\ref{algo:MPDAextended},
instead of computing the lists of men \emph{ex-ante},
we might instead disclose them progressively along the execution of the algorithm. 
When a man is about to propose, he randomly picks a woman among those to whom he has not proposed yet, following a lottery whose probabilities are proportional to the popularities of the remaining women.
When a woman $\neq w$ receives a proposal, her answer is deterministic, consistent with her preferences fixed ex-ante. When $w$ receives a proposal, she refuses it.

In phase 2 of Algorithm~\ref{algo:MPDAextended},
woman $w$ parses the list $s$ of proposals she received.
She accept a proposal from a man of 
popularity $p$ with probability $\frac{p}{p + p_\bot}$, where $p_\bot$ denotes the sum of the popularities of men from whom she has received a proposal so far.

This transition probabilities define a stochastic process on the possible states of 
Algorithm~\ref{algo:MPDAextended}. 
By definition of popularity preferences, a particular execution of the algorithm has the same probability to occur in this stochastic process as 
in the case where the input data is generated \emph{ex-ante}. 

In this stochastic process,
we focus on the sequence  $H_0,H_1,...$ of men that are enumerated by Algorithm~\ref{algo:MPDAextended}, after the initial computation of the man-proposing stable matching $\mu_\M$. The sequence includes all men doing a proposal, including proposals to $w$ but also to other women. A man appears in the list as many times as he makes a proposal.
The set of stable husbands of $w$ is exactly the set of men from whom $w$ accepts such a proposal.

\smallskip

We define the notion of \emph{standard preferences} for women $\neq w$.
Set 
\[
T = |N|^5\enspace.
\]
Let $h,h'$ be two men and $w'$ a woman
such that $h'$ is $T$ times more popular than $h$ but, still, $w'$ prefers $h$ to $h'$.
     We say that the preferences of women $\neq w$ are \emph{standard} if no such tuple $h,h',w'$ exists.
Given $h,h'$, there is probability  $\frac{1}{1+T}$ that $w'$ prefers $h$ to $h'$, thus there is probability  $\leq \frac{1}{N^2}$ that preferences of women are not standard.

  Since $w$ has at most $N$ stable husbands, the contribution of non-standard events to the expected number of husbands of $w$ is at most $\frac{1}{N}$.

\smallskip

Without loss of generality, we assume that men $m_1,m_2,\ldots,m_N$
are indexed by increasing $w$-popularity:
\[
\D_w(m_1) \leq \D_w(m_2) \leq \ldots \leq \D_w(m_N)\enspace.
\]

%We say that there is a huge $w$-popularity gap from a man $h$ to a man $h'$
%if $h'$ is at least $K$ times more popular than $h$.

Let $i$ be the index of the husband of $w$ in the man-proposing stable matching
\[
\mu_\M(w) = m_{i}\enspace.
\]

We partition the set of men in sets of exponentially increasing sizes,
starting with all men less or equally popular than $m$.
Let $F_0=[1,m_i]$ ,$F_1=(m_i,2*m_i]$,
$F_2=(2*m_i,4*m_i]$,..., $F_j=(2^j*i,N]$,
where $2^j*i < N \leq 2^{j+1}*i$.
Note that
$
j \leq \ln_2\left(N\right)\enspace.
$

Set 
\begin{align*}
&L=\frac{4\ln(N)}{\ln\left(1 + \frac{1}{R_\M} \right)}
%& \text{~~~and~~~}
&K=(T\cdot Q_\W)^L\enspace.
\end{align*}

For every set $F_\ell,\ell\in 1\ldots k$,
we say that \emph{there is a huge popularity gap} in  $F_\ell$ if the popularity ratio for $w$ in this interval is $\geq K$, i.e. if $\D_w(m_{2^\ell*m_i}) \leq K \D_w(m_{\min(N,2^{\ell+1}*m_i})$.

The case where there is no huge popularity gap in any of the $F_\ell,\ell\in 1\ldots k$   is an easy one:
then the most $w$-popular man $m_N$ has $w$-popularity at most $K^{j+1}
 \D_w(m_{i}) $
hence the conclusion of the theorem 
since $
j \leq \ln_2\left(N\right)
$.

Otherwise we select the smallest $\ell$ for which there is huge popularity gap in $F_\ell$.
Denote $E_0=F_0\cup F_1\cup\ldots F_{\ell-1}$
and $E_1=F_\ell$.
We show the following property:
\begin{itemize}
    \item[($\spadesuit$)]
    Fix the preferences of women $\neq w$ and assume these preferences are standard.
Then with probability $\geq 1 - \frac{1}{N^2}$, the $w$-popularities of men enumerated by Algorithm~\ref{algo:MPDAextended} are less or equal to $K$ times the $w$-popularity of
the most $w$-popular man in $E_0$.
\end{itemize}

\smallskip

Before turning to the proof of $(\spadesuit)$,
we show how to use it in order to complete the proof of the theorem.
Assume property ($\spadesuit$) holds.
Women $\neq w$ have standard preferences
with probability  $\geq 1 - \frac{1}{N^2}$.
By union bound, with probability  $\geq 1 - \frac{2}{N^2}$ the conclusion of $(\spadesuit)$ holds.
By minimality of $\ell$,
all men in $E_0$
have popularity
$\leq K^{ \ell-1}\D_w(m_{i})$.
Thus,
since $\ell \leq j\leq \ln_2(N)$
all men enumerated by Algorithm~\ref{algo:MPDAextended}
have popularity
$\leq K^{ \ln_2\left(N\right)}\D_w(m_{i})$.
Then with probability 
$\geq (1 - \frac{2}{N^2})$ the popularities of men proposing during the execution of Algorithm~\ref{algo:MPDAextended} are bounded
by 
\begin{equation}\label{eq:rrhugo}
K^{1 + \ln_2\left( N
%\frac{N}{N_w}
\right)}
\enspace.
\end{equation}
By hypothesis about standard preferences,
$w$ for sure cannot have any stable husband $T$ times less popular than $m_i$,
hence the theorem.

\smallskip

To prove $(\spadesuit)$,
we assume the preferences of women $\neq w$ are standard
and bound the $w$-popularities of men in the sequence $H_0,H_1,...$.
We rely on the following properties of this sequence.
Denote $E_1'\subseteq E_1$ the set of men in $E_1$
whose $w$-popularity is above $T\cdot Q_\W$ times the least popular men in $E_1$ and below $T\cdot Q_\W$ times the most popular man in $E_1$.

\begin{enumerate}
\item[i)] the $w$-popularities of two consecutive men in the sequence $H_0,H_1,...$ may increase by a multiplicative factor of at most $T\cdot Q_\W$.
\item[ii)]
Assume $H_i\in E_0$.
Then $H_{i+1}\in E - E_1'$.
\item[iii)]
Assume that in the current execution of the algorithm, the proposing man is $H_i \in E_1'$. The probability that  $H_{i+1}$ belongs to $E_0$ conditioned on the current execution is 
$\geq \frac{1}{R_\M+1}$.
\item[iv)]
The probability that there exists $i$ such that the sequence 
$H_i,H_{i+1},...$ stays for more than $L$ consecutive steps in $E_1'$ is $\leq \frac{1}{N^2}$. \end{enumerate}
%To prove the third point we note t
Properties i) - iii) follow from the hypothesis that women $\neq w$ have standard preferences: for a woman $w'\neq w$ married to $H_{i}$ to accept a proposal from $H_{i+1}$ it is necessary (but not sufficient in general) that 
$\frac{\D_{w'}(H_i)}{\D_{w'}(H_{i+1})} < T$.
In which case
$\frac{\D_{w}(H_i)}{\D_{w}(H_{i+1})} < T\cdot Q_\W$\enspace.
And if $w'=w$ then $w$ will anyway refuse the proposal from $H_{i+1}$ thus $H_{i+2}=H_{i+1}$.
Hence i) and ii). To prove iii), remark that whenever some man $H_i$ from $E'_1$ proposes to the wife of a man $h$, the wife will refuse (resp. accept) for sure if $h$ is outside $E_0\cup E_1$ (resp. is in $E_0$), because in that case $h$ is at least $T\cdot Q_\W$ times more $w$-popular (resp. less $w$-popular) than $H_i$. Since $E_0$ contains at least half of the men in $E_0\cup E_1$ and since women married to men in $E_0$ are at most $R_\M$ times less popular for $H_i$ than women married to men  in $E_1$, we get iii).
To prove iv), applying iii) repeatedly
shows that for every $i$ the stochastic process $H_i,H_{i+1},\ldots$ may stay in $E_1'$ for more than $L$ consecutive steps  with probability $\leq \left(1 - \frac{1}{R_\M+1}\right)^L
= \exp\left(-\ln(1 + \frac{1}{R_\M})\cdot L\right)= \frac{1}{|N|^4}$. Since the length of the sequence $H_0,H_1,\ldots$ is bounded by $N^2$, we get iv) by union bound.

To sum up, assuming women $\neq w$ have standard preferences, there is probability $\geq 1 - \frac{1}{|N|^2}$
to see at most $L-1$ consecutive steps in $E_1'$.
In this case,
according to properties i) and ii),
all men enumerated by Algorithm~\ref{algo:MPDAextended}
have popularities
at most 
$(T\cdot Q_\W)^{L-1}$
times the smallest $w$-popularity in $E_1'$, 
thus at most 
$(T\cdot Q_\W)^{L}$
times the largest $w$-popularity in $E_0$.
Since $K=(T\cdot Q_\W)^{L}$,
this proves ($\spadesuit$).
\end{proof}

%%%%%%%%%%%%%%%%%%%%%%%%%%%%%%%%%%%%%%%%%%%%%%%%%%%%%%%%%%%%%%%%%%%%%%%%%%
%%%%%%%%%%%%%%%%%%%%%%%%%%%%%%%%%%%%%%%%%%%%%%%%%%%%%%%%%%%%%%%%%%%%%%%%%%
%%%%%%%%%%%%%%%%%%%%%%%%%%%%%%%%%%%%%%%%%%%%%%%%%%%%%%%%%%%%%%%%%%%%%%%%%%
%%%%%%%%%%%%%%%%%%%%%%%%%%%%%%%%%%%%%%%%%%%%%%%%%%%%%%%%%%%%%%%%%%%%%%%%%%
%%%%%%%%%%%%%%%%%%%%%%%%%%%%%%%%%%%%%%%%%%%%%%%%%%%%%%%%%%%%%%%%%%%%%%%%%%
%%%%%%%%%%%%%%%%%%%%%%%%%%%%%%%%%%%%%%%%%%%%%%%%%%%%%%%%%%%%%%%%%%%%%%%%%%

The proof of Corollary~\ref{cor:womanbounded} relies on the following lemma.

\womanboundedlemma*

\begin{proof}
In the last phase of Algorithm~\ref{algo:MPDAextended},
$w$ enumerates the list of proposal
she has received.
We can remove from the input data of Algorithm~\ref{algo:MPDAextended}
the preferences of $w$ among men she does not have received a proposal yet.
Then the last phase of Algorithm~\ref{algo:MPDAextended}
can be turned to a stochastic process
where $w$ progressively reveals her preferences.
 Lemma~\ref{lemma:popularity} specifies how the transition probabilities of this process
 should be defined in order to guarantee than
 an execution of the algorithm has the same probability in the two versions (ex-ante vs online revelation of the preferences of $w$ on $L$).
 
According to Lemma~\ref{lemma:popularity},
the probability that $w$ accepts a proposal
from a man $m$ of the list depends on:
the popularity $p$ of $m$; and
the sum $p_\bot$  of the popularities of men from whom she has received a proposal during the initial execution of Algorithm~\ref{algo:MPDA} or which appear 
before $m$ in the list $L_w$.
Then $w$ accepts the proposal of $m$ with probability $\frac{p}{p + p_\bot}$.

%Property 1) follows form an elementary observation:
%the more propositions a woman has already received, the less she is eager to accept a new one.
%Thus, given a man $h_i$
%appearing in $s$,
%there are more chances for $h_i$ to be accepted if
%we remove from $s$ some other men.
%This is true in particular if $h_i$ comes from the substring $s'$ and we remove from $s$ all men from the substring $s''$. By linearity of expectation, we get 1).

%We prove Property 2).
Let $p_0$ be the popularity for $w$ of the initial husband $\mu_\M(w)$ of $w$.
Let $p_\bot$ be the sum of the popularities received by $w$ during the computation of the man-proposing stable matching.
Let $p_1,\ldots,p_{|L_w|}$
be the popularities (for $w$) of the men
in the list $L_w$.
Set 
\[
R = \max_{i \in 1\ldots|L_w|} \frac{p_i}{p_0}\enspace.
\]
Then, using the sum-integral comparison~\eqref{eq:1} in the proof of Theorem~\ref{thm:popularity},
\begin{align*}
\sum_{i=1}^{|L_w|} \P[\hbox{proposal $x_i$ is accepted by $w$}]
&\leq
\ln(p_\bot + p_1 +\dots+p_{|L_w|}) - \ln p_\bot\\
&\leq
\ln(p_\bot + p_1 +\dots+p_{|L_w|}) - \ln p_0\\
&
\leq 
\ln(1 + p_1/p_0 +\dots+p_{|L_w|}/p_0)\\
&\leq
\ln(R|L_w|)
=
\ln(|L_w|)
+\ln(R)
\enspace.
\end{align*}
\end{proof}

\womanboundedcor*

\begin{proof}
By inspection of the proof,
remark that the bound given by Theorem~\ref{theorem:womanbounded}
is actually a bound on the popularities of proposals received by $w$ during the execution of Algorithm~\ref{algo:MPDAextended} (cf~\eqref{eq:rrhugo}).
We conclude with Lemma~\ref{lem:prel}.
\end{proof}

{\color{orange}
%\section{Old introduction}
%\input{intro}

%\newpage
%\section{Uncorrelated preferences maximises the number of stable pairs}
%\section{Counting stable partners: Proof of Theorem~\ref{thm:onewoman}}
%\label{section:deferredacceptance}
%\input{deferredacceptance}%\newpage

%\newpage
% old version:
%\section{Old Geometric}
%\label{section:instancedecomposition}
%\input{instancedecomposition}

%\section{Old Geom-Unif}
%\input{uniformproposals}

%\newpage
%\section{Background (Details)}
%\label{section:model_proofs}
%\input{model_proofs}

%\newpage
%\section{Deferred acceptance (Proofs)}
%\label{section:deferredacceptance_proofs}
%\input{deferredacceptance_proofs}

%\section{Instance decomposition (Proofs)}
%\label{section:instancedecomposition_proofs}
%\input{instancedecomposition_proofs}
}

\end{document}